\RequirePackage{etex}
\documentclass[11pt]{article}
\usepackage[letterpaper, margin=1in]{geometry}
\usepackage{amsmath,amssymb,amsthm}
\usepackage{enumitem}

\newtheorem{theorem}{Theorem}
\newtheorem{lemma}[theorem]{Lemma}
\newtheorem{proposition}[theorem]{Proposition}

\newtheorem{definition}[theorem]{Definition}
\newtheorem{remark}[theorem]{Remark}
\newtheorem{claim}[theorem]{Claim}

\usepackage{booktabs} 
\usepackage[ruled, linesnumbered]{algorithm2e} 

\SetAlFnt{\small}
\SetAlCapFnt{\small}
\SetAlCapNameFnt{\small}
\SetAlCapHSkip{0pt}
\IncMargin{-\parindent}

\usepackage{tikz-network}
\usetikzlibrary{graphs}
\usepackage{subcaption}

\usepackage{xcolor}
\usepackage[colorlinks=true, allcolors=blue]{hyperref}
\usepackage{xspace}

\newcommand{\score}{\mathrm{score}}
\newcommand{\opt}{\ensuremath{\operatorname{OPT}}\xspace}

\newcommand{\alg}{\ensuremath{\operatorname{ALG}}\xspace}
\newcommand{\ranking}{\ensuremath{\operatorname{RANKING}}\xspace}
\newcommand{\greedy}{\ensuremath{\textsc{Greedy}}\xspace}
\newcommand{\omp}{\ensuremath{\operatorname{OMP}}\xspace}
\newcommand{\womp}{\ensuremath{\operatorname{WOMP}}\xspace} 
\newcommand{\dual}{\ensuremath{\operatorname{DUAL}}\xspace}
\newcommand{\lcp}{\ensuremath{\operatorname{LCP}}\xspace}
\usepackage{xparse}
\NewDocumentCommand{\apath}{m m g g g g}{%
  \IfNoValueTF{#3}%
    {[#1~#2]}%
    {\IfNoValueTF{#5}%
       {[#1~#2~#3~#4]}%
       {[#1~#2~#3~#4~#5~#6]}%
    }%
}
\newcommand{\spgreedy}{\textsc{SP-Greedy}\xspace}

\newcommand{\rootof}{\operatorname{root}} 
\newcommand{\act}{\operatorname{act}} 
\newcommand{\wit}{\operatorname{wit}} 
\newcommand{\mate}{\operatorname{mate}}

\title{Deterministic online matching under short augmenting paths and restricted vertex reassignments}

\author{Mariano Llancamán\thanks{Department of Mathematical Engineering, Universidad de Chile.} \and José A. Soto\footnotemark[1] \thanks{Center for Mathematical Modeling CNRS-IRL 2807, Universidad de Chile.}}
\begin{document}
\date{}
\hypersetup{pageanchor=false}
\begin{titlepage}

\maketitle

\begin{abstract}
We study deterministic online bipartite matching with local recourse. Online vertices arrive one by one and reveal edges to a fixed offline set. After each arrival, the algorithm maintains a matching and may update it only by an augmenting path of length $1$ or $3$ starting at the new vertex. In the budgeted model $\omp_3(s,t)$, each offline vertex can be reassigned at most $s$ times and each online vertex at most $t$ times.

Our main result is a deterministic algorithm, Lowest-Cost-Path (\lcp), that achieves the optimal competitive ratio for every choice of budgets. If $s=0$ or $t=0$, the best possible ratio is $1/2$. For every $s\ge 1$ and finite $t$, the optimal ratio is
\[
\gamma_t=\frac{2\cdot 2^t-1}{3\cdot 2^t-1}.
\]
This value equals $3/5$ for $t=1$ and converges to $2/3$ as $t$ increases. For
$t=\infty$, the optimal ratio is exactly $2/3$. In particular, allowing a single reassignment per offline vertex already matches the best guarantee achievable with any larger offline budget. We also prove matching upper bounds for all deterministic algorithms.

Our analysis is primal--dual and uses a structural description of the history graph of \lcp, which explains how exhausted reassignment budgets can block length-$3$ augmentations. We also study offline-weighted variants. In particular, we determine the exact optimal deterministic ratios for two unit-offline-budget cases: $2-\sqrt2$ for $\womp_3(1,1)$, and $(\sqrt5-1)/2$ for $\womp_3(1,\infty)$. Both are strictly below their unweighted counterparts. If weights are placed on online vertices instead, no deterministic algorithm has a positive competitive ratio.
\end{abstract}
        
\vspace{1cm}
\setcounter{tocdepth}{2} 

\end{titlepage}
\hypersetup{pageanchor=true}
\setcounter{page}{1}

\section{Introduction}\label{sec:intro}

Online matching is a central model for decision-making under uncertainty, motivated by marketplaces such as sponsored search and online advertising, where a platform must allocate arriving opportunities to a fixed pool of agents subject to constraints; see, e.g., the survey and monograph by Mehta~\cite{Mehta13} and the AdWords and budgeted-allocation literature~\cite{Mehta13,BuchbinderJN07,GoelM08}.

In \emph{online bipartite matching}, we are given a bipartite graph $G=(L,R,E)$, where the offline side $L$ is known in advance and vertices of the online side $R$ arrive one by one in an adversarial order. Upon the arrival of an online vertex $j\in R$, its incident edges to $L$ are revealed, and the algorithm must immediately decide whether to match $j$ to a currently unmatched neighbor in $L$; in the classical model, decisions are irrevocable. The goal is to maximize the final matching size.

Karp, Vazirani, and Vazirani~\cite{Karp1990} introduced this model and showed that randomization breaks the deterministic barrier: \textsc{Ranking} achieves ratio $1-1/e$.
In contrast, against adversarial arrivals no deterministic algorithm can guarantee ratio better than $1/2$~\cite{Karp1990,BirnbaumM08}.

A natural way to overcome this deterministic barrier is to allow \emph{recourse}: upon each arrival, the algorithm may modify the current matching.
Unrestricted alternating-path recourse has two degrees of freedom: it may use long augmenting paths (changing many matches at once), and it may reassign the same vertex many times. We study a restricted model that bounds both: updates must be augmenting paths of length at most $3$ starting at the new vertex, and each vertex has a budget on how many times it may be reassigned.

\paragraph{Recourse via short augmenting paths.}
We focus on deterministic algorithms that, upon the arrival of a vertex $j\in R$, may increase the current matching only via an alternating path that starts at $j$ and has length at most $3$. Equivalently, the algorithm may either match $j$ directly to a free neighbor $x\in L$, or augment along a $3$-path $\apath{j}{x}{y}{i}$, where $x,i\in L$, $y\in R$, and $xy$ is the currently matched edge; this increases the matching size by $1$.

Operationally, a length-$3$ augmentation changes the partner of the previously arrived online vertex $y$ from $x$ to $i$, and the partner of the offline vertex $x$ from $y$ to $j$.
We call these partner changes \emph{reassignments}, and we denote by $\omp_3$ the model with no limit on reassignments.

Without reassignment limits, $\omp_3$ admits the classical deterministic $2/3$ guarantee: if the final matching has no augmenting path of length at most $3$, then the Hopcroft--Karp lemma~\cite{Hopcroft1973} implies a $2/3$-approximation. This certificate is not automatic in an online process; nevertheless, Shin et al.~\cite{ShinKLA2020} show that repeatedly augmenting along a shortest augmenting path from the new vertex exhausts all length-$1$ and length-$3$ augmentations, even online. A central question we ask is whether the $2/3$ benchmark survives under per-vertex reassignment budgets.

\paragraph{Budgets on vertex reassignments.}

For parameters $s,t\in \mathbb{N}\cup\{\infty\}$ (where $\mathbb{N}$ denotes the nonnegative integers), let $\omp_3(s,t)$ be the variant of $\omp_3$ in which each offline vertex may be reassigned at most $s$ times and each online vertex at most $t$ times. Thus $\omp_3(\infty,\infty)$ coincides with $\omp_3$, while if $s=0$ or $t=0$ then no length-$3$ augmentation is feasible and the model reduces to classical online matching.

These budgets differ from replacement bounds studied in prior work, where the cost is the number of matching edges modified, often measured amortized per arrival (e.g., \cite{BernsteinHR19}).
In particular, replacement bounds do not control how many times a fixed vertex may change partners, whereas our model imposes this restriction directly.

Asymmetric budgets are natural in two-sided platforms, where one side may tolerate fewer reassignments than the other; this motivates allowing different budgets on the two sides.

\paragraph{Deterministic focus.}
We focus on deterministic algorithms under adversarial arrivals. In this setting, classical online matching has a sharp $1/2$ barrier, so any improvement must come from recourse rather than randomization.

This leads to the main question of the paper: how much local recourse is needed to beat the deterministic $1/2$ barrier? More concretely, for a fixed online budget $t$, how much offline reassignment budget is useful? Our results show that even the smallest positive offline budget already suffices.

\subsection{Our results}

We characterize the optimal deterministic competitive ratio of $\omp_3(s,t)$ and give a deterministic algorithm that achieves it.
As a reference point, the optimal deterministic ratio for $\omp_3(\infty,\infty)$ is $2/3$~\cite{ShinKLA2020}. This benchmark follows from the Hopcroft--Karp certificate: if a matching has no augmenting path of length at most $3$, then it is a $2/3$-approximation.

We reprove this benchmark via a primal--dual analysis, which is also the starting point for the budgeted setting, where the short-augmenting-path certificate does not apply directly.

\paragraph{Optimal unweighted ratios.}
For the unweighted problem with per-vertex reassignment budgets, our main result gives a complete characterization:
\begin{itemize}
    \item If $s=0$ or $t=0$, the model reduces to classical deterministic online matching, and the optimal ratio is $1/2$.
    \item If $s\ge 1$ and $t<\infty$, the optimal deterministic ratio is
    $\displaystyle     \gamma_t=\frac{2\cdot 2^t-1}{3\cdot 2^t-1}.$
    \item If $s\ge 1$ and $t=\infty$, the optimal deterministic ratio is $2/3$.
\end{itemize}

The guarantee is achieved by \emph{Lowest-Cost-Path} (\lcp). It takes a direct match whenever one is available. Otherwise, among feasible length-$3$ augmentations $\apath{j}{x}{y}{i}$, it chooses one that minimizes the reassignment count already used by the middle online vertex $y$. The direct-first rule and the minimum-cost choice are both needed; Appendix~\ref{app:bad-examples} gives small instances showing this.

Although the analysis is carried out for $s=1$, Lemma~\ref{lemma:lcp-spgreedy} shows that \lcp never reassigns an offline vertex more than once. Hence the same guarantee holds for every $s\ge1$.

For finite $t$, the analysis does not follow from the Hopcroft--Karp short-augmenting-path certificate. The final matching produced by \lcp may contain length-$3$ augmenting paths, but these paths can be infeasible because some reassignment budgets are exhausted.

Our analysis is primal--dual. We construct dual variables for the matching LP and show that, after scaling, they are feasible. The key ingredient is the \emph{middle-edge lemma} (Lemma~\ref{lemma:middle}): every remaining length-$3$ augmenting path has a frozen middle edge. This local obstruction replaces the missing short-augmenting-path certificate and determines the dual construction.

\paragraph{Upper bounds.}
We prove matching upper bounds for deterministic algorithms. For
$\omp_3(\infty,\allowbreak\infty)$, a small finite adaptive construction gives the
$2/3$ bound. For $\omp_3(\infty,1)$, another small finite adaptive
construction gives the $3/5$ bound. For $\omp_3(s,t)$ with $s\ge1$ and
$1\le t<\infty$, the tight bound $\gamma_t$ is proved by a different
adaptive construction: for arbitrarily large $N$, we build a finite
instance, and the bound follows as $N\to\infty$.

A consequence is that, for length-$3$ augmentations, the offline budget becomes irrelevant as soon as it is positive: for every $t$, the optimal ratio for $\omp_3(s,t)$ is the same for all $s\ge1$. This collapse is specific to length-$3$ augmentations. Allowing paths of length at most $5$ gives ratio $3/4$ for $\omp_5(\infty,\infty)$, but Appendix~\ref{app:bad-examples} shows that the analogous collapse fails for $\omp_5(1,\infty)$.

The obstruction is structural. For length-$3$ augmentations, every remaining augmenting path has a single middle matching edge, and the middle-edge lemma can identify a local budget obstruction on that edge. For longer augmenting paths, the obstruction can be distributed over several matched edges, so the same local certificate does not directly extend.

\paragraph{Weighted results.}
Finally, we study offline-weighted variants of the unit-offline-budget problem, where offline vertices have nonnegative weights and the objective is to maximize the total matched offline weight. We determine the exact optimal deterministic competitive ratios for two cases:
\begin{itemize}
    \item $2-\sqrt2 \approx 0.585786$ for $\womp_3(1,1)$, and
    \item $(\sqrt5-1)/2 \approx 0.618034$ for $\womp_3(1,\infty)$.
\end{itemize}
Both are strictly smaller than their unweighted counterparts ($3/5$ and $2/3$, respectively) under the same reassignment limits. Thus offline weights make the budgeted problem strictly harder.

The weighted guarantees are achieved by greedy algorithms and proved via primal--dual constructions. We also give matching upper bounds via adaptive instances. With arbitrary weights on online vertices, no deterministic algorithm has a positive competitive ratio.

\subsection{Related work}

\paragraph{Replacements, augmentations, and switching costs.}
A related line of work studies online matching under recourse, with the goal of maintaining an optimal or near-optimal matching while controlling the number of changes. In the \emph{replacements} model, the algorithm may modify the current matching after each arrival, and the cost is the number of modified matching edges. Bosek et al.~\cite{BosekLSZ14} maintain a maximum matching with $O(\sqrt{n})$ \emph{amortized} replacements per insertion. Bernstein, Holm, and Rotenberg~\cite{BernsteinHR19} analyze augmenting along a shortest augmenting path from the new vertex and prove an $O(\log^2 n)$ \emph{amortized} bound, nearly matching the $\Omega(\log n)$ lower bound.

This line of work differs from ours in its objective and in its recourse primitive. It aims to maintain exact optimality while bounding update cost, often amortized over time. In contrast, we study worst-case competitive ratios under adversarial arrivals. Moreover, these algorithms may use long augmenting paths, whereas $\omp_3(s,t)$ permits only local length-$3$ augmentations and imposes a hard reassignment budget on each vertex over the entire execution. The closest hard-budget models known to us impose budgets on edges, updates, or total recourse. Our model differs in that the budget is persistent and local to each vertex.

A complementary formulation charges a \emph{switching cost} for reassignment. This problem was introduced by Grove et al.~\cite{GroveKKV95}. Chaudhuri, Daskalakis, Kleinberg, and Lin~\cite{ChaudhuriDKL09} study switching costs for online bipartite perfect matching with augmentations: assuming that a perfect matching exists, the algorithm must maintain a matching while minimizing the number of client reassignments. They obtain substantially smaller switching costs than the worst-case $O(n^2)$ bound in several settings, including random arrival order and forests. Our model instead imposes hard per-vertex budgets, restricts augmentations to length at most $3$, and maximizes matching size under adversarial arrivals.

\paragraph{Hard-budget recourse and competitive guarantees.}
Hard (worst-case) recourse budgets have also been studied from a competitive
analysis viewpoint. Shin, Kim, Lee, and An~\cite{ShinKLA2020} study online
bipartite matching with bounded augmentations from the newly arrived vertex.
If $\omp_k$ denotes the analogue of $\omp_3$ in which augmenting paths may have
length at most $k$, then their model corresponds to $\omp_k$. They do not impose
persistent budgets on the vertices. Their budget limits the size of one update;
ours limits how many times each vertex can change partner during the whole
execution. For odd $k$, allowing paths of length at most $k$ gives ratio
$(k+1)/(k+3)$.

Angelopoulos et al.~\cite{AngelopoulosDJ18} study edge arrivals with a recourse bound on each edge. Bernstein and Dudeja~\cite{BernsteinD20} consider random edge arrivals with recourse. Both models budget changes to matching edges, rather than partner changes experienced by vertices over the entire execution.

A different direction is the online batch-arrival model of Lee and Singla~\cite{LeeS20}, where edges are revealed in batches and the algorithm must irrevocably extend its matching at each batch.

\paragraph{Vertex-weighted online matching.} Our weighted extension is related to the classical vertex-weighted online bipartite matching problem, in which offline vertices have weights, and the goal is to maximize the total matched offline weight. Without recourse, Aggarwal, Goel, Karande, and Mehta~\cite{AggarwalGKM11} give the optimal $1-1/e$ competitive ratio under adversarial arrivals. Under random-order arrivals, stronger guarantees are possible: Huang, Tang, Wu, and Zhang~\cite{HuangTWZ19} adapt \textsc{Ranking} to achieve $0.6534$, later improved to $0.662$ by Jin and Williamson~\cite{JinW21} and to $0.6862$ by Peng and Tang~\cite{PengT25}. These results are randomized and do not allow reassignment. Our weighted result is in a different regime: deterministic algorithms, adversarial arrivals, and strictly local recourse with unit reassignment budgets on both sides.

\section{Preliminaries and Baseline Techniques}
\subsection{Model and Primal--Dual Template}
\begin{definition}[Online Matching problem (OMP)]
An instance of \emph{online bipartite matching} consists of a bipartite graph
$G=(L,R,E)$, where $L$ denotes the offline vertices and $R$ the online vertices,
and an arrival order $\pi$ of $R$. The vertices in $L$ are known in advance.
When a vertex $j\in R$ arrives, all incident edges are revealed, and the
algorithm must irrevocably decide whether to match $j$ to a currently free
(i.e., unmatched) neighbor in $L$. The objective is to maximize the final
matching size.
\end{definition}

\begin{definition}[Offline-vertex-weighted Online Matching problem (WOMP)]
In the offline-vertex-weighted variant, each offline vertex $i\in L$ has a
weight $w(i)\ge 0$, and the objective is to maximize the total weight of the
matched offline vertices.
\end{definition}

\begin{definition}[Online Matching with augmenting paths of length at most $K$]
Fix $K\ge 1$. In $\omp_K$, upon the arrival of $j\in R$, the algorithm may either leave $j$ unmatched or
augment along an augmenting path that starts at $j$ and ends at a free vertex in $L$, with length at most $K$.
\end{definition}
Note that $\omp_1$ coincides with OMP; for $K\ge 3$, $\omp_K$ allows recourse via augmenting paths.

Let $M$ be the current matching. An allowed augmentation uses an $M$-augmenting path $P$ that starts at the arriving vertex $j$ and ends at a free vertex in $L$. The augmentation updates the matching as $M \leftarrow M \triangle E(P)$. We only allow augmentations at arrivals, so the path used must start at the arriving vertex.
An augmentation may change the matched neighbor of a previously matched vertex; we call this a \emph{reassignment}. Formally, a vertex $v$ is \emph{reassigned} if it is matched both before and after an augmentation, but to different neighbors.

In particular, in a length-$3$ augmentation $\apath{j}{x}{y}{i}$, the reassigned vertices are exactly the internal vertices $x$ and $y$ (the endpoints $j$ and $i$ are newly matched).

Note that vertices that are matched before an augmentation remain matched after it. They either keep their matched neighbor or are reassigned to a new one. In particular, an augmentation never makes a previously matched vertex free. Furthermore, once an online vertex arrives, its full neighborhood is revealed and remains fixed.
Figure~\ref{fig:example} illustrates a length-$3$ augmentation.

\begin{figure}[h]
\centering
\begin{tikzpicture}

  \Vertex[x=0,y=0, size=.5, color=white,
          label=$i$, position=above]{iA}
  \Vertex[x=0,y=1.5, size=.5, color=black,
          label=$x$, position=above]{xA}

  \Vertex[x=2,y=1.5, size=.5, color=black,
          label=$y$, position=above]{yA}
  \Vertex[x=2,y=0, size=.5, color=white,
          label=$j$, position=above]{jA}

  \Edge[style=solid](xA)(yA)
  \Edge[style=dashed](jA)(xA)
  \Edge[style=dashed](yA)(iA)

  \Vertex[x=3,y=1.5,style={color=white}]{arrowA}
  \Vertex[x=5,y=1.5,style={color=white}]{arrowB}
  \Edge[Direct](arrowA)(arrowB)

  \Vertex[x=6,y=0, size=.5, color=black,
          label=$i$, position=above]{iB}
  \Vertex[x=6,y=1.5, size=.5, color=black,
          label=$x$, position=above]{xB}

  \Vertex[x=8,y=1.5, size=.5, color=black,
          label=$y$, position=above]{yB}
  \Vertex[x=8,y=0, size=.5, color=black,
          label=$j$, position=above]{jB}

  \Edge[style=solid](jB)(xB)
  \Edge[style=solid](yB)(iB)
  \Edge[style=dashed](xB)(yB)

\end{tikzpicture}
\caption{Upon the arrival of $j$, we augment the current matching $M=\{xy\}$ along the $3$-path $\apath{j}{x}{y}{i}$, removing $xy$ and adding $xj$ and $yi$. The vertices $x$ and $y$ are reassigned.}
\label{fig:example}
\end{figure}

\begin{definition}[$\omp_K(s,t)$]
Let $s,t\in \mathbb N\cup\{\infty\}$. In $\omp_K(s,t)$, each vertex in $L$ may be reassigned at most $s$ times, and each vertex in $R$ may be reassigned at most $t$ times over the execution.
A vertex is \emph{exhausted} once it reaches its budget.
An augmentation is \emph{feasible} if every vertex that would be reassigned is not exhausted before the augmentation.
A matched edge is \emph{frozen} if an endpoint is exhausted; frozen matched edges cannot be removed by any feasible augmentation.
We view $\omp_K$ as the special case $\omp_K(\infty,\infty)$.
\end{definition}

\begin{definition}[Competitive ratio for $\omp_K(s,t)$]
An online deterministic algorithm $\mathcal A$ is \textbf{$\rho$-competitive} for
$\omp_K(s,t)$ if, for every instance $(G,\pi)$, its output size satisfies
$\mathcal A(G,\pi)\ge \rho\cdot \nu(G)$, where $\nu(G)$ is the size of a maximum
matching in $G$. Throughout the paper, we will often use $\opt=\nu(G)$ to denote the size of the optimal matching, and $\alg$ to denote the size of the matching returned by an algorithm.
\end{definition}

We write augmenting paths as ordered sequences of vertices. Thus, $\apath{j}{i}$ denotes a length-$1$ augmenting path, where $j\in R$ is the newly arrived online vertex and $i\in L$ is a free offline vertex. Similarly, $\apath{j}{x}{y}{i}$ denotes a length-$3$ augmenting path, where $j,y\in R$ and $x,i\in L$; here, the edge $xy$ belongs to the current matching, while $i$ remains free. Applying the augmentation $\apath{j}{x}{y}{i}$ removes $xy$ from the matching and adds the edges $jx$ and $yi$.

We use the following shorthand in the warmup and in the middle-edge
lemma. An algorithm is \emph{\spgreedy} for $\omp_3(s,t)$ if, upon the
arrival of $j\in R$, it augments along some feasible path of length $1$
or $3$ starting at $j$ whenever such a path exists. If no such path
exists, it leaves $j$ unmatched. The term \emph{short} refers to the
allowed path lengths $1$ and $3$; it does not require the algorithm to
prioritize length-$1$ augmentations over length-$3$ augmentations.
A \emph{direct-first} algorithm is a special \spgreedy algorithm that
prioritizes length-$1$ augmentations (direct matches) over length-$3$
augmentations.

We will repeatedly use a standard primal--dual template to analyze the
competitive ratio of an online matching algorithm, based on the LP relaxation
of maximum bipartite matching. This approach was used, e.g., by Devanur, Jain,
and Kleinberg~\cite{Devanur2013} to analyze $\ranking$, and has since been used
extensively in the literature. It is included here only to fix the normalization of the dual variables used in the different analyses.

\begin{definition}[Bipartite matching LP]\label{def:matching-lp}
Let $G=(L,R,E)$ be a bipartite graph, and let $N(v)$ denote the neighborhood of a vertex $v$. The standard LP relaxation and its dual are:
\[
\begin{array}{@{}c@{\qquad}c@{}}
\begin{aligned}
(P_G)\qquad
\max \ & \sum_{ij\in E} x_{ij} \\
\text{s.t.}\ &
\sum_{j\in N(i)} x_{ij} \le 1 \qquad \forall i\in L,\\
&
\sum_{i\in N(j)} x_{ij} \le 1 \qquad \forall j\in R,\\
& x_{ij} \ge 0 \qquad \forall ij\in E
\end{aligned}
&
\begin{aligned}
(D_G)\qquad
\min \ & \sum_{i\in L}\alpha_i + \sum_{j\in R}\beta_j \\
\text{s.t.}\ & \alpha_i+\beta_j \ge 1 \qquad \forall\, ij\in E,\\
& \alpha_i\ge 0 \qquad \forall i\in L,\\
& \beta_j\ge 0 \qquad \forall j\in R\\
&\\
\end{aligned}
\end{array}
\]
\end{definition}

\begin{lemma}[Primal--dual template]\label{lemma:pd-template}
Let $\mathcal A$ be an online algorithm for $\omp_K(s,t)$, and consider an
arbitrary instance $(G,\pi)$ with $G=(L,R,E)$.
Assume that, during the execution of $\mathcal A$, we construct nonnegative
dual variables $(\alpha_i)_{i\in L}$ and $(\beta_j)_{j\in R}$ such that:
\begin{enumerate}
  \item[(1)] All dual variables start at $0$.
  \item[(2)] Whenever the matching maintained by $\mathcal A$ increases in size by
        $1$, the dual objective
        $\sum_{i\in L}\alpha_i + \sum_{j\in R}\beta_j$
        increases by at most $1/\rho$ (and no other increase/decrease occurs during the execution).
  \item[(3)] At the end of the execution, $(\alpha,\beta)$ is feasible for $D_G$.
\end{enumerate}
Then $\mathcal A$ is $\rho$-competitive for $\omp_K(s,t)$, i.e.,
$\mathcal A(G,\pi)\ge \rho\cdot \nu(G)$, where $\nu(G)$ is the size of a maximum
matching in $G$.
\end{lemma}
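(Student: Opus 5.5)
The plan is a standard weak-duality argument that relates the final size of the algorithm's matching to the final dual objective. Let $M_{\mathrm{fin}}$ denote the matching maintained by $\mathcal A$ at the end of the execution on $(G,\pi)$, so $\mathcal A(G,\pi)=|M_{\mathrm{fin}}|$.

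\textbf{Step 1: the final dual objective is at most $|M_{\mathrm{fin}}|/\rho$.} In $\omp_K(s,t)$ the matching changes only through augmentations, namely length-$1$ matches or length-$K'$ alternating-path augmentations with $K'\le K$ starting at the arriving vertex. Each such augmentation increases $|M|$ by exactly $1$, and no operation decreases $|M|$. Hence the number of augmentation events over the whole execution equals $|M_{\mathrm{fin}}|$. By condition (1), the dual objective starts at $0$. By condition (2), the dual objective changes only at these events, and at each one it increases by at most $1/\rho$. Summing over the events gives
\[
\sum_{i\in L}\alpha_i+\sum_{j\in R}\beta_j \;\le\; \frac{|M_{\mathrm{fin}}|}{\rho}.
\]

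\textbf{Step 2: the final dual objective is at least $\nu(G)$.} By condition (3), $(\alpha,\beta)$ is feasible for $D_G$. Any maximum matching $M^*$ of $G$ gives a feasible solution of $P_G$ with value $\nu(G)$: set $x_{ij}=1$ on the edges of $M^*$ and $x_{ij}=0$ otherwise. Weak LP duality then gives $\nu(G)\le\sum_i\alpha_i+\sum_j\beta_j$. This step can also be done directly, without invoking LP duality. Because $M^*$ is a matching, its edges are vertex-disjoint, and the duals are nonnegative, we have
\[
\nu(G)=\sum_{ij\in M^*}1\le\sum_{ij\in M^*}(\alpha_i+\beta_j)\le\sum_{i\in L}\alpha_i+\sum_{j\in R}\beta_j .
\]

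\textbf{Step 3: combine.} Steps 1 and 2 together give $\nu(G)\le |M_{\mathrm{fin}}|/\rho$, that is, $\mathcal A(G,\pi)\ge\rho\,\nu(G)$. Since the instance $(G,\pi)$ was arbitrary, $\mathcal A$ is $\rho$-competitive.

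\textbf{Main point of care.} The only step needing attention is the bookkeeping in Step 1, namely that the increments of the dual objective are charged one-to-one to unit increases of $|M|$. This rests on two facts. First, condition (2) forbids any dual change outside augmentation events. Second, in this model an augmentation never makes a previously matched vertex free, so $|M|$ never decreases, and the count of augmentation events is exactly $|M_{\mathrm{fin}}|$. Reassignments inside an augmentation do not change $|M|$ and do not trigger any separate dual update, so they cannot break this accounting. Nonnegativity of the duals is needed in Step 2 for the vertex-disjoint summation.
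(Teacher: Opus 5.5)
Your proof is correct and follows the same route as the paper. Conditions (1)--(2) bound the final dual objective by $|M_{\mathrm{fin}}|/\rho$, and condition (3) bounds $\nu(G)$ by the dual objective. The only difference is in that second step: you use just weak duality, or equivalently the direct sum over the vertex-disjoint edges of $M^*$ with nonnegative duals, whereas the paper invokes strong duality and integrality of the bipartite matching polytope, which is more than the argument needs.
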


\begin{proof}
Let $X$ be the size of the final matching output by $\mathcal A$ on $(G,\pi)$,
and let $Y = \sum_{i\in L}\alpha_i + \sum_{j\in R}\beta_j$ be the final value of
the constructed dual objective.
By (1) and (2), we have $Y\le X/\rho$.
By (3), $(\alpha,\beta)$ is feasible for $D_G$, hence $\opt(D_G)\le Y$.
By strong duality for linear programs, $\opt(P_G)=\opt(D_G)$.
Moreover, since $G$ is bipartite, the matching polytope is integral, and the LP
relaxation is exact; in particular, $\nu(G)=\opt(P_G)$. Therefore,
\[
\nu(G)\ =\ \opt(P_G)\ =\ \opt(D_G)\ \le\ Y\ \le\ X/\rho,
\]
which implies $X/\nu(G)\ge \rho$, as claimed.
\end{proof}

\subsection{The middle-edge lemma}\label{subsec:middle-edge}

We next state the middle-edge lemma. This lemma is used in the unbudgeted
warmup and later in the proof of Theorem~\ref{thm:lcp-ratio}. In the unbudgeted case, an
\spgreedy algorithm cannot end with an augmenting path of length $3$: when the
online endpoint arrived, the algorithm could have used that path. With
reassignment budgets, this may fail. Such a path may remain because one endpoint
of its middle matched edge is exhausted. The lemma states that this is the only
possible obstruction: every remaining augmenting $3$-path has a frozen middle
matched edge.

\begin{lemma}[Middle-edge lemma]\label{lemma:middle}
Fix $s,t\in \mathbb{N} \cup \{\infty\}$ and an instance $(G,\pi)$ with $G=(L,R,E)$.
Let $\mathcal A$ be any \spgreedy algorithm for $\omp_3(s,t)$ and let $M$ be the
final matching produced by $\mathcal A$ on $(G,\pi)$.
Then for every edge $xy\in M$, if neither endpoint is exhausted (equivalently,
if $xy$ is not frozen), the edge $xy$ cannot be the middle edge of an augmenting
$3$-path for $M$. Equivalently, at least one of $x$ or $y$ has no free neighbor
with respect to $M$.
\end{lemma}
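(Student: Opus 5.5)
The plan is to argue by contradiction, using only three monotonicity facts. Matched vertices never become free. The set of free offline vertices only shrinks. Exhaustion is permanent. Suppose $xy\in M$ is not frozen and $[j~x~y~i]$ is an augmenting $3$-path for $M$, so $j\in R$ and $i\in L$ are free in $M$. Since matched vertices stay matched, $j$ was never matched, and $i$ was free throughout the execution. Consider the moment $j$ arrived. Because $\mathcal A$ is \spgreedy and left $j$ unmatched, no feasible augmenting path of length $1$ or $3$ started at $j$ at that moment.

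First, $x$ was matched when $j$ arrived. Otherwise $[j~x]$ would have been a feasible length-$1$ augmentation. Let $z$ be the partner of $x$ at that moment. The central step is to show that $x$ keeps partner $z$ for the rest of the execution, so $z=y$.

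To see why, recall how an offline vertex changes partner. In a length-$3$ augmentation $[j'~x'~y'~i']$, the offline vertex that is reassigned is $x'$, and it receives the \emph{newly arrived} vertex $j'$. So the partner of $x$ can change only through an augmentation $[j'~x~z~k]$ for some later arrival $j'$ and some offline vertex $k$ that is free at that time. Such a path requires $x$ and $z$ to be non-exhausted and $k\in N(z)$ to be free. Apply the same conditions at $j$'s arrival. The vertex $x$ was not exhausted then, since it is not exhausted at the end and exhaustion is permanent. The path $[j~x~z~k]$ was infeasible, so either $z$ was already exhausted or $z$ had no free neighbor. Both conditions persist: exhaustion is permanent, and free offline vertices are never created. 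Hence the first potential change of $x$'s partner after $j$'s arrival is infeasible, and so $x$ is matched to $z$ until the end. Thus $z=y$.

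Now the path $[j~x~y~i]$ was available at $j$'s arrival. At that time $xy$ was a matched edge and $i$ was free. Also $x$ and $y$ were not exhausted, since neither is exhausted at the end and budgets are monotone. So this length-$3$ augmentation was feasible, contradicting the \spgreedy rule. The equivalent formulation follows immediately: if both $x$ and $y$ had free neighbors, they would form such a path, because those neighbors lie on opposite sides and hence are distinct vertices.

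I expect the main obstacle to be the persistence step: ruling out that the edge $xy$ was created after $j$ arrived, by reassigning $x$ away from an earlier partner. The argument above handles this by observing that the only operation changing an offline vertex's partner uses that partner as the middle online vertex. Its feasibility is then governed by conditions that were already false at $j$'s arrival and can never become true.
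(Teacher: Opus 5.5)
Your proof is correct and uses the same ingredients as the paper's. At $j$'s arrival, $x$ is matched to some $z$. Since $x$ is not exhausted, the infeasibility of every path $\apath{j}{x}{z}{k}$ forces $z$ to be exhausted or to have no free neighbor. Both conditions persist, so $xz$ is never removed. The only difference is the order of the argument: the paper first rules out $xy\in M_j$ and gets its contradiction from $xz$ persisting while $x$ ends matched to $y\neq z$, whereas you use persistence to conclude $z=y$ and then contradict the feasibility of $\apath{j}{x}{y}{i}$ at $j$'s arrival.
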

\begin{remark}
In the special case $s=t=\infty$, no vertex is ever exhausted. Hence
no matched edge is frozen, and Lemma~\ref{lemma:middle} rules out all augmenting
$3$-paths in the final matching of any \spgreedy algorithm.
\end{remark}

\begin{proof}
Suppose for a contradiction that, at the end of the execution, $M$ admits an
augmenting $3$-path $\apath{j}{x}{y}{i}$, where
$j\in R$ and $i\in L$ are free in $M$, $xy\in M$, and
neither $x$ nor $y$ is exhausted.

    Let $M_j$ be the matching maintained by $\mathcal A$ when $j$ arrives. Since
    $j$ is free in the final matching, $\mathcal A$ left $j$ unmatched when it
    arrived. As $\mathcal A$ is \spgreedy, there is no feasible (i.e.,
    budget-respecting) augmenting path of length $1$ or $3$ starting at $j$ at
    that time. In particular:
    \begin{enumerate}
        \item $x\in N(j)$ is matched in $M_j$ (otherwise $\mathcal A$ would match $j$
        directly to $x$).
        \item $xy \notin M_j$. Otherwise $\apath{j}{x}{y}{i}$ would be a feasible
        augmenting $3$-path at the arrival of $j$: the vertex $i$ is free at that
        time because it is free at the end and offline vertices never become free
        again, and the edge $yi$ is available because $y$ arrived before $j$.
    \end{enumerate}

    Let $z\in R\setminus\{j\}$ be such that $xz\in M_j$. Since $j$ was not matched
    upon arrival via an augmenting path using $xz$, and $x$ is not exhausted,
    either $z$ is exhausted at that time or $z$ has no free neighbor in $L$ to
    complete a feasible length-$3$ augmentation.

In either case, the edge $xz$ can never be removed from the matching after
the arrival of $j$. If $z$ is exhausted, then $xz$ is frozen. If
$z$ has no free neighbor at that time, then it can never gain one later: the
full neighborhood of $z$ has already been revealed, and offline vertices never
become free after being matched.
This contradicts the assumption that, at the end of the execution, $x$ is
matched to $y\neq z$. Therefore, in any augmenting $3$-path, the middle edge
must be frozen.
\end{proof}

\subsection{Unbudgeted \texorpdfstring{$\omp_3$}{OMP3}}

We begin with the infinite-budget case. In this setting, the usual
short-augmentation argument gives the $2/3$ guarantee for any algorithm
that always uses an available augmenting path of length $1$ or $3$. As
noted in prior work (e.g., Shin, Kim, Lee, and An~\cite{ShinKLA2020}),
when reassignment budgets are infinite on both sides, this guarantee is
straightforward.

We give a primal--dual proof using Lemma~\ref{lemma:pd-template}.
This proof sets up the dual-variable structure used later for finite reassignment
budgets, where the short-augmenting-path certificate is no longer sufficient.

\begin{theorem}\label{thm:spgreedy}
Every \spgreedy algorithm for $\omp_3$ attains a competitive ratio of $\frac23$.
\end{theorem}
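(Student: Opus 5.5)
We prove Theorem~\ref{thm:spgreedy} with the primal--dual template (Lemma~\ref{lemma:pd-template}) for $\rho=2/3$. Each augmentation then raises the dual objective by at most $3/2$, and all duals start at $0$. The structural input is the Remark after Lemma~\ref{lemma:middle}: for $s=t=\infty$, the final matching $M$ of any \spgreedy algorithm has no augmenting path of length $1$ or $3$. Length $1$ is excluded because any edge $ij$ with both ends free at the end had $i$ free when $j$ arrived, so the algorithm would have matched $j$.

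\textbf{Dual construction.} We set the duals online so that they depend only on the current matching and on which matched vertices have a free neighbor. At the end we want
\[
\alpha_i=\tfrac34,\ \beta_j=\tfrac34 \text{ for } ij\in M \text{ in general,}
\]
corrected along each matched edge as follows. If $xy\in M$ and $y$ has a free neighbor in $L$, put $(\alpha_x,\beta_y)=(1,\tfrac12)$. Symmetrically, if $x$ has a free online neighbor, put $(\alpha_x,\beta_y)=(\tfrac12,1)$. By the Remark, both cannot hold. Free vertices get $0$. Every matched edge then contributes exactly $3/2$, so the final dual objective is $\tfrac32|M|$.

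To fit the template's monotonicity wording (clause (2) allows no other changes), we make these values arise incrementally. When $j$ arrives and the matching grows, we recompute the values of the affected edges: the new edge(s) and any edge whose classification changed. A classification can change only because a free vertex became matched. The total stays $\tfrac32|M|$, so the net increase per augmentation is exactly $3/2$. If the template is read as forbidding decreases, we instead set the duals only at the end, since the template's conclusion only uses the final value $Y\le X/\rho$.

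\textbf{Feasibility.} Take an edge $ij\in E$ and split into cases.
\begin{itemize}
\item If both $i$ and $j$ are matched, then $\alpha_i,\beta_j\ge\tfrac12$ and the constraint holds.
\item Both free is impossible, since that would be an augmenting path of length $1$.
\item If $i$ is free and $j$ is matched to $x$, then $j$ has a free neighbor, so $\beta_j=1$ and the constraint holds.
\item If $j$ is free and $i$ is matched to $y$, then $i$ has a free online neighbor, so $\alpha_i=1$.
\end{itemize}
The main obstacle is making the classification well defined. A matched edge $xy$ where $y$ has a free neighbor and $x$ has a free neighbor would form an augmenting $3$-path $\apath{j'}{x}{y}{i'}$, and this is exactly what the Remark following Lemma~\ref{lemma:middle} rules out. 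Edges with neither property keep $(\tfrac34,\tfrac34)$, or any split summing to $3/2$ with both parts at least $\tfrac12$.

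Combining these, Lemma~\ref{lemma:pd-template} gives the ratio $2/3$.
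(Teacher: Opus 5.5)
Your dual construction has the two corrections the wrong way round, and as written it is infeasible. You say that if $xy\in M$ and $y$ has a free neighbor $i'\in L$, then $(\alpha_x,\beta_y)=(1,\tfrac12)$. But then the edge $i'y$ gets $\alpha_{i'}+\beta_y=0+\tfrac12<1$. Likewise, if $x$ has a free online neighbor $j'$, you set $\alpha_x=\tfrac12$, and the edge $xj'$ gets only $\tfrac12$. Both situations occur on trivial instances: a single online vertex with two offline neighbors gives the first, and two online vertices sharing one offline neighbor gives the second.

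Your feasibility check uses the opposite assignment: $\beta_j=1$ when the online endpoint has a free neighbor, and $\alpha_i=1$ when the offline endpoint has a free online neighbor. So the fix is to swap the two rules. With the swap the argument goes through. The three classes are disjoint by the Remark after Lemma~\ref{lemma:middle}. Every matched edge carries exactly $3/2$. All four feasibility cases then hold as you state them.

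After this repair your proof is essentially the paper's. The paper gives each matched edge $(\alpha,\beta)=(\tfrac12,1)$ online and flips it to $(1,\tfrac12)$ exactly when the online endpoint has no free neighbor at the end. The only difference from your corrected three-way split is that your ``neither'' edges get $(1,\tfrac12)$ in the paper rather than $(\tfrac34,\tfrac34)$. The key step is the same application of the middle-edge lemma: an offline endpoint with a free online neighbor has a partner with no free neighbor.

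Your incremental-update paragraph is unnecessary. Its claim that a classification changes only when a free vertex becomes matched is also not accurate: an unmatched arrival adjacent to $x$ gives $x$ a free online neighbor without any augmentation. Your fallback is sound, though. Fixing the duals at the end, with value $\tfrac32|M|$, and applying weak duality directly proves the theorem.
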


\begin{proof}[Primal--Dual proof of Theorem~\ref{thm:spgreedy}]
Fix an instance $(G,\pi)$ with $G=(L,R,E)$ and a \spgreedy algorithm $\mathcal A$ for
$\omp_3$, and let $M$ be its final matching. Alongside the execution, we
construct a dual solution $(\alpha,\beta)$ for $D_G$, starting from the all-zero
solution. The updates are given in Algorithm~\ref{alg:dual-warmup}. Each time
$\mathcal A$ increases the matching size by $1$, the dual objective increases by
exactly $3/2$. The postprocessing step only redistributes dual weight along
edges of $M$, preserving the value, to ensure dual feasibility.

\begin{algorithm}[ht!]
\SetAlgoLined
\DontPrintSemicolon
\KwIn{Instance $(G,\pi)$, algorithm $\mathcal A$}
\KwOut{Dual vector $(\alpha_i,\beta_j)_{i\in L,\,j\in R}$}
\lForAll{$i\in L, j\in R$}{set $\alpha_i\gets 0$, $\beta_j\gets 0$\quad}
\ForAll(\tcp*[f]{dual update steps}){$j\in R$ in arrival order, while running $\mathcal A$}{
  \lIf{$\mathcal A$ matches $j$ directly to some $i\in L$}{
    Set $\alpha_i\gets \frac12$ and $\beta_j\gets 1$.
  }
  \lElseIf{$\mathcal A$ matches $j$ using the $3$-path $\apath{j}{x}{y}{i}$}{
    Set $\alpha_x\gets \frac12$, $\alpha_i\gets \frac12$,
    $\beta_j\gets 1$, and $\beta_y\gets 1$.
  }
}
Let $M$ be the final matching constructed by $\mathcal A$\;

\ForAll(\tcp*[f]{postprocessing step}){$ij\in M$ with $j\in R$}{
  \lIf{$j$ has no free neighbor in $L$ at the end of the execution}{
    Set $\alpha_i\gets 1$ and $\beta_j\gets \frac12$.
  }
}
\caption{$\dual$ for $\omp_3(\infty,\infty)$}
\label{alg:dual-warmup}
\end{algorithm}

We first verify the change in the dual objective. Each time $\mathcal A$ matches
an arrival $j$, the dual objective increases by exactly $3/2$. In a direct
match, $j$ is new so $\beta_j$ increases from $0$ to $1$, and the chosen
neighbor $i$ is free so $\alpha_i$ increases from $0$ to $1/2$. In a $3$-path
augmentation, $j$ is new so $\beta_j$ again increases from $0$ to $1$, and the
free endpoint $i\in L$ gets $\alpha_i$ increased from $0$ to $1/2$; the other
assignments are only for clarity and do not increase any dual variable.

The postprocessing step only redistributes $1/2$ within a matched edge $ij\in M$,
changing $(\alpha_i,\beta_j)$ from $(1/2,1)$ to $(1,1/2)$ and thus preserving the
objective value. Therefore, by Lemma~\ref{lemma:pd-template} with $\rho=2/3$, it
remains to prove dual feasibility, i.e., that $\alpha_i+\beta_j\ge 1$ for every
edge $ij\in E$.

Fix an arbitrary edge $ij\in E$. Since $\mathcal A$ is \spgreedy, $i$ and $j$ cannot
both be free in the final matching $M$ (otherwise $ij$ would be an augmenting
$1$-path). We consider the remaining cases.

If both $i$ and $j$ are matched in $M$, then by construction each has value at
least $\frac12$, so $\alpha_i+\beta_j\ge 1$.

If $j$ is matched in $M$ and $i$ is free, then $\beta_j=1$. Indeed, $\beta_j$ can
become $\frac12$ only in the postprocessing step of
Algorithm~\ref{alg:dual-warmup}, which requires that $j$ has no free neighbor in
$L$ at the end (impossible since $i\in N(j)$ is free). Thus
$\alpha_i+\beta_j\ge 1$.

Finally, suppose that $i$ is matched in $M$ and $j$ is free, and let $ik\in M$
be the edge matching $i$.
Since this is an execution in $\omp_3(\infty,\infty)$, neither endpoint
of $ik$ is exhausted. By the middle--edge lemma, Lemma~\ref{lemma:middle}, the edge $ik$ cannot be the
middle edge of an augmenting $3$-path for $M$.
In particular, $k$ cannot have a free neighbor $k'\in L$; otherwise,
$\apath{j}{i}{k}{k'}$ would be an augmenting $3$-path. Therefore $k$ has no free neighbor in
$L$ at the end of the execution, and the postprocessing step sets $\alpha_i=1$
(and $\beta_k=\frac12$). Thus $\alpha_i+\beta_j\ge 1$.

We proved that $(\alpha,\beta)$ is feasible for $D_G$. By Lemma~\ref{lemma:pd-template}, $|M|\ge \frac23\,\nu(G)$, so $\mathcal A$ is $\frac23$-competitive.
\end{proof}

\section{The Lowest-Cost-Path Algorithm} \label{sec:lcp}
\subsection{Algorithm overview}

In this section, we propose and analyze a specific algorithm for $\omp_3(s,t)$,
the Lowest-Cost-Path (\lcp) algorithm (Algorithm~\ref{alg:LCP}).

\begin{definition}[Lowest-Cost-Path (\lcp)]
The Lowest-Cost-Path (\lcp) algorithm is an algorithm for $\omp_3(s,t)$ that,
upon the arrival of $j\in R$, assigns a cost to each \emph{feasible} augmentation
and performs one of minimum cost:
\begin{enumerate}
  \item A direct match (i.e., augmenting along a $1$-path $\apath{j}{i}$ where $i$ is free)
        has cost $-1$.
  \item Augmenting along a $3$-path $\apath{j}{x}{y}{i}$ (where $xy$ is an edge of the current
        matching) has cost equal to the reassignment count of the middle vertex
        $y\in R$ \emph{prior} to the move.
        Concretely, if $y$ has been reassigned $Q-1$ times so far, then using this path has cost $Q-1$.
\end{enumerate}
Since direct matches have cost $-1$ and $3$-paths have nonnegative cost, \lcp always prioritizes direct matches whenever one is available.
Among all feasible augmenting paths, \lcp chooses one of minimum cost, using an arbitrary deterministic tie-breaking rule fixed in advance. The analysis below does not depend on which such rule is used.
A key property of \lcp is that, for every such tie-breaking rule, it never reassigns any offline vertex more than once (Lemma~\ref{lemma:lcp-spgreedy}).
\end{definition}

We focus mainly on the regime $s=1$, where each offline vertex in $L$ can be reassigned at most once. In this regime, the competitive ratio of \lcp interpolates between $1/2$ for $t=0$ (when \lcp coincides with the deterministic greedy algorithm for online matching) and $2/3$ for $t=\infty$.

\begin{algorithm}[ht!]
  \SetAlgoLined
  \DontPrintSemicolon
  \KwIn{Instance $(G,\pi)$ with $G=(L,R,E)$; online budget $t$}
  \KwOut{A matching $M$}
  $M\gets \emptyset$\;
  \ForEach{arrival $j\in R$}{
  Initialize reassignment counter $c_j \gets 0$\;
    \eIf{$j$ has a free neighbor $i\in L$}{
      match $j$ to any such $i$ (Set $M\gets M\cup \{ij\}$)
    }{
      among all feasible augmenting $3$-paths $\apath{j}{x}{y}{i}$, choose one minimizing $c_y$\;
      \lIf{found}{
        Set $M\gets \bigl(M\setminus\{xy\}\bigr)\cup\{jx,yi\}$ and $c_y \gets c_y + 1$
      }
    }
  }
  \caption{Lowest-Cost-Path (\lcp)}
  \label{alg:LCP}
\end{algorithm}

Although \lcp is defined for $\omp_3(s,t)$, Algorithm~\ref{alg:LCP}
only tracks online reassignment counters. Lemma~\ref{lemma:lcp-spgreedy}
shows no offline counter is needed: \lcp never reassigns any offline
vertex twice.

The algorithm can be implemented in polynomial time per arrival. Direct matches
are found by scanning the neighbors of the arriving vertex. Feasible
length-$3$ augmenting paths are found by scanning each offline neighbor $x$ of
the arriving vertex, the online vertex $y$ currently matched to $x$, and the
free offline neighbors of $y$. The same enumeration applies to the weighted
algorithms in Section~\ref{sec:offline-weighted}.

\begin{lemma}\label{lemma:lcp-spgreedy}
For every offline budget $s \ge 1$ and online budget $t$, \lcp never reassigns an offline vertex more than once. Consequently, \lcp is an \spgreedy algorithm for $\omp_3(s,t)$.
\end{lemma}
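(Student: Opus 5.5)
The plan is to show directly that an offline vertex can serve as the reassigned offline vertex $x$ of a length-$3$ augmentation at most once. The argument rests on two facts used in the proof of Lemma~\ref{lemma:middle}. First, the neighborhood of an online vertex is fixed once it arrives. Second, offline vertices never become free again after being matched. We then deduce the \spgreedy property from the observation that offline budgets never bind.

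\textbf{Where offline reassignments happen.} An offline vertex is reassigned only when it is the vertex $x$ in a length-$3$ augmentation $\apath{j}{x}{y}{i}$. Offline endpoints of $1$-paths and the free endpoint $i$ of a $3$-path are newly matched, not reassigned. So suppose $x$ is reassigned for the first time, at the arrival of $j$, via $\apath{j}{x}{y}{i}$. After this step $x$ is matched to $j$. By the direct-first rule of \lcp, at the arrival of $j$ the vertex $j$ had no free neighbor in $L$, since otherwise \lcp would have matched it directly.

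\textbf{The edge $xj$ is permanent.} Because $N(j)$ is fixed and matched offline vertices never become free, $j$ has no free neighbor at any later time. As long as $x$ is matched to $j$, the only way to change the partner of $x$ is an augmentation of the form $\apath{j'}{x}{j}{i'}$ with $i'\in N(j)$ free, and no such $i'$ ever exists. So the edge $xj$ is never removed, and $x$ is never reassigned a second time.

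\textbf{Consequence.} We argue by induction over the arrivals. When $s\ge1$, every offline vertex that appears as the middle vertex $x$ of a candidate $3$-path has been reassigned $0$ times, so it is never exhausted. Feasibility of an augmentation therefore depends only on the online budgets. \lcp always performs a feasible direct match if one exists, and otherwise a feasible $3$-path if one exists. Hence it is \spgreedy for $\omp_3(s,t)$, for every $s\ge1$, independently of the tie-breaking rule.

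The main point needing care is the direct-first step: the fact that $j$ had no free neighbor at its arrival is exactly what freezes $xj$. This is where the cost $-1$ for direct matches is essential, since a variant that does not prioritize direct matches could re-use $x$.
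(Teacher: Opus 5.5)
Your proof is correct and is essentially the paper's argument. The online vertex that receives $x$ at its first reassignment arrived with no free offline neighbor (by the direct-first rule) and can never gain one, so the edge it forms with $x$ can never be the middle edge of a length-$3$ augmentation. The paper phrases this as a contradiction at the first arrival causing a second reassignment, while you argue forward that this edge is permanent; your observation that offline budgets therefore never bind justifies the \spgreedy conclusion.
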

\begin{proof}
Suppose, for a contradiction, that \lcp reassigns some offline vertex more than once. Let $j$ be the first arrival at which this happens, and let $\apath{j}{x}{y}{i}$ be the $3$-path chosen by \lcp. Thus $x$ has already been reassigned before the arrival of $j$, and \lcp reassigns $x$ again along this path.

Since $xy$ is the current matched edge just before $j$ is processed and $x$ has already been reassigned, the edge $xy$ must have been created at the arrival of $y$, by a previous $3$-path augmentation in which $x$ changed its partner to $y$. Indeed, an already matched offline vertex can acquire a new online partner only when that online vertex arrives.

At the arrival of $y$, \lcp chose a $3$-path augmentation rather than a direct match. Since direct matches have cost $-1$ and $3$-paths have nonnegative cost, $y$ had no free offline neighbor when it arrived. By the basic properties of the model (stated in the preliminaries), online neighborhoods are revealed once and remain fixed, and offline vertices never become free after being matched. Hence $y$ cannot have a free offline neighbor later.

This contradicts the fact that the path $\apath{j}{x}{y}{i}$ uses the edge $yi$ with $i$ free just before $j$ is processed. Therefore, \lcp never reassigns an offline vertex more than once. Since \lcp chooses a minimum-cost feasible augmentation whenever one exists, it is \spgreedy.
\end{proof}
\begin{theorem}\label{thm:lcp-infinity}
For $t=\infty$, \lcp attains a competitive ratio of $2/3$ in $\omp_3(1,\infty)$.
\end{theorem}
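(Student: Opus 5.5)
We prove Theorem~\ref{thm:lcp-infinity} with the primal--dual argument of Theorem~\ref{thm:spgreedy}: the same updates (Algorithm~\ref{alg:dual-warmup}) and $\rho=2/3$. With $t=\infty$, online vertices are never exhausted. By Lemma~\ref{lemma:lcp-spgreedy}, \lcp never reassigns an offline vertex twice, so \lcp is \spgreedy, and an offline budget of $s=1$ never blocks an augmentation that \lcp wants to perform. The only new difficulty is that offline vertices \emph{can} become exhausted, so matched edges can be frozen, and the middle-edge lemma (Lemma~\ref{lemma:middle}) then no longer rules out every augmenting $3$-path.

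We keep the dual updates and the value accounting unchanged: each matching increment raises the dual objective by $3/2$, and postprocessing only redistributes. Feasibility for edges $ij$ with both endpoints matched, or with $j$ matched and $i$ free, goes through verbatim. Both endpoints cannot be free, since \lcp is \spgreedy. The remaining case is $i$ matched via $ik\in M$ and $j$ free. We must show that $k$ has no free neighbor at the end, so that postprocessing gives $\alpha_i=1$. If $ik$ is not frozen, Lemma~\ref{lemma:middle} applies as before. So assume $ik$ is frozen. Since $t=\infty$, this means $i$ is exhausted, i.e., $i$ has been reassigned once.

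In the frozen case we argue as in the proof of Lemma~\ref{lemma:lcp-spgreedy}. After its single reassignment, $i$ is matched to the arriving vertex that triggered the reassignment, and $i$ can never be reassigned again. Hence its final partner $k$ is exactly that arriving vertex. At its arrival, \lcp chose a $3$-path rather than a direct match, so $k$ had no free neighbor then. Because neighborhoods are fixed and matched offline vertices never become free, $k$ has no free neighbor at the end. Therefore the postprocessing step sets $\alpha_i=1$, and $\alpha_i+\beta_j\ge1$.

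The main point to get right is this frozen case: we must check that ``exhausted'' can only mean that the offline endpoint was reassigned, which uses $t=\infty$, and that the final partner of an exhausted offline vertex is the vertex that caused its reassignment. Once this is established, Lemma~\ref{lemma:pd-template} gives the $2/3$ ratio. Note that the direct-first rule is what is used here; the minimum-cost rule is not needed for $t=\infty$.
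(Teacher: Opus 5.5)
Your proof is correct. It differs from the paper's only in where the frozen edges are paid for.

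The paper does not reuse Algorithm~\ref{alg:dual-warmup}. In Algorithm~\ref{alg:dual-lcp-1inf}, a $3$-path $\apath{j}{x}{y}{i}$ sets $\alpha_x\gets 1$ and $\beta_j\gets\frac12$ at the moment $x$ becomes exhausted. The reduced value $\beta_j=\frac12$ is safe because, by the direct-first rule, $j$ had no free neighbor when it arrived. Postprocessing is then restricted to edges with no exhausted endpoint, so the critical case $(\alpha_i,\beta_j)=(\frac12,0)$ only involves non-frozen edges, and the middle-edge lemma alone closes it.

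You instead keep the warmup dual unchanged. You prove that an exhausted offline vertex stays matched forever to the arrival that reassigned it, and that this arrival had no free neighbor. Hence the unrestricted postprocessing flips exactly those edges to $(1,\frac12)$, and the two constructions end with the same dual vector.

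Your route makes explicit that, for \lcp in $\omp_3(1,\infty)$, the final matching has no augmenting path of length $1$ or $3$ at all. So the result is essentially a corollary of Theorem~\ref{thm:spgreedy}, or equivalently of the Hopcroft--Karp certificate.

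The paper's bookkeeping, which attaches dual values when a history component grows, is the form that extends to finite $t$. There, an edge frozen by an exhausted online root can keep a free neighbor, so genuine augmenting $3$-paths survive. Since the optimal ratio drops to $\gamma_t<\frac23$, no dual certifying $\frac23$ exists in that regime.
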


\begin{theorem}\label{thm:lcp-ratio}
For any finite $t\in \mathbb{N}$, \lcp has a competitive ratio of $(2\cdot 2^t-1)/(3\cdot 2^t-1)$ in $\omp_3(1,t)$.
\end{theorem}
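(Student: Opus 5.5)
The plan is a primal--dual proof through Lemma~\ref{lemma:pd-template} with $\rho=\gamma_t$. Each unit increase of the matching may then raise the dual objective by $1/\gamma_t = 1+\delta_t$, where $\delta_t = 2^t/(2\cdot 2^t-1)$. The reference point is the warmup dual of Algorithm~\ref{alg:dual-warmup}. There, a matched edge $xy$ carries total dual $3/2$, and postprocessing decides which endpoint receives the larger share according to whether $y$ still has a free neighbor.

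\textbf{Which matched edges are problematic.} I first classify final matched edges $xy\in M$ ($x\in L$, $y\in R$) whose two endpoints both need value $1$. That happens exactly when $y$ has a free neighbor and $x$ has a free neighbor $j$. By Lemma~\ref{lemma:lcp-spgreedy} we may work with $s=1$ while knowing that \lcp is \spgreedy, so Lemma~\ref{lemma:middle} applies and $xy$ must be frozen. There are two ways this can happen.

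\emph{Case $x$ exhausted.} Then the argument in the proof of Lemma~\ref{lemma:lcp-spgreedy} shows that $xy$ was created at the arrival of $y$ and that $y$ had no free neighbor then. Hence $y$ has no free neighbor now, and the warmup postprocessing ($\alpha_x=1$, $\beta_y=1/2$) suffices.

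\emph{Case $y$ exhausted.} So the only genuine obstruction is that $y$ has been reassigned exactly $t$ times while still having a free neighbor. Such edges need $\alpha_x+\beta_y$ close to $2$ rather than $3/2$. The missing $1/2$ must be financed by the extra budget $\delta_t-1/2$ collected at earlier augmentations.

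\textbf{Dual construction.} I would index dual shares by the reassignment counter $c_y\in\{0,\dots,t\}$ and maintain the invariant that a matched edge whose online endpoint has counter $c$ holds a prescribed amount $w_c$. The amounts $w_c$ increase with $c$, from $w_0 = 1+\delta_t$ (or $3/2$ plus slack) up to $w_t = 2$. When \lcp augments along $\apath{j}{x}{y}{i}$ with $c_y=c$, two edges are created:
\begin{itemize}
\item $jx$, a fresh edge;
\item $yi$, now carrying counter $c+1$.
\end{itemize}
The new budget $1+\delta_t$, together with the old amount $w_c$ of $xy$, must cover $w_0$ for $jx$ plus $w_{c+1}$ for $yi$. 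This gives the recurrence $w_{c+1}-w_0 \le 1+\delta_t - w_0 + (w_c - w_0)$. That recurrence alone does not double, so the naive version is not enough.

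The doubling, and hence the $2^t$, must come from the minimum-cost rule. When $y$ is reassigned at cost $c$, every other feasible $3$-path at that moment had cost $\ge c$. In particular the offline neighbor $x$ that $j$ steals was matched to a vertex of counter $\ge c$, or was unusable. I would formalize this with a charging (history) structure. An exhausted $y$ with counter $t$ is the root of a binary-like tree of prior augmentations: reaching counter $c+1$ requires the arrival $j$ to have found no path of cost $<c$. This forces the presence of another edge already at level $c$ that can share its surplus. The surplus per level then doubles, and the total slack telescopes to $\sum_c 2^c\,(\delta_t-\tfrac12)\cdot(\text{const}) = \tfrac12$ exactly at $\delta_t = 2^t/(2\cdot 2^t-1)$. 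I would pick $w_c$, or the split $(\alpha_x,\beta_y)$ at level $c$, as $1/2 + 2^{c}\epsilon$ with $\epsilon = 1/(2(2\cdot 2^t-1))$, and then verify the conservation equations.

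\textbf{Feasibility, and the main obstacle.} After the construction I would check feasibility case by case, as in Theorem~\ref{thm:spgreedy}. There are three kinds of edge:
\begin{itemize}
\item edges with both endpoints matched, which get $\ge 1/2$ each;
\item edges from a matched online vertex to a free offline vertex, which need $\beta=1$ unless postprocessing moved weight;
\item edges from a matched offline vertex to a free online vertex, which use Lemma~\ref{lemma:middle} and the two subcases above.
\end{itemize}
The main obstacle is the structural step: proving that the cost-minimality of \lcp really yields the doubling history for every exhausted online vertex, with disjoint charging so that no dual mass is used twice. I would first try an explicit description of the history graph of \lcp as vertex-disjoint trees, one per exhausted root, and prove by induction on $c$ that a vertex reaching counter $c$ owns $2^{c}$ units of slack.
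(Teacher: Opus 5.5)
Your outer frame matches the paper. You use the template with $1/\rho=1+\lambda$ and $\lambda$ (your $\delta_t$) $=2^t/(2\cdot 2^t-1)$, you use warmup-style postprocessing, and you observe that the one real obstruction is the active edge of a component whose root has been reassigned $t$ times and still has a free neighbor, so that edge must carry $\alpha+\beta=2$.

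The gap is in the step you yourself flag as the main obstacle, and the mechanism you propose there would fail. The minimum-cost rule does not force ``another edge already at level $c$'' to exist, and there is no binary charging tree. By Lemma~\ref{lem:history-components}, every history component of \lcp is a spider $T^Q$. For example, one direct match followed by $t$ arrivals, each adjacent only to the current active leaf, drives a single component from $T^0$ to $T^t$ with nothing else present. Your invariant is also indexed by the wrong quantity. The new edge $jx$ is frozen because $x$ is now exhausted, so $j$'s counter stays $0$ forever. Insisting that this edge carry $w_0$ is exactly what makes your recurrence non-growing.

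The missing idea is to index dual values by the level at which a vertex is created, and to give the new online leaf \emph{less}. In the paper, when a component becomes $T^Q$:
\begin{itemize}
\item the new online leaf gets $b_Q=1-a_{Q-1}$;
\item the middle offline vertex gets $1$;
\item the new active leaf gets $a_Q$;
\item the root keeps $b_0=1$.
\end{itemize}
Already for $t=1$ this gives $b_1=1/3$, so your check that both matched endpoints receive at least $1/2$ cannot be how these edges are covered. Conservation $a_Q+b_Q-a_{Q-1}=\lambda$ then reads $a_Q=2a_{Q-1}-(1-\lambda)$, and this is the doubling. The conditions $a_0=\lambda$ and $a_t=1$ force $\lambda=2^t/(2\cdot 2^t-1)$ and $a_Q=(2^t-1+2^Q)/(2\cdot 2^t-1)$.

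The minimum-cost rule enters only through a local feasibility claim. Suppose $\beta_j=b_{Q'}$, and $i\in N(j)$ is the non-postprocessed active leaf of a $T^Q$ component with $Q<t$ and root $y$. Then:
\begin{itemize}
\item $y$ has a final free neighbor $f$;
\item the edge $iy$ already existed when $j$ arrived, since otherwise $i$ was free and $j$ would have been matched directly;
\item hence $\apath{j}{i}{y}{f}$ was feasible at that time with cost at most $Q$, so $Q'-1\le Q$.
\end{itemize}
Since $a$ is nondecreasing, $\alpha_i+\beta_j\ge a_{Q'-1}+b_{Q'}=1$.

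Separately, your closed form $1/2+2^c\epsilon$ equals $(1+\lambda)/2<1$ at $c=t$, which contradicts your own requirement at that level. The correct value is $a_c=1/2+(2^{c+1}-1)\epsilon$, the cumulative sum your telescoping suggests.
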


The proofs of Theorems~\ref{thm:lcp-infinity} and~\ref{thm:lcp-ratio} are given in the remainder of this section.
Since \lcp never reassigns an offline vertex more than once, the same competitive ratio holds for \lcp in $\omp_3(s,t)$ for every $s\ge 1$.

The analysis of these guarantees uses a primal--dual approach. We use the
middle-edge lemma (Lemma~\ref{lemma:middle}) from the preliminaries, together
with a structural description of the matching history. Before proving the
primal--dual bounds, we record this structural description. It is used in the
proof of Theorem~\ref{thm:lcp-ratio} and later in the upper-bound construction.

The special case $t=1$ admits a shorter self-contained proof; we include it in Appendix~\ref{subsec:lcp-one-one}, while the main text proves Theorem~\ref{thm:lcp-ratio} directly for all finite $t$.

\subsection{History graph and component structure}\label{subsec:history-graphs}

We will use the same structural description of the matching history in the proof of Theorem~\ref{thm:lcp-ratio} and in our upper-bound construction. We state it here as a standalone tool.

Consider any execution of an algorithm for $\omp_3$ in which no offline
vertex is reassigned more than once (equivalently, each offline vertex in $L$
uses at most one unit of reassignment budget). Besides the current matching, we
track the graph formed by all edges that have ever belonged to the
matching. If $M_k$ denotes the matching after the first $k$ arrivals, we
define the history graph
\[
H_k:=\bigcup_{\ell=1}^k M_\ell.
\]
Note that $H_k$ only grows with $k$, and (since we only use length-$3$
augmenting paths and no offline vertex is reassigned more than once) an edge
that leaves the matching can never re-enter it later.

We will show that, provided no offline vertex uses more than one unit of reassignment budget, the connected components of $H_k$ belong to a specific family of rooted trees.
To describe them, we define a family of rooted trees $T^Q$ as follows. Let $T^0$ be a single edge, viewed as a rooted bipartite tree with the root on the online side. For $Q\ge 1$, let $T^Q$ be the rooted tree obtained by taking a root vertex $v$, attaching to $v$ exactly one path of length $1$, and additionally attaching $Q$ vertex-disjoint paths of length $2$ to $v$.

\begin{lemma}[Tree transition]\label{lem:tree-transition}
Let $Q\ge 1$. Let $C$ be a rooted tree isomorphic to $T^{Q-1}$, with root
$y\in R$. Suppose that $x\in L$ is the unique leaf of $C$ adjacent to $y$, and
that the edge $xy$ belongs to the current matching.

Let $j\in R$ be the newly arrived online vertex, and let $i\in L$ be an offline
vertex that is unmatched just before $j$ is processed. Suppose that $i$ and
$j$ do not belong to $C$. If the algorithm augments along the path $\apath{j}{x}{y}{i}$, then
\[
    C' := C\cup\{xj,yi\}
\]
is isomorphic to $T^Q$. Its root is still $y$ (i.e., $\rootof(C')=y$), its unique leaf in $L$ adjacent
to the root is $i$, and the edge $yi$ belongs to the new matching.
\end{lemma}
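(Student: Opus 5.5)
The plan is to verify the claim directly from the definition of $T^Q$, by describing the two edges $xj$ and $yi$ as local modifications of $C$. First I would fix the shape of $C\cong T^{Q-1}$ rooted at $y\in R$. For $Q-1\ge 1$, $y$ has exactly one pendant path of length $1$, which is the edge $yx$ with $x$ a leaf, and $Q-1$ vertex-disjoint pendant paths of length $2$, say $y\,u_k\,w_k$ for $k=1,\dots,Q-1$. For $Q=1$, $C=T^0$ is the single edge $yx$, which fits the same description with zero length-$2$ paths. Since $G$ is bipartite and $y\in R$, every neighbour of $y$ in $C$, namely $x$ and the $u_k$, lies in $L$, and every $w_k$ lies in $R$. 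This is consistent with $x\in L$, as the hypothesis states.

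Next I would check that $C'$ is again a tree. The vertices $i$ and $j$ are not in $C$, and they are distinct because one is in $L$ and the other in $R$. Adding $xj$ attaches the new vertex $j$ as a pendant leaf at $x$. Adding $yi$ then attaches the new vertex $i$ as a pendant leaf at $y$. Attaching a fresh leaf to a tree preserves being a tree, so $C'$ is a tree. Both new edges are edges of $G$, since they appear on the augmenting path $\apath{j}{x}{y}{i}$, and they respect the bipartition because $j\in R$ is adjacent to $x\in L$ and $i\in L$ is adjacent to $y\in R$.

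Then I would read off the rooted structure at $y$. The former length-$1$ path $y\,x$ becomes the length-$2$ path $y\,x\,j$. The $Q-1$ old length-$2$ paths are unchanged, and they remain vertex-disjoint from $y\,x\,j$ because $j\notin C$. Finally, $y\,i$ is a new path of length $1$. So $y$ carries exactly one pendant path of length $1$ and exactly $Q$ vertex-disjoint pendant paths of length $2$, with no other vertices in $C'$. This is precisely $T^Q$ rooted at $y$, so $\rootof(C')=y$.

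For the uniqueness claim, I would note that the neighbours of $y$ in $C'$ are $i$, $x$ and the $u_k$. Among these, $x$ and the $u_k$ each have a further child ($j$ and $w_k$ respectively), so none of them is a leaf. Hence $i$ is the unique leaf adjacent to the root, and it lies in $L$. The edge $yi$ belongs to the new matching by the definition of the augmentation $M\gets (M\setminus\{xy\})\cup\{jx,yi\}$.

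I expect no real obstacle. The only points needing care are the degenerate case $Q=1$ and the argument that $x$ stops being a leaf, which together give the uniqueness of $i$. I would handle both explicitly as above.
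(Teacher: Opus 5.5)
Your proof is correct and follows the paper's argument: $xj$ extends the length-$1$ branch $yx$ into the length-$2$ branch $yxj$, $yi$ adds a new length-$1$ branch at $y$, and all other branches are unchanged. The only difference is that you spell out details the paper leaves implicit, namely the degenerate case $Q=1$, that $C'$ is still a tree, and why $x$ stops being a leaf so that $i$ is the unique leaf adjacent to the root.
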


\begin{proof}
The augmentation replaces the matched edge $xy$ by the two matched edges
$xj$ and $yi$. The edge $xj$ extends the old length-$1$ branch $yx$ into the
length-$2$ branch $yxj$, and the edge $yi$ creates a new length-$1$ branch at
the same root $y$. All other branches of $C$ are unchanged. Hence $C'$ is
isomorphic to $T^Q$. The root remains $y$ (i.e., $\rootof(C')=y$), the unique leaf in $L$ adjacent to
the root is $i$, and the edge $yi$ belongs to the new matching.
\end{proof}

We now apply this transition property to the history graph. Fix an instance $(G,\pi)$ and consider an arbitrary execution of some algorithm where no offline vertex uses more than one unit of reassignment budget. For $k=0,1,\ldots,n$, where $n=|R|$, let $M_k$ and $H_k$ be defined as above.

\begin{lemma}\label{lem:history-components}
For every $k$, each connected component of $H_k$ is isomorphic to
$T^Q$ for some $Q\ge0$. If the online budget is finite and equal to
$t$, then necessarily $Q\le t$. If a component is isomorphic to $T^Q$,
we say it is of type $T^Q$. Moreover, if a component is of type $T^Q$,
the root has been reassigned exactly $Q$ times up to time $k$.

If $C$ is isomorphic to $T^Q$, then $\rootof(C)$ has degree $Q+1$ in $H_k$ and has
been reassigned exactly $Q$ times (equivalently, its matched neighbor changes
exactly $Q$ times over the execution). All
non-leaf vertices in $L\cap V(C)$ have degree $2$ and have been reassigned once. In particular, when the offline budget is $s=1$, they are exhausted. There is
exactly one leaf in $L$, and it is adjacent to the root. The edges of $M_k$ in
$C$ are exactly the edges incident to the leaves of $C$. Moreover, among the
matching edges in $C$, the only one whose offline endpoint has not been reassigned is
the edge incident to the unique leaf in $L$. In particular, when the offline budget is $s=1$, this is the only matching edge whose offline endpoint is not exhausted.
\end{lemma}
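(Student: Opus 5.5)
We argue by induction on $k$. We strengthen the hypothesis so that it carries the full list of invariants in the statement for every component of $H_k$:
\begin{itemize}
\item the component is isomorphic to $T^Q$ with root in $R$;
\item the root has degree $Q+1$ and has been reassigned exactly $Q$ times;
\item every internal offline vertex has degree $2$ and has been reassigned exactly once;
\item there is a unique offline leaf, and it is adjacent to the root;
\item the edges of $M_k$ inside the component are exactly the leaf edges;
\item the only matched offline vertex that has not been reassigned is that unique leaf.
\end{itemize}
Vertices not yet touched by any matching edge are isolated. We either exclude them from ``components'' or treat them as trivially fine; we only track components that contain edges. For $k=0$, $H_0$ is empty, so the claim is vacuous.

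In the inductive step, consider the arrival of $j$. Since $j$ is new, it is isolated in $H_{k-1}$. There are three cases.
\begin{enumerate}
\item If $j$ is left unmatched, then $H_k=H_{k-1}$ and nothing changes.
\item If $j$ is matched directly to a free $i$, then $i$ is isolated in $H_{k-1}$: a free offline vertex was never matched, because offline vertices never become free again. Hence the new component is the single edge $ij$, which is $T^0$ rooted at $j$ with zero reassignments. All invariants hold trivially.
\item If the algorithm augments along $\apath{j}{x}{y}{i}$, then again $i$ and $j$ are isolated. Let $C$ be the component containing the matched edge $xy$. This case is handled below.
\end{enumerate}

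The main step is to show that, in the third case, $C$ is of type $T^{Q-1}$ with root $y$, for some $Q\ge1$, and that $x$ is the unique offline leaf adjacent to $y$. Only then can Lemma~\ref{lem:tree-transition} be applied. We use the hypothesis that no offline vertex is reassigned more than once, so $x$ has not been reassigned before this step. By the inductive invariant, the only matched offline vertex in $C$ that has not been reassigned is the unique leaf in $L$, which is adjacent to the root. Thus $x$ is that leaf. Since $xy\in M_{k-1}$ and the leaf's unique edge goes to the root, we get $y=\rootof(C)$. Let $Q-1$ be the type of $C$. Lemma~\ref{lem:tree-transition} then gives that $C'=C\cup\{xj,yi\}$ is of type $T^Q$ with root $y$ and new offline leaf $i$.

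It remains to recheck the remaining invariants in the third case.
\begin{itemize}
\item The root $y$ gained one reassignment and one degree, giving $Q$ reassignments and degree $Q+1$.
\item The vertex $x$ now has degree $2$, has been reassigned once, and is no longer a leaf.
\item The new leaves are $j$, at the end of the branch $yxj$, and $i$.
\item The matched edges are $xj$ and $yi$ together with the previous leaf edges other than $xy$. These are exactly the leaf edges of $C'$: the leaves of $C'$ are the old online leaves at the ends of the length-$2$ branches, plus $j$ and $i$.
\item The only matched offline vertex in $C'$ that has not been reassigned is $i$.
\end{itemize}
Since the graph changes only in $C$, all other components are unaffected.

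For the budget claim, the root of a $T^Q$ component has been reassigned $Q$ times, so $Q\le t$ when the online budget is $t$. When $s=1$, every internal offline vertex has been reassigned once and is therefore exhausted. This gives the final ``in particular'' statements.
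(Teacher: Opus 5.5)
Your proof is correct and follows the paper's argument. Like the paper, you induct on $k$ with the same three cases. You use the hypothesis that no offline vertex is reassigned twice, together with the inductive invariant, to force $x$ to be the unique offline leaf and $y$ the root, and then apply Lemma~\ref{lem:tree-transition}. You are a bit more careful than the paper in two places: you check explicitly that $i$ and $j$ are isolated in $H_{k-1}$, hence outside $C$ as the tree-transition lemma requires, and you set aside the isolated vertices of $H_k$, which are not of type $T^0$.
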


\begin{proof}
We proceed by induction on $k$. For $k=0$, the graph $H_0$ is empty. Assume the
lemma holds for $H_{k-1}$, and let $j\in R$ be the online vertex arriving at
step $k$. If $j$ remains unmatched, then $H_k=H_{k-1}$, and there is nothing to
prove.

If $\mathcal A$ matches $j$ directly to an unmatched neighbor $i\in L$, the
edge $ij$ creates a new component isomorphic to $T^0$. The vertex $j$ is the
only online vertex in this component, and hence $\rootof(C)=j$. The edge $ij$ is
the only matching edge in the component, and it is incident to the unique leaf
in $L$. Since $j$ is newly arrived, its incident matching edge has changed $0$
times so far. Thus the lemma holds in this case.

It remains to consider the case in which $\mathcal A$ matches $j$ by augmenting
along a $3$-path $\apath{j}{x}{y}{i}$. Let $C$ be the component of $H_{k-1}$
containing the matching edge $xy$.

Since the execution reassigns no offline vertex more than once, the
offline vertex $x$ cannot have been reassigned before this augmentation:
otherwise the augmentation $\apath{j}{x}{y}{i}$ would reassign $x$ a second
time. By the induction hypothesis, $C$ is isomorphic to $T^Q$ for some
$Q$, and the only matching edge in $C$ whose offline endpoint has not
been reassigned is the edge incident to the unique leaf in $L$.
Therefore $x$ is this leaf, $y$ is the root of $C$, and $xy$ is the
current matching edge incident to the unique leaf in $L$.

If the online budget is finite and equal to $t$, then $Q\le t-1$,
because the augmentation is feasible and reassigns the root $y$ one
additional time.

By Lemma~\ref{lem:tree-transition}, the new component
$C\cup\{xj,yi\}$ is isomorphic to $T^{Q+1}$. Its root is still $y$ (i.e., $\rootof(C\cup\{xj,yi\})=y$), which has
its $(Q+1)$-st online reassignment; its unique
leaf in $L$ adjacent to the root is $i$, and the edge $yi$ belongs to the new
matching.

The matching edges in the new component are exactly the edges incident to its
leaves: the old edge $xy$ is removed from the matching, and the new matching
edges are $xj$ and $yi$. All non-leaf vertices in $L$ have degree $2$ and have been reassigned once.
Thus, when the offline budget is $s=1$, they are exhausted. All other components are unchanged, so the lemma follows.
\end{proof}

In the remainder of the paper, we use this structure twice. First, in the primal--dual analysis of \lcp, the type $T^Q$ determines the dual values assigned to the active offline leaf and to the online vertices created at level $Q$. Second, in the upper bound analysis, the adversary forces the algorithm to build the same history components.

\subsection{\texorpdfstring{The case $\omp_3(1,\infty)$}{The case OMP3(1,infinity)}}

We start with the case $t=\infty$, which admits a short primal--dual proof. In this regime, no online vertex is ever exhausted, so frozen edges arise only from the offline budget $s=1$, and the middle-edge lemma (Lemma~\ref{lemma:middle}) is particularly convenient for establishing dual feasibility. The dual assignment is constructed online, together with a final postprocessing step on the edges of the resulting matching.

\begin{proof}[Proof of Theorem~\ref{thm:lcp-infinity}]
Fix an instance $(G,\pi)$ with $G=(L,R,E)$ and let $\mathcal A$ denote the \lcp
algorithm for $\omp_3(1,\infty)$. Let $M$ be the final matching produced by
$\mathcal A$. We construct a dual solution $(\alpha,\beta)$ for $D_G$, starting
from the all-zero solution. The online updates and the final postprocessing
step are given in Algorithm~\ref{alg:dual-lcp-1inf}.

\begin{algorithm}[ht!]
\SetAlgoLined
\DontPrintSemicolon
\KwIn{Instance $(G,\pi)$, algorithm $\mathcal A$ (\lcp for $\omp_3(1,\infty)$)}
\KwOut{Dual vector $(\alpha_i,\beta_j)_{i\in L,\,j\in R}$}
\lForAll{$i\in L, j\in R$}{Set $\alpha_i\gets 0$, $\beta_j\gets 0$}
\ForAll(\tcp*[f]{dual update steps}){$j\in R$ in arrival order, while running $\mathcal A$}{
  \lIf{$\mathcal A$ matches $j$ directly to some $i\in L$}{
    Set $\alpha_i\gets \frac12$ and $\beta_j\gets 1$
  }
  \lElseIf{$\mathcal A$ matches $j$ using a feasible $3$-path $\apath{j}{x}{y}{i}$}{
    Set $\alpha_i\gets \frac12$, $\alpha_x\gets 1$, $\beta_j\gets \frac12$ and $\beta_y\gets 1$
  }
}
Let $M$ be the final matching constructed by $\mathcal A$\;
\ForAll(\tcp*[f]{postprocessing step}){$ij\in M$ with $j\in R$}{
  \If{neither $i$ nor $j$ is exhausted, and $j$ has no free neighbor in $L$ at the end}{
    Set $\alpha_i\gets 1$ and $\beta_j\gets \frac12$\;
  }
}
\caption{$\dual$ for $\omp_3(1,\infty)$}
\label{alg:dual-lcp-1inf}
\end{algorithm}

We first verify the change in the dual objective at every step. Each time
$\mathcal A$ matches an arrival $j$ directly, the dual objective increases by
exactly $3/2$: $j$ is new, so $\beta_j$ increases from $0$ to $1$, and the
chosen neighbor $i$ is free, so $\alpha_i$ increases from $0$ to $1/2$.

Each time the algorithm matches an arrival $j$ using a $3$-path $\apath{j}{x}{y}{i}$, the augmenting path must be feasible, which implies that the middle edge $xy$ is not frozen. Since $x$ is an offline vertex and offline vertices have budget $1$, the vertex $x$ has not been reassigned before this update.

We claim that, immediately before the update,
\[
(\alpha_x,\beta_y)=\left(1/2,1\right).
\]
Indeed, since $x$ has not been reassigned before, the edge $xy$ was created either when $y$ was matched directly to $x$, or by an earlier $3$-path augmentation in which $x$ was the free endpoint. In both cases, Algorithm~\ref{alg:dual-lcp-1inf} assigned $(\alpha_x,\beta_y)=(1/2,1)$. Moreover, since $x$ has not been used as the middle offline vertex before the present update, $\alpha_x$ has not subsequently changed. This proves the claim.

After the update, the values are
\[
    (\alpha_i,\alpha_x,\beta_j,\beta_y)=\left(1/2,1,1/2,1\right).
\]
Thus the dual objective increases by exactly $3/2$: the new online vertex $j$ contributes $1/2$, the free offline endpoint $i$ contributes $1/2$, and $\alpha_x$ increases from $1/2$ to $1$.

The postprocessing step only redistributes $1/2$ within a matched edge $ij\in M$, changing $(\alpha_i,\beta_j)$ from $(1/2,1)$ to $(1,1/2)$, and hence it does not change the total dual objective value. Therefore the final dual objective is at most $\frac32\,|M|$.

To conclude via Lemma~\ref{lemma:pd-template} with $\rho=2/3$, it remains to show
that $(\alpha,\beta)$ is feasible for $D_G$, namely that
$\alpha_i+\beta_j\ge 1$ for every edge $ij\in E$.

Fix an arbitrary edge $ij\in E$. We rule out all possibilities with
$\alpha_i+\beta_j<1$. The only possible final values are
$\alpha_i\in\{0,1/2,1\}$ and $\beta_j\in\{0,1/2,1\}$. Hence the only
pairs to consider are
\[
    (0,0),\qquad
    \left(0,1/2\right),\qquad
    \left(1/2,0\right).
\]

\begin{itemize}
\item $(\alpha_i,\beta_j)=(0,0)$.
This would mean that both $i$ and $j$ are free in the final matching $M$.
Since matched vertices never become unmatched, $i$ was free when $j$ arrived.
Thus $\mathcal A$ would have matched $j$ directly to $i$, a contradiction.

\item $(\alpha_i,\beta_j)=(0,\frac12)$.
If $\beta_j=\frac12$ was assigned during the online phase, then $j$ was matched
using a $3$-path. This is impossible, since $\alpha_i=0$ implies that $i$ is
free in the final matching, hence was free when $j$ arrived, and \lcp prioritizes
direct matches. If instead $\beta_j=\frac12$ was assigned during postprocessing,
then $j$ had no free neighbor in $L$ at the end, again contradicting that
$i\in N(j)$ is free.

\item $(\alpha_i,\beta_j)=(\frac12,0)$.
Since $\alpha_i\ne 0$, the vertex $i$ is matched in $M$; let $y\in R$ be such
that $iy\in M$. Since $\alpha_i=\frac12$ after postprocessing, the edge $iy$ was
not modified during the postprocessing step. Moreover, $i$ is not exhausted, and
$y$ is not exhausted because $t=\infty$. Hence $iy$ is not frozen. Therefore,
the fact that the postprocessing step did not act on $iy$ implies that $y$ has
a free neighbor $k\in L$ at the end.

Since $\beta_j=0$, the vertex $j$ is free in $M$. Thus $\apath{j}{i}{y}{k}$ is an
$M$-augmenting $3$-path whose middle edge $iy$ is not frozen. This contradicts
the middle-edge lemma, Lemma~\ref{lemma:middle}.
\end{itemize}

Thus $\alpha_i+\beta_j\ge 1$ for every edge $ij\in E$, so $(\alpha,\beta)$ is
feasible for $D_G$. Lemma~\ref{lemma:pd-template} yields $|M|\ge \frac23\,\nu(G)$,
proving that \lcp is $\frac23$-competitive for $\omp_3(1,\infty)$.\qedhere
\end{proof}

\subsection{\texorpdfstring{The case $\omp_3(1,t)$ for finite $t$}{The case OMP3(1,t) for finite t}}
\label{subsec:lcp-finite-t}

We now analyze \lcp for $\omp_3(1,t)$ with finite $t$. By
Lemma~\ref{lem:history-components}, every history component created by
\lcp has type $T^Q$ for some $Q\le t$. A direct match creates a
component of type $T^0$, and a feasible $3$-path augmentation increases
its type from $T^{Q-1}$ to $T^Q$. When \lcp uses a feasible $3$-path,
the minimum-cost rule chooses one whose middle online vertex has the
smallest possible reassignment count. Our dual updates are defined by two sequences
\[
    a=(a_0,\ldots,a_t), \qquad b=(b_0,\ldots,b_t),
\]
with $b_0=1$. These sequences specify the values assigned when a
history component grows. When a component of type $T^Q$ is formed, its
unique offline leaf receives value $a_Q$, and the new online leaf
receives value $b_Q$. The other offline vertices in the component have
already been used as middle vertices and receive value $1$.

Figure~\ref{fig:dual-in-components} illustrates the resulting dual values
in the first few component types. We prove Theorem~\ref{thm:lcp-ratio} with these updates.

Note that the formula gives $1/2$ for $t=0$, gives $3/5$ for $t=1$, and converges to the $t=\infty$ guarantee of Theorem~\ref{thm:lcp-infinity}. Appendix~\ref{subsec:lcp-one-one} gives a self-contained proof of the special case $t=1$, which may be useful to follow the general case.

\begin{proof}[Proof of Theorem~\ref{thm:lcp-ratio}]
If $t=0$, no online vertex can be reassigned, hence no length-$3$ augmentation is feasible. In this case, \lcp reduces to the
standard greedy matching algorithm without recourse, which guarantees a
competitive ratio of
\[
    \frac12=\frac{2\cdot 2^0-1}{3\cdot 2^0-1}.
\]

Let $t\ge 1$. Fix an instance $(G,\pi)$ with $G=(L,R,E)$ and let
$\mathcal A$ be the \lcp algorithm for $\omp_3(1,t)$. Let $M$ be the final
matching produced by $\mathcal A$ on $(G,\pi)$. We construct, alongside the
matching built by $\mathcal A$, a dual solution $(\alpha,\beta)$ for the dual
LP $D_G$, starting from the zero solution. The construction is given in
Algorithm~\ref{alg:dual-lcp-1t}. The algorithm depends on two sequences
$a=(a_0,a_1,\dots,a_t)$ and $b=(b_0,b_1,\dots,b_t)$, with $b_0=1$. These
sequences will be chosen later to satisfy the primal--dual template,
Lemma~\ref{lemma:pd-template}.

\begin{algorithm}[ht!]
  \SetAlgoLined
  \DontPrintSemicolon
  \KwIn{Instance $G=(L,R,E)$, algorithm $\mathcal A$}
  \KwOut{Dual vector $(\alpha_i,\beta_j)_{i\in L,j\in R}$}
  \lForAll{$i\in L$, $j\in R$}{Set $\alpha_i\gets 0$, $\beta_j\gets 0$}
  \ForAll(\tcp*[f]{dual update steps}){$j\in R$ in arrival order, while running $\mathcal A$}{
    \lIf{$\mathcal A$ matches $j$ directly to $i\in L$}{
      Set $(\alpha_i,\beta_j)\gets(a_0,b_0)$
    }
    \ElseIf{$\mathcal A$ matches $j$ using the $3$-path $\apath{j}{x}{y}{i}$}{
      Let $Q$ be the number of times that $y$ has been reassigned after
      augmenting along $\apath{j}{x}{y}{i}$\;
      Set $(\alpha_i,\alpha_x,\beta_j,\beta_y)\gets(a_Q,1,b_Q,1)$
    }
  }
  Let $M$ be the final matching constructed by $\mathcal A$\;
  \ForAll(\tcp*[f]{postprocessing step}){$ij\in M$ with $j\in R$}{
    \If{$i$ and $j$ are not exhausted and $j$ has no free neighbor in $L$}{
      Flip $i$'s and $j$'s values, i.e.,
      $(\alpha_i,\beta_j)\gets(1,\alpha_i)$\;
    }
  }
  \caption{$\dual(a,b)$}
  \label{alg:dual-lcp-1t}
\end{algorithm}

By Lemma~\ref{lem:history-components}, if an edge $ij\in M$ is considered in the
postprocessing step, then $i$ is the unique offline leaf of its component and
$j$ is the root of that component. Indeed, all other matched offline vertices
in the component are exhausted. Since the root of every component is formed by
a direct match, the algorithm assigned it $\beta_j=b_0=1$. Thus the assignment
$(1,\alpha_i)$ only redistributes the dual values on $ij$ and does not change
their sum.

\begin{figure}[ht]
\centering
\scalebox{0.68}{
\begin{tikzpicture}

\begin{scope}[xshift=0cm]
\node at (1,3) {$T^0$};
\Vertex[x=0,y=2,size=.5,color=black,style=solid,label={$(l_0,a_0)$},position=left]{T1l1}
\Vertex[x=2,y=2,size=.5,color=black,style=solid,label={$(r_0,b_0=1)$},position=right]{T1r1}
\Edge[style=solid](T1l1)(T1r1)
\end{scope}

\draw[->,>=latex] (2.9,1) -- (4.4,1);

\begin{scope}[xshift=5.7cm]
\node at (1,3) {$T^1$};
\Vertex[x=0,y=2,size=.5,color=black,style=solid,label={$(l_0,1)$},position=left]{T2l1}
\Vertex[x=0,y=0,size=.5,color=black,style=solid,label={$(l_1,a_1)$},position=left]{T2l2}
\Vertex[x=2,y=2,size=.5,color=black,style=solid,label={$(r_0,1)$},position=right]{T2r1}
\Vertex[x=2,y=0,size=.5,color=black,style=solid,label={$(r_1,b_1)$},position=right]{T2r2}
\Edge[style=dashed](T2l1)(T2r1)
\Edge[style=solid](T2l1)(T2r2)
\Edge[style=solid](T2l2)(T2r1)
\end{scope}

\draw[->,>=latex] (8.3,1) -- (9.8,1);

\begin{scope}[xshift=11.4cm]
\node at (1,3) {$T^2$};
\Vertex[x=0,y=2,size=.5,color=black,style=solid,label={$(l_0,1)$},position=left]{T3l1}
\Vertex[x=0,y=0,size=.5,color=black,style=solid,label={$(l_1,1)$},position=left]{T3l2}
\Vertex[x=0,y=-2,size=.5,color=black,style=solid,label={$(l_2,a_2)$},position=left]{T3l3}
\Vertex[x=2,y=2,size=.5,color=black,style=solid,label={$(r_0,1)$},position=right]{T3r1}
\Vertex[x=2,y=0,size=.5,color=black,style=solid,label={$(r_1,b_1)$},position=right]{T3r2}
\Vertex[x=2,y=-2,size=.5,color=black,style=solid,label={$(r_2,b_2)$},position=right]{T3r3}
\Edge[style=dashed](T3l1)(T3r1)
\Edge[style=solid](T3l1)(T3r2)
\Edge[style=dashed](T3l2)(T3r1)
\Edge[style=solid](T3l2)(T3r3)
\Edge[style=solid](T3l3)(T3r1)
\end{scope}

\draw[->,>=latex] (13.7,1) -- (15.2,1);

\begin{scope}[xshift=17.1cm]
\node at (1,3) {$T^3$};
\Vertex[x=0,y=2,size=.5,color=black,style=solid,label={$(l_0,1)$},position=left]{T4l1}
\Vertex[x=0,y=0,size=.5,color=black,style=solid,label={$(l_1,1)$},position=left]{T4l2}
\Vertex[x=0,y=-2,size=.5,color=black,style=solid,label={$(l_2,1)$},position=left]{T4l3}
\Vertex[x=0,y=-4,size=.5,color=black,style=solid,label={$(l_3,a_3)$},position=left]{T4l4}
\Vertex[x=2,y=2,size=.5,color=black,style=solid,label={$(r_0,1)$},position=right]{T4r1}
\Vertex[x=2,y=0,size=.5,color=black,style=solid,label={$(r_1,b_1)$},position=right]{T4r2}
\Vertex[x=2,y=-2,size=.5,color=black,style=solid,label={$(r_2,b_2)$},position=right]{T4r3}
\Vertex[x=2,y=-4,size=.5,color=black,style=solid,label={$(r_3,b_3)$},position=right]{T4r4}
\Edge[style=dashed](T4l1)(T4r1)
\Edge[style=solid](T4l1)(T4r2)
\Edge[style=dashed](T4l2)(T4r1)
\Edge[style=solid](T4l2)(T4r3)
\Edge[style=dashed](T4l3)(T4r1)
\Edge[style=solid](T4l3)(T4r4)
\Edge[style=solid](T4l4)(T4r1)
\end{scope}

\end{tikzpicture}
}
\caption{Components $T^0, T^1, T^2, T^3$ obtained by augmenting along the paths
$\apath{r_0}{l_0}$, $\apath{r_1}{l_0}{r_0}{l_1}$, $\apath{r_2}{l_1}{r_0}{l_2}$, and $\apath{r_3}{l_2}{r_0}{l_3}$, respectively. Each
vertex is labeled with its name and the dual values immediately after the
component is formed.}
\label{fig:dual-in-components}
\end{figure}

The sequences $a$ and $b$ specify the dual values assigned to vertices at each level of the history components. Requiring each successful update to increase the dual objective by at most $1/\rho$, and enforcing dual feasibility on the edges of the history components, yields the following recurrences.

Let $0<\rho\le 1$ be the competitive ratio we aim \lcp to achieve, and define
$\lambda:=1/\rho-1$. To use Lemma~\ref{lemma:pd-template}, we require that each
dual update increases the dual objective by exactly $1/\rho=1+\lambda$, while the
postprocessing step preserves the total dual value.

A direct-matching update increases the dual objective by $1+a_0$. A $3$-path
update that creates a component of type $T^Q$ increases the dual objective by
$a_Q+b_Q+1-a_{Q-1}$. Thus we impose
\begin{equation}
\begin{aligned}
a_0 &= \lambda,\\
a_Q+b_Q-a_{Q-1} &= \lambda \qquad \forall\, Q\in\{1,\ldots,t\}.
\label{eq:recurrence1}
\end{aligned}
\end{equation}

It remains to enforce dual feasibility. Fix an edge $ij\in E$, with $i\in L$ and $j\in R$. We must show $\alpha_i+\beta_j\ge 1$.

We first list the possible final dual values.
An offline vertex has value in $\{0,1,a_0,\ldots,a_t\}$: it has value $0$ if it is unmatched, value $1$ if it is exhausted or postprocessed, and value $a_Q$ if it is the unique offline leaf of a component of type $T^Q$ that is not postprocessed.
An online vertex has value in $\{0,1,b_1,\ldots,b_t,a_0,\ldots,a_{t-1}\}$: it has value $0$ if it is unmatched, value $b_Q$ if it is an online leaf created in a $3$-path update forming $T^Q$, value $a_Q$ if it receives this value in the postprocessing step, and value $1$ if it is the root of a component before postprocessing.

We now rule out all pairs of values that could violate dual feasibility.

\begin{enumerate}[leftmargin=*, itemsep=2pt]
\item If $\alpha_i=1$ or $\beta_j=1$, then $\alpha_i+\beta_j\ge 1$.

\item The pair $(\alpha_i,\beta_j)=(0,0)$ is impossible. Otherwise both
$i$ and $j$ are unmatched in the final matching, so $ij$ is an augmenting
path of length $1$, contradicting the fact that \lcp leaves no direct
augmenting path.

\item The pair $(\alpha_i,\beta_j)=(0,b_Q)$, with $Q\ge 1$, is impossible. If
$\beta_j=b_Q$, then $j$ was matched by a $3$-path when it arrived, so it had no
free neighbor at that time. But $\alpha_i=0$ means that $i$ is unmatched in the
final matching, hence was free when $j$ arrived.

\item Consider $(\alpha_i,\beta_j)=(a_Q,0)$. Let $y$ be the vertex matched to
$i$ in $M$. Then $i$ is the unique offline leaf of a component of type $T^Q$,
and $\rootof(C)=y$. If $Q<t$, then neither endpoint of $iy$ is exhausted. Since
$i$ has the free neighbor $j$, the path structure would give an
augmenting 3-path through the middle edge $iy$ unless $y$ has no free
neighbor. Because neither endpoint of $iy$ is exhausted, Lemma~\ref{lemma:middle}
implies that $y$ has no free neighbor at the end. Hence the edge $iy$ should have been
postprocessed, contradicting $\alpha_i=a_Q$. Therefore the only remaining case
is $Q=t$, and we impose
\begin{equation}
a_t=1.
\label{eq:atone}
\end{equation}

\item Consider $\beta_j=a_Q$ for some $Q\in\{0,\ldots,t-1\}$. This value can
only arise from the postprocessing step. Thus $j$ is matched and all neighbors
of $j$ are matched at the end, so any neighbor $i$ of $j$ satisfies
$\alpha_i\ne 0$. It is therefore enough to impose
\begin{equation}
a_Q\ge \frac12 \qquad \forall\, Q\in\{0,\ldots,t\}.
\label{eq:ahalf}
\end{equation}

\item Finally, consider $(\alpha_i,\beta_j)=(a_Q,b_{Q'})$ with $Q<t$ and $1\le Q'\le t$. We will use the following claim.

\begin{claim}\label{clm:Qprime-bound}
Suppose that $ij\in E$, $\alpha_i=a_Q$ with $Q<t$, and $\beta_j=b_{Q'}$ with $Q'\ge 1$. Then $Q'\le Q+1$.
\end{claim}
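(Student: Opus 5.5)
We argue from the arrival time of $j$ and the minimum-cost rule of \lcp.

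Since $\beta_j=b_{Q'}$ with $Q'\ge 1$, the vertex $j$ was matched on arrival by a $3$-path $\apath{j}{x}{y'}{i'}$. This path created a component of type $T^{Q'}$, so its middle vertex $y'$ had reassignment count $Q'-1$ just before the move. Since \lcp chose a minimum-cost feasible path, every feasible augmenting $3$-path from $j$ available at that moment had middle vertex with count at least $Q'-1$.

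Next we locate $i$ at the arrival of $j$. Because $\alpha_i=a_Q$ and $i$ was not postprocessed, $i$ is the unique offline leaf of a final component $C$ of type $T^Q$, and its root $y$ satisfies $iy\in M$. At the arrival of $j$, the vertex $i$ was not free, since \lcp is direct-first and $ij\in E$. The edge $iy$ was created either by a direct match of $y$ or as the $yi$-edge of an augmentation through $y$. In the first case $i$ was at that time the offline leaf of a component rooted at $y$ of type $T^0$. In the second case the component was of type $T^{Q}$ with $Q\le$ the final type. Either way $y$ is never reassigned again after $iy$ is formed, because $i$ stays the leaf and keeps $\alpha_i=a_Q$.

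The key claim is that at the arrival of $j$, the matched edge through $i$ was $iy$ with $y$ having count exactly $Q$. We justify this by splitting on when $iy$ was formed.

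If $iy$ was formed after $j$ arrived, then at $j$'s arrival $i$ was matched to some $z\ne y$. Since $i$ is not exhausted at the end, it was never reassigned, so its partner never changed. This case is therefore impossible, and $iy\in M_j$. Moreover $i$'s count at that time is $0$.

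The count of $y$ then was some $Q''\le Q$, since counts only increase. It is exactly $Q$, because $i$ is the leaf created at $y$'s $Q$-th reassignment. Hence $y$ had count $Q<t$, so it was not exhausted.

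Now consider the path $\apath{j}{i}{y}{k}$. If at $j$'s arrival $y$ had a free neighbor $k$, this path was feasible with cost $Q$. By minimality, $Q'-1\le Q$, i.e.\ $Q'\le Q+1$, as claimed.

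\textbf{Main obstacle.} The remaining case is that $y$ had no free neighbor at the arrival of $j$. Then $y$ has no free neighbor at the end either, since neighborhoods are fixed and offline vertices never become free again. Neither $i$ nor $y$ is exhausted, because $Q<t$ and $i$ was never reassigned. So the postprocessing step would have flipped $iy$, giving $\alpha_i=1\ne a_Q$, a contradiction.

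The delicate point is therefore the bookkeeping: showing $iy$ already existed with $y$ at count exactly $Q$ when $j$ arrived. This rests on Lemma~\ref{lem:history-components}, specifically that the offline leaf's partner is fixed and that the root's count equals the component type.
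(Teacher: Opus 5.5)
Your proof is correct and follows essentially the same route as the paper. You show that $iy$ was already matched when $j$ arrived, using direct-first together with the fact that $i$ is never reassigned. You then use the fact that $iy$ was not postprocessed to get a free neighbor of $y$ at that time, and finally apply the minimum-cost rule to the feasible path $\apath{j}{i}{y}{k}$. The differences are cosmetic: you get the free neighbor by contradiction in a case split where the paper reads it off the postprocessing condition directly, and you show $y$'s count is exactly $Q$ where the paper only needs at most $Q$.
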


\begin{proof}
Let $y$ be the vertex matched to $i$ in the final matching. Since $\alpha_i=a_Q$ and $Q<t$, the vertex $i$ is the unique offline leaf of a component of type $T^Q$, whose root is $y$. In particular, $i$ has not been reassigned and $y$ is not exhausted.

Moreover, the matched edge $iy$ was not postprocessed. Hence, by the definition of the postprocessing step, $y$ has a free neighbor $f\in L$ at the end of the execution. Since offline vertices never become free after being matched, $f$ was already free when $j$ arrived.

We first show that $iy$ was already present when $j$ arrived. Otherwise, since $i$ is never reassigned and is matched to $y$ at the end, the vertex $i$ was free when $j$ arrived. Since $ij\in E$, \lcp would have matched $j$ directly to $i$, contradicting $\beta_j=b_{Q'}$.

Thus, when $j$ arrived, the path $\apath{j}{i}{y}{f}$ was a feasible augmenting $3$-path. At the end of the algorithm, the root $y$ has been reassigned exactly $Q$ times, so at the time $j$ arrived it had been reassigned at most $Q$ times. Since $\beta_j=b_{Q'}$, \lcp chose a path whose middle online vertex had already been reassigned $Q'-1$ times. By the minimum-cost choice of \lcp, $Q'-1\le Q$, and hence $Q'\le Q+1$.
\end{proof}

We will choose the sequence $(a_Q)_{Q=0}^t$ to be nondecreasing. Under this
condition, the smallest possible offline value in this case is $a_{Q'-1}$. It is enough to impose
\begin{equation}
a_{r-1}+b_r=1 \qquad \forall\, r\in\{1,\ldots,t\}.
\label{eq:recurrence2}
\end{equation}

\end{enumerate}

The preceding case analysis shows that it is enough to choose nonnegative sequences $a=(a_0,\ldots,a_t)$ and $b=(b_0,\ldots,b_t)$ satisfying the following conditions:
\begin{align}
a_0 &= \lambda, \label{eq:ab-direct}\\
a_Q+b_Q-a_{Q-1} &= \lambda
        &&\forall Q\in\{1,\ldots,t\}, \label{eq:ab-update}\\
a_t &= 1, \label{eq:ab-last}\\
a_Q &\ge 1/2
        &&\forall Q\in\{0,\ldots,t\}, \label{eq:ab-half}\\
a_{r-1}+b_r &\ge 1
        &&\forall r\in\{1,\ldots,t\}. \label{eq:ab-feas}
\end{align}
We will choose the parameters so that \eqref{eq:ab-feas} holds with equality.

We now solve explicitly for $a$ and $b$. Combining \eqref{eq:ab-update} and \eqref{eq:ab-feas} gives, for all $Q\in\{1,\ldots,t\}$,
\[
    a_Q=2a_{Q-1}+(\lambda-1),
    \qquad a_0=\lambda.
\]
Thus
\[
    a_Q=1-\lambda+2^Q(2\lambda-1)\qquad \forall\, Q\in\{0,\ldots,t\},
\]
and
\[
    b_Q=1-a_{Q-1}
        =\lambda-2^{Q-1}(2\lambda-1)\qquad \forall\, Q\in\{1,\ldots,t\}.
\]

We choose $\lambda$ so that $a_t=1$. This gives
\[
    1=1-\lambda+2^t(2\lambda-1)
    \quad\Longleftrightarrow\quad
    \lambda=\frac{2^t}{2\cdot 2^t-1}.
\]
For this value, $\lambda\ge 1/2$. Hence
$a_Q-a_{Q-1}=2^{Q-1}(2\lambda-1)\ge 0$ for every $Q\in\{1,\ldots,t\}$,
so $(a_Q)_{Q=0}^t$ is nondecreasing. Moreover, $a_Q\ge 1/2$ for all $Q$.
Since $a_t=1$, monotonicity gives $a_{Q-1}\le 1$, and therefore
$b_Q=1-a_{Q-1}\ge 0$ for every $Q$. Thus the dual solution is nonnegative and
feasible.

Since $1+\lambda=1/\rho$, we obtain
\[
    \rho=\frac{2\cdot 2^t-1}{3\cdot 2^t-1}.
\]
With this choice, each dual update increases the dual objective by exactly
$1/\rho$, and the postprocessing step preserves the total dual value. By
Lemma~\ref{lemma:pd-template}, \lcp is
$(2\cdot 2^t-1)/(3\cdot 2^t-1)$-competitive in $\omp_3(1,t)$.
\end{proof}

\section{Upper Bounds}

We now prove matching upper bounds. We first give two small adaptive constructions showing that no deterministic algorithm can beat $2/3$ in the unrestricted model $\omp_3$, and that no deterministic algorithm can beat $3/5$ when the online budget is $1$, even with infinite offline budget. We then prove the exact upper bound $\gamma_t$ for finite online budgets via an adaptive adversarial construction.

\subsection{Global barriers for \texorpdfstring{$\omp_3(\infty,\infty)$}{OMP3(infty,infty)} and \texorpdfstring{$\omp_3(\infty,1)$}{OMP3(infty,1)}: \texorpdfstring{$2/3$}{2/3} and \texorpdfstring{$3/5$}{3/5}}
\label{sec:simpleupperbounds}

In this section we show that no deterministic algorithm can beat the ratios achieved by \lcp in the endpoint regimes $t=\infty$ and $t=1$. These bounds apply even if the offline side has infinite reassignment budget.

\begin{theorem}\label{thm:upperboundgeneral}
\begin{enumerate}[label=(\alph*), leftmargin=*, itemsep=0pt]
\item The competitive ratio of any deterministic algorithm for
$\omp_3$ is at most $\frac23$.
\item The competitive ratio of any deterministic algorithm for
$\omp_3(\infty,1)$ is at most $\frac35$.
\end{enumerate}
\end{theorem}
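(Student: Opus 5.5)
The plan is to prove both parts with explicit adaptive adversaries. The adversary reveals a few online vertices, watches what the deterministic algorithm does, and branches on each possible response. Every branch ends with an instance where $\alg\le \rho\cdot\opt$, for $\rho=2/3$ in part (a) and $\rho=3/5$ in part (b). Two observations cut down the case analysis. First, offline vertices are anonymous until used, so I fix names by symmetry, e.g.\ ``the algorithm matched $u_1$ to $a$''. Second, if the algorithm ever leaves an arrival unmatched while a feasible augmentation exists, the adversary stops at once. In that branch the current optimum already exceeds $\alg$ by enough (for instance $\alg=1$, $\opt=2$), so we may assume the algorithm always augments.

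For (a) I would use three offline vertices $a,b,c$ and three online arrivals.
\begin{enumerate}
\item $u_1$ arrives with $N(u_1)=\{a,b\}$; without loss of generality it is matched to $a$.
\item $u_2$ arrives with $N(u_2)=\{a,c\}$. There are two responses.
\begin{itemize}
\item \emph{Direct match $u_2c$.} Then $M=\{u_1a,u_2c\}$. The adversary sends $u_3$ with $N(u_3)=\{c\}$. The only candidate path is $\apath{u_3}{c}{u_2}{a}$, which needs $a$ free, and $a$ is not free. So $u_3$ stays unmatched. Here $\alg=2$, while $\{u_1b,u_2a,u_3c\}$ gives $\opt=3$.
\item \emph{Augmentation $\apath{u_2}{a}{u_1}{b}$.} Then $M=\{u_2a,u_1b\}$ and $c$ is free. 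The adversary sends $u_3$ with $N(u_3)=\{b\}$. The only path is $\apath{u_3}{b}{u_1}{a}$, and $a$ is matched, so again $\alg=2$. Here $\{u_1a,u_2c,u_3b\}$ gives $\opt=3$.
\end{itemize}
\end{enumerate}
In every branch $\alg/\opt\le 2/3$. This uses no budget at all, so it also covers every $\omp_3(s,t)$.

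For (b) the direct-match branch above only yields $2/3$, so the adversary must keep going and exploit the online budget $t=1$. The target is a finite instance with $\opt=5$ and $\alg\le3$ in every branch. The mechanism I would use is this: whenever the algorithm performs a length-$3$ augmentation, the middle online vertex becomes exhausted, so its new matched edge is frozen. The adversary then sends arrivals whose only useful augmenting paths pass through that frozen edge. This is the situation of the middle-edge lemma, Lemma~\ref{lemma:middle}, with the obstruction being a frozen middle edge.

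Concretely, I would build the instance in two stages. The first stage is a gadget like the one in (a), played so that either
\begin{itemize}
\item the algorithm takes a direct match, and the resulting ``wasted'' offline vertex is later made the unique good neighbor of two future arrivals; or
\item the algorithm augments, exhausting the middle online vertex $u$; later arrivals are then attached only to $u$'s new partner, so any augmentation would have to reassign $u$ a second time.
\end{itemize}
The second stage adds a second gadget on fresh vertices $d,e$, attached to the first so that the optimum grows by $2$ while the algorithm gains at most $1$.

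The main obstacle is making the branch tree close with exactly the $3/5$ ratio. In particular, the direct-match branches must not give the algorithm an unblocked length-$3$ path, because in the unrestricted model $2/3$ is achievable. I would first try to reach the configuration
\[
\text{a frozen edge } u_1b \text{ with } u_1 \text{ exhausted}
\]
in all branches, possibly by first forcing a reassignment with extra arrivals. After that, I would verify by a short exhaustive check that every later arrival has only infeasible augmenting paths, and that each branch gives $\alg\le3$ against $\opt=5$.
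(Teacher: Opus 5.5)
Your part (a) is correct and follows the paper's argument. In both branches $u_3$ is adjacent only to the offline vertex that became matched for the first time when $u_2$ was processed; this is exactly the paper's choice $N(r_3)=\{u\}$.

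Part (b), however, is only a plan. You specify no instance, and you leave closing the branch tree open yourself. Beyond that, two steps would fail as stated.

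First, your reduction ``if the algorithm leaves an arrival unmatched, stop'' does not work for (b). Once two online vertices are matched, declining the third leaves $\alg=2$ against $\opt=3$, a ratio of $2/3>3/5$. So this branch must be continued. The middle-edge lemma cannot help here: it applies only to \spgreedy algorithms, while the bound must hold against algorithms that deliberately decline feasible augmentations.

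Second, starting from the gadget of (a) keeps a direct-match option at the critical arrival, and that option is exactly what caps the gadget at $2/3$. Your proposed repairs are left unspecified: reusing the ``wasted'' vertex, or adding a second gadget on $d,e$ ``attached to the first''. That attachment is where all the work lies.

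The missing idea is to make the critical arrival adjacent only to already-matched offline vertices, so that a length-$3$ augmentation is its only possible match. A refusal is then answered by repeating the same arrival. The paper takes $L=\{a_1,\dots,a_5\}$, $N(r_1)=L$ (matched to $a_1$), $N(r_2)=L\setminus\{a_1\}$ (matched to $a_2$), and $N(r_3)=\{a_1,a_2\}$.

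If the algorithm augments along $\apath{r_3}{v}{x}{u}$, the online vertex $x$ becomes exhausted. Then $r_4$ with $N(r_4)=\{u\}$ is blocked by the budget, and $r_5$ with $N(r_5)=\{v\}$ is blocked because $r_3$ has no free neighbor. This gives $\alg=3$, while $\opt=5$: the large neighborhoods of $r_1,r_2$ let the optimum reroute them into $L\setminus\{a_1,a_2,u\}$.

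If the algorithm declines $r_3$, the adversary sends $r_4$ with $N(r_4)=\{a_1,a_2\}$ again. Declining once more gives ratio $2/4$. Augmenting exhausts some $x$, and a single blocker on the new endpoint $u$ then gives $3/5$.
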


We prove the two parts of Theorem~\ref{thm:upperboundgeneral} separately. Both
proofs use small adaptive adversarial instances. The adversary reveals each
neighborhood after observing the current matching maintained by the algorithm.

\begin{proof}[Proof of Theorem~\ref{thm:upperboundgeneral}(a)]
Fix a deterministic algorithm $\mathcal A$ for $\omp_3$. Let
$L=\{a,b,c\}$. This set is known to the algorithm from the beginning. We reveal
the online vertices adaptively.

\begin{enumerate}[leftmargin=*, itemsep=2pt]
    \item The first arrival is $r_1$, with $N(r_1)=\{a,b\}$. If $\mathcal A$
    leaves $r_1$ unmatched, we stop. Then the optimum is $1$, while
    $\mathcal A$ matches no vertex. Hence the ratio is $0$. Thus we may assume
    that $\mathcal A$ matches $r_1$. By renaming $a$ and $b$, assume that
    $r_1$ is matched to $b$.

    \item The second arrival is $r_2$, with $N(r_2)=\{b,c\}$. If $\mathcal A$
    leaves $r_2$ unmatched, we stop. The graph revealed so far has a matching
    of size $2$, while $\mathcal A$ matches only one vertex. Hence the ratio is
    at most $\frac12$. Thus we may assume that $\mathcal A$ matches $r_2$.

    Let $u$ be the offline vertex that becomes matched for the first time when
    $r_2$ is processed. If $r_2$ is matched directly to $c$, then $u=c$. If
    $r_2$ is matched by the length-$3$ augmenting path $\apath{r_2}{b}{r_1}{a}$, then $u=a$.
    These two cases are shown in Figure~\ref{fig:upper-23}.

    \item The third arrival is $r_3$, with $N(r_3)=\{u\}$. The algorithm cannot
    match $r_3$. Its only neighbor $u$ is already matched. Also, the online
    vertex matched to $u$ has no free neighbor. Hence there is no length-$3$
    augmenting path starting at $r_3$.
\end{enumerate}

The final graph has a matching of size $3$. If $u=a$, take $ar_3$, $br_1$, and
$cr_2$. If $u=c$, take $ar_1$, $br_2$, and $cr_3$. On the other hand,
$\mathcal A$ matches at most two vertices. Therefore
\[
    \frac{|\mathcal A(G,\pi)|}{\nu(G)}
    \le
    \frac{2}{3}.
\]
This gives the upper bound $\frac23$.
\end{proof}

\begin{figure}[ht]
\centering
\scalebox{0.8}{
    \begin{tikzpicture}
        \Vertex[x=0,y=0, size=.5, color=white, style=dashed, label=$a$, position=above]{1}
        \Vertex[x=0,y=-1.5, size=.5, color=black, style=solid, label=$b$, position=above]{2}
        \Vertex[x=0,y=-3, size=.5, color=white, style=dashed, label=$c$, position=above]{3}

        \Vertex[x=2,y=0, size=.5, color=black, label=$r_1$, position=above]{r1}
        \Vertex[x=2,y=-1.5, size=.5, color=white, style=dashed, label=$r_2$, position=above]{r2}

        \Edge[style=solid](r1)(2)
        \Edge[style=dashed](r1)(1)
        \Edge[style=dashed](r2)(2)
        \Edge[style=dashed](r2)(3)

        \Vertex[x=3,y=-1.5,style={color=white}]{flecha1}
        \Vertex[x=5,y=-1.5,style={color=white}]{flecha2}
        \Edge[Direct](flecha1)(flecha2)

        \Vertex[x=6,y=0, size=.5, color=black, style=solid, label={$a=u$}, position=above]{1}
        \Vertex[x=6,y=-1.5, size=.5, color=black, style=solid, label=$b$, position=above]{2}
        \Vertex[x=6,y=-3, size=.5, color=white, style=dashed, label=$c$, position=above]{3}

        \Vertex[x=8,y=0, size=.5, color=black, label=$r_1$, position=above]{r1}
        \Vertex[x=8,y=-1.5, size=.5, color=black, style=solid, label=$r_2$, position=above]{r2}
        \Vertex[x=8,y=-3, size=.5, color=white, style=dashed, label=$r_3$, position=above]{r3}

        \Edge[style=dashed](r1)(2)
        \Edge[style=solid](r1)(1)
        \Edge[style=solid](r2)(2)
        \Edge[style=dashed](r2)(3)
        \Edge[style=dashed](r3)(1)

        \Vertex[x=6.5, y=1, style={color=white}, label={If $r_2$ is matched by a $3$-path}]{r}
        \Vertex[x=9.5, y=-1.5, style={color=white}, label={or}]{r}

        \Vertex[x=11,y=0, size=.5, color=white, style=dashed, label=$a$, position=above]{1}
        \Vertex[x=11,y=-1.5, size=.5, color=black, style=solid, label=$b$, position=above]{2}
        \Vertex[x=11,y=-3, size=.5, color=black, style=solid, label={$c=u$}, position=above]{3}

        \Vertex[x=13,y=0, size=.5, color=black, label=$r_1$, position=above]{r1}
        \Vertex[x=13,y=-1.5, size=.5, color=black, style=solid, label=$r_2$, position=above]{r2}
        \Vertex[x=13,y=-3, size=.5, color=white, style=dashed, label=$r_3$, position=above]{r3}

        \Edge[style=solid](r1)(2)
        \Edge[style=dashed](r1)(1)
        \Edge[style=dashed](r2)(2)
        \Edge[style=solid](r2)(3)
        \Edge[style=dashed](r3)(3)

        \Vertex[x=12.5, y=1, style={color=white}, label={If $r_2$ is matched directly to $c$}]{r}
    \end{tikzpicture}
}
\caption{The two possible cases after processing $r_2$ in the proof of
Theorem~\ref{thm:upperboundgeneral}(a). The vertex $u$ is the offline vertex
that becomes matched for the first time when $r_2$ is processed. The adversary
then reveals $r_3$ with $N(r_3)=\{u\}$.}
\label{fig:upper-23}
\end{figure}

We now prove the upper bound for $\omp_3(\infty,1)$.

\begin{proof}[Proof of Theorem~\ref{thm:upperboundgeneral}(b)]
Fix a deterministic algorithm $\mathcal A$ for $\omp_3(\infty,1)$. Let
$L=\{a_1,a_2,a_3,a_4,a_5\}$. This set is known to the algorithm from the beginning. We
reveal the online vertices adaptively.

\begin{enumerate}[leftmargin=*, itemsep=2pt]
    \item The first arrival is $r_1$, with $N(r_1)=L$. If $\mathcal A$ leaves
    $r_1$ unmatched, we stop. Then the optimum is $1$, while $\mathcal A$
    matches no vertex. Hence the ratio is $0$. Thus we may assume that
    $\mathcal A$ matches $r_1$. By renaming vertices in $L$, assume that $r_1$
    is matched to $a_1$.

    \item The second arrival is $r_2$, with $N(r_2)=L\setminus\{a_1\}$. If
    $\mathcal A$ leaves $r_2$ unmatched, we stop. The graph revealed so far has
    a matching of size $2$, while $\mathcal A$ matches only one vertex. Hence
    the ratio is at most $\frac12$. Thus we may assume that $\mathcal A$
    matches $r_2$. By renaming vertices in $L\setminus\{a_1\}$, assume that
    $r_2$ is matched to $a_2$.

    \item The third arrival is $r_3$, with $N(r_3)=\{a_1,a_2\}$. Both neighbors of
    $r_3$ are already matched. Thus, if $\mathcal A$ matches $r_3$, it must use
    a length-$3$ augmenting path.
\end{enumerate}

We now split into two cases.

\smallskip
\noindent\emph{Case 1: $\mathcal A$ matches $r_3$.}
Let the augmenting path be $\apath{r_3}{v}{x}{u}$, where $v\in\{a_1,a_2\}$,
$x\in\{r_1,r_2\}$, and $u$ is free before the augmentation. Then $x$ is
reassigned once. Since $t=1$, the online vertex $x$ is exhausted.
Figure~\ref{fig:upper35} shows the branch $v=a_2$, $x=r_2$, and $u=a_3$.

We reveal two more vertices.

\begin{itemize}[leftmargin=*, itemsep=2pt]
    \item The fourth arrival is $r_4$, with $N(r_4)=\{u\}$. The algorithm
    cannot match $r_4$. Its only neighbor $u$ is already matched to the
    exhausted vertex $x$. Hence any length-$3$ augmentation starting at $r_4$
    would have to reassign $x$, which is infeasible.

    \item The fifth arrival is $r_5$, with $N(r_5)=\{v\}$. The algorithm cannot
    match $r_5$ either. Its only neighbor $v$ is matched to $r_3$. Also, $r_3$
    has no free neighbor, since $N(r_3)=\{a_1,a_2\}$ and both vertices are matched.
    Hence there is no length-$3$ augmenting path starting at $r_5$.
\end{itemize}

The final graph has a matching of size $5$. Match $r_4$ to $u$ and $r_5$ to
$v$. Match $r_3$ to the vertex of $\{a_1,a_2\}$ different from $v$. Finally, match
$r_1$ and $r_2$ to the two vertices in $L\setminus\{a_1,a_2,u\}$. These edges
exist by the definitions of $N(r_1)$ and $N(r_2)$. Thus $\nu(G)=5$, while
$\mathcal A$ matches only $r_1,r_2,r_3$. Therefore
\[
    \frac{|\mathcal A(G,\pi)|}{\nu(G)}
    \le
    \frac35 .
\]

\smallskip
\noindent\emph{Case 2: $\mathcal A$ leaves $r_3$ unmatched.}
We reveal two more vertices.

\begin{itemize}[leftmargin=*, itemsep=2pt]
    \item The fourth arrival is $r_4$, with $N(r_4)=\{a_1,a_2\}$. If $\mathcal A$
    leaves $r_4$ unmatched, we stop. The graph revealed so far has a matching
    of size $4$, while $\mathcal A$ matches only two vertices. Hence the ratio
    is at most $\frac12$.

    Thus we may assume that $\mathcal A$ matches $r_4$. Since both neighbors of
    $r_4$ are already matched, this match uses a length-$3$ augmenting path.
    Let this path be $\apath{r_4}{v}{x}{u}$, where $v\in\{a_1,a_2\}$, $x\in\{r_1,r_2\}$, and
    $u$ is free before the augmentation. Then $x$ is reassigned once, and hence
    exhausted.

    \item The fifth arrival is $r_5$, with $N(r_5)=\{u\}$. The algorithm cannot
    match $r_5$. Its only neighbor $u$ is already matched to the exhausted
    vertex $x$. Hence any length-$3$ augmentation starting at $r_5$ would have
    to reassign $x$, which is infeasible.
\end{itemize}

The final graph has a matching of size $5$. Match $r_5$ to $u$. Match $r_3$
and $r_4$ to $a_1$ and $a_2$. Finally, match $r_1$ and $r_2$ to the two vertices in
$L\setminus\{a_1,a_2,u\}$. These edges exist by the definitions of $N(r_1)$ and
$N(r_2)$. Thus $\nu(G)=5$, while $\mathcal A$ matches only $r_1,r_2,r_4$.
Therefore
\[
    \frac{|\mathcal A(G,\pi)|}{\nu(G)}
    \le
    \frac35 .
\]

In all cases, the competitive ratio of $\mathcal A$ is at most $\frac35$.
\end{proof}

\begin{figure}[ht]
\centering
\begin{tikzpicture}[scale=0.85, every node/.style={transform shape}]
\def\dy{1.0}

\Vertex[x=0,y=0*\dy,    size=.5, color=black, style=solid, label=$a_1$, position=above]{l1a}
\Vertex[x=0,y=-1*\dy,   size=.5, color=black, style=solid, label=$a_2$, position=above]{l2a}
\Vertex[x=0,y=-2*\dy,   size=.5, color=white, style=solid, label=$a_3$, position=above]{l3a}
\Vertex[x=0,y=-3*\dy,   size=.5, color=white, style=solid, label=$a_4$, position=above]{l4a}
\Vertex[x=0,y=-4*\dy,   size=.5, color=white, style=solid, label=$a_5$, position=above]{l5a}

\Vertex[x=-2,y=-1.5*\dy,size=.5, color=black, style=solid, label=$r_1$, position=above]{r1a}
\Vertex[x=-2,y=-2.5*\dy,size=.5, color=black, style=solid, label=$r_2$, position=above]{r2a}
\Vertex[x=2,y=-1*\dy,   size=.5, color=white, style=solid, label=$r_3$, position=above]{r3a}

\Edge[style=solid](r1a)(l1a)
\Edge[style=dashed](r1a)(l2a)
\Edge[style=dashed](r1a)(l3a)
\Edge[style=dashed](r1a)(l4a)
\Edge[style=dashed](r1a)(l5a)
\Edge[style=solid](r2a)(l2a)
\Edge[style=dashed](r2a)(l3a)
\Edge[style=dashed](r2a)(l4a)
\Edge[style=dashed](r2a)(l5a)
\Edge[style=dashed](r3a)(l1a)
\Edge[style=dashed](r3a)(l2a)

\Vertex[x=3,y=-2*\dy,style={color=white}]{flecha1}
\Vertex[x=5,y=-2*\dy,style={color=white}]{flecha2}
\Edge[Direct](flecha1)(flecha2)

\Vertex[x=8,y=0*\dy,    size=.5, color=black, style=solid, label=$a_1$, position=above]{l1b}
\Vertex[x=8,y=-1*\dy,   size=.5, color=black, style=solid, label=$a_2$, position=above]{l2b}
\Vertex[x=8,y=-2*\dy,   size=.5, color=black, style=solid, label=$a_3$, position=above]{l3b}
\Vertex[x=8,y=-3*\dy,   size=.5, color=white, style=solid, label=$a_4$, position=above]{l4b}
\Vertex[x=8,y=-4*\dy,   size=.5, color=white, style=solid, label=$a_5$, position=above]{l5b}

\Vertex[x=6,y=-1.5*\dy,size=.5, color=black, style=solid, label=$r_1$, position=above]{r1b}
\Vertex[x=6,y=-2.5*\dy,size=.5, color=black, style=solid, label=$r_2$, position=above]{r2b}
\Vertex[x=10,y=-1*\dy,  size=.5, color=black, style=solid, label=$r_3$, position=above]{r3b}
\Vertex[x=10,y=-2*\dy,  size=.5, color=white, style=solid, label=$r_5$, position=above]{r4b}
\Vertex[x=10,y=-3*\dy,  size=.5, color=white, style=solid, label=$r_4$, position=above]{r5b}

\Edge[style=solid](r1b)(l1b)
\Edge[style=dashed](r1b)(l2b)
\Edge[style=dashed](r1b)(l3b)
\Edge[style=dashed](r1b)(l4b)
\Edge[style=dashed](r1b)(l5b)
\Edge[style=dashed](r2b)(l2b)
\Edge[style=solid](r2b)(l3b)
\Edge[style=dashed](r2b)(l4b)
\Edge[style=dashed](r2b)(l5b)
\Edge[style=dashed](r3b)(l1b)
\Edge[style=solid](r3b)(l2b)
\Edge[style=dashed](r4b)(l2b)
\Edge[style=dashed](r5b)(l3b)

\end{tikzpicture}
\caption{One branch in the proof of Theorem~\ref{thm:upperboundgeneral}(b).
Here $r_3$ is matched using the path $\apath{r_3}{a_2}{r_2}{a_3}$. The adversary then reveals
$r_4$ with neighborhood $\{a_3\}$ and $r_5$ with neighborhood $\{a_2\}$.}
\label{fig:upper35}
\end{figure}

\subsection{An adaptive instance for finite online budgets}
\label{subsec:general-upper-bound}

The upper bound for finite online budget $t$ is more complex than the endpoint bounds for $t=1$ and $t=\infty$. In those endpoint regimes, a small fixed instance already suffices. For intermediate $t$, however, a deterministic algorithm may deliberately leave an online vertex unmatched even when a feasible augmentation exists, in order to conserve online reassignment budget for future arrivals.

We therefore describe an \emph{adaptive instance} (i.e., neighborhoods are revealed online depending on the algorithm's actions) that forces the matching to evolve in phases. In each phase, a batch of online vertices arrives; the algorithm must either match them using length-$3$ augmentations or leave them unmatched. These augmentations drive a pool of rooted components to grow into deeper trees, all of type $T^Q$ as in Lemma~\ref{lem:history-components}. Matched phase vertices also create entries in $\mathcal B$. At the end, we add ``blocker'' vertices: blockers for $\mathcal B$ are blocked because their matched phase vertices have no free offline neighbor, while blockers for the remaining pool components are blocked by exhausted online roots. Importantly, the argument does not rely on exhausting the offline budget.

\begin{theorem}\label{thm:upper-bound-general-t}
For every finite $t\ge1$ and every $s\ge1$, no deterministic online algorithm for $\omp_3(s,t)$ has competitive ratio larger than
\[
    \gamma_t:=\frac{2\cdot 2^t-1}{3\cdot 2^t-1}.
\]
\end{theorem}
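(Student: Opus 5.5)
The plan is an adaptive adversary that forces any deterministic algorithm to build the history components $T^0,T^1,\dots$ of Lemma~\ref{lem:history-components}, while keeping track of a potential that mirrors the dual recurrence from the proof of Theorem~\ref{thm:lcp-ratio}. Since the algorithm is arbitrary, I first record that Lemma~\ref{lem:history-components} is not directly available, because a general algorithm may reassign an offline vertex more than once. I would restrict the adversary so that this cannot be useful. Every online vertex it reveals is adjacent only to free offline vertices or to offline leaves that have not yet been reassigned. Then any length-$3$ augmentation must use a middle edge $xy$ whose offline endpoint $x$ is a fresh leaf, and the components stay of type $T^Q$ with $Q\le t$ regardless of $s$. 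This also explains why the bound holds for every $s\ge1$: the construction never relies on offline budgets.

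\emph{Phase $0$.} Fix a large $N$ and a large offline set. Reveal $N$ online vertices whose neighborhoods are large blocks of fresh offline vertices, so that every root keeps many free neighbors throughout the construction. Any vertex the algorithm leaves unmatched costs it a full unit, and I would show that leaving vertices unmatched in this phase is never profitable.

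\emph{Phases $Q=1,\dots,t$.} Let $P_{Q-1}$ be the pool of components of type $T^{Q-1}$ whose roots are not yet exhausted. For each component of $P_{Q-1}$, the adversary reveals an online vertex $j$ adjacent only to that component's current offline leaf $x$; each $j$ gets a fresh private offline neighbor, which OPT will use. Since $x$ is matched, the algorithm can only augment along $\apath{j}{x}{y}{i}$, which grows the component to $T^Q$ and exhausts $x$, or leave $j$ unmatched. A matched $j$ is recorded in $\mathcal B$, together with its partner $x$. The adversary may halve the pool in each phase, and I expect this to produce the factor $2^Q$.

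\emph{Ending.} At any point the adversary may stop and add blockers. For each $j\in\mathcal B$ matched to $x$, add a blocker adjacent only to $x$. It cannot be matched, because $x$'s partner $j$ has no free neighbor, exactly as in the middle-edge lemma (Lemma~\ref{lemma:middle}). For each remaining pool component, add a blocker adjacent to its active leaf. If the root is exhausted ($Q=t$), the blocker is blocked by the budget. Otherwise the adversary only applies this step once a later phase has made the root's free neighbors unavailable.

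\emph{Counting.} I would compute $\alg$ and $\opt$ as linear functions of the fractions $f_Q$ of pool components that the algorithm advances in phase $Q$. Every blocked vertex and every skipped arrival gives OPT one extra edge. The adversary chooses the stopping phase that minimizes $\alg/\opt$ for the observed $f$. The claim to prove is that $\min_{\text{stop}}\alg/\opt\le\gamma_t+o(1)$ for every $f$. I would prove it by exhibiting convex multipliers over stopping times, taken from the $a_Q,b_Q$ solution with $\lambda=2^t/(2\cdot2^t-1)$, so that the weighted sum of the inequalities $\alg\le\gamma_t\,\opt$ holds identically in $f$. This is an LP-duality (minimax) argument, and letting $N\to\infty$ removes the rounding terms.

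The main obstacle is the intermediate phases. There, an algorithm may deliberately skip augmentations to save root budget, and the blocker for a leaf whose root is not exhausted is only valid if the root has run out of free offline neighbors. I would handle this first by tuning the phase-$0$ neighborhoods: roots in the same group share the same block of offline vertices, and each phase's free endpoints $i$ are drawn from these blocks. Augmentations in later phases then deplete earlier roots' free neighbors, and this depletion, combined with halving the pool, should give the recurrence $a_Q=2a_{Q-1}+\lambda-1$. If the bookkeeping fails, my fallback is to test the smallest case $t=2$, where the target is $\gamma_2=7/11$, and adjust the halving schedule until the counts match before generalizing.
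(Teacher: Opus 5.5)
Your proposal has a genuine gap: the phase step cannot work as designed. You reveal, for each pool component, a vertex $j$ adjacent to that component's leaf $x$ together with a fresh private neighbor $p_j$. If $jp_j$ is an edge, then $j$ always has a free neighbor. The algorithm can simply match $j$ to $p_j$ directly, which is always feasible; nothing grows, and $\alg$ and $\opt$ each gain one.

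If the algorithm instead augments along $\apath{j}{x}{y}{i}$, then $p_j$ stays free. So $j$ \emph{does} have a free neighbor, and your blocker on $x$ is matched via $\apath{b}{x}{j}{p_j}$ whenever $s\ge 2$. This contradicts your blocker argument, your restriction that arrivals only see un-reassigned leaves, and your claim that offline budgets are never used.

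If you drop the private neighbor and take $N(j)=\{x\}$, then $\opt$ cannot use both $j$ and the blocker on $x$. Each advance then gives $\alg$ and $\opt$ one unit each, and driving one component to $T^t$ yields $t+1$ versus $t+2$, far above $\gamma_t$.

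Two further problems remain. Nothing in a one-vertex-per-component design forces the pool to halve, since the algorithm can advance every component. Blockers for pool components with non-exhausted roots are also unjustified. The free endpoint $i$ is the algorithm's choice, so shared blocks cannot be depleted on demand. You flag this bookkeeping as open, and the minimax over stopping times rests on it.

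The missing idea is to make each phase-$q$ vertex adjacent to the active leaves $\act(C)$ of \emph{all} components $C$ currently in the pool. This gives three things:
\begin{enumerate}
\item[(a)] $j$ has no free neighbor, so it can only be matched by a $3$-path through some pool component.
\item[(b)] Once matched, $j$ has no free neighbor forever, so a blocker on its partner $x$ is unmatchable for every $s\ge1$.
\item[(c)] $\opt$ can match $j$ to the leaf of a \emph{different} witness component. That witness is retired from the pool and never needs a blocker.
\end{enumerate}

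Each phase vertex retires one witness and advances at most one component. Hence $a_q\le a_{q-1}/2$ holds automatically, with at most one leftover component per phase. Only the final pool needs budget-based blockers, since those roots are exhausted.

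The counting is then direct, with no stopping times or LP multipliers. We have $\alg=N+\alpha$ and $\opt\ge 2N+\alpha-t$, where $\alpha=\sum_q a_q\le N(1-2^{-t})$, and $(N+\alpha)/(2N+\alpha-t)$ is increasing in $\alpha$.

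Your phase $0$ also needs an explicit stopping rule rather than the assertion that skipping is unprofitable. Keep revealing private-neighborhood vertices until the algorithm has matched $N$ of them or $\lceil N/\gamma_t\rceil$ have arrived. In the latter case, stop; the ratio is already below $\gamma_t$.
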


\begin{proof}
The case $s=0$ is not part of this construction: then no length-$3$ augmentation is feasible, and the model reduces to classical online matching. Hence we assume $s\ge1$.

Fix $s\ge 1$, a deterministic algorithm $\mathcal A$ for $\omp_3(s,t)$, and an integer $N$. We construct an adaptive instance.
For fixed $N$ and $t$, we fix in advance a sufficiently large offline set $L$ containing all vertices that may be used in the construction. The adaptive choices only determine which incident edges are revealed to arriving online vertices.
All offline vertices used below are distinct unless explicitly identified.

\paragraph{Initial phase}
We create a set $X$ of online vertices one by one. Each vertex $r\in X$ is
revealed with a private set $U_r\subseteq L$ of $t+2$ fresh offline neighbors,
disjoint from all other such sets. Thus $N(r)=U_r$ and $|U_r|=t+2$.

We reveal vertices of $X$ one by one, until one of the following two events occurs: 
\[ \mathcal E_1: \mathcal A \text{ has matched } N \text{ vertices of } X, \quad \text{ or } \quad \mathcal E_2: \lceil N/\gamma_t\rceil \text{ vertices of } X \text{ have been revealed.} \]

If $\mathcal E_2$ occurs strictly before $\mathcal E_1$, we stop the instance. In this
case, $\mathcal A$ has matched at most $N-1$ vertices of $X$, because
$\mathcal E_1$ has not occurred.
On the other hand, the optimum can match all $\lceil N/\gamma_t\rceil$ revealed vertices of $X$, since each has a private set of offline neighbors.
Hence the ratio of $\mathcal A$ is at most
\[
    \frac{N-1}{\lceil N/\gamma_t\rceil}
    <
    \gamma_t .
\]
Therefore this branch already gives the desired upper bound. From now on, we assume that $\mathcal E_1$ occurs before or at the same time
as $\mathcal E_2$.

\paragraph{Preparation and definitions.}
Let $R_0\subseteq X$ be the set of the $N$ online vertices matched by
$\mathcal A$. Together with their matches, the vertices in $R_0$ define
$N$ components of type $T^0$.

We put these components in the initial pool $\mathcal P_0$. We do not use the
vertices in $X\setminus R_0$ again. We also initialize two empty sets
$\mathcal R$ and $\mathcal B$.

For every component $C$ that appears in one of the pools, we write
$\rootof(C)$ for its root. Thus $\rootof(C)\in R_0$. We call the current
matching edge of $\mathcal A$ incident to $\rootof(C)$ the active edge of $C$.
We write $\act(C)$ for its offline endpoint, so the active edge of $C$ is
\[
    \act(C)\rootof(C).
\]
Initially, if $C\in\mathcal P_0$, then $\act(C)$ is the offline vertex matched
to $\rootof(C)$ by $\mathcal A$.

The pools $\mathcal P_q$ used below are dynamic. Intuitively, $\mathcal P_q$ is the pool of ``live'' components that can still be advanced to the next phase, $\mathcal R$ collects components that have been removed from the pools as witnesses (and never return), and $\mathcal B$ collects old active offline endpoints that will later define blockers.
During the phases, a component can leave the
current pool in only two ways: it is either inserted into the next pool, or it
is inserted into $\mathcal R$. The set $\mathcal B$ collects certain offline
vertices that will be used later in the construction.

Throughout the construction we maintain the following invariants.

(i) Every component in the current pool $\mathcal P_q$ has type $T^q$.

(ii) For every $C\in\mathcal P_q$, the edge
$\act(C)\rootof(C)$ is the current matching edge incident to the root.

(iii) For every $C\in\mathcal P_q$, the vertex $\act(C)$ is the unique leaf
in $L$ adjacent to $\rootof(C)$.

(iv) For every $C\in\mathcal P_q$, the vertex $\act(C)$ has not been
reassigned. Hence the edge $\act(C)\rootof(C)$ is feasible as the middle
edge of a length-$3$ augmentation with respect to the offline budget,
for every $s\ge 1$.

(v) Components in the current pool are pairwise vertex-disjoint.

(vi) Once a component is inserted into $\mathcal R$, it is never inserted
into any later pool and is never used again as a witness component.

(vii) Before phase $q$, for every component $C\in\mathcal P_{q-1}$, if
$r=\rootof(C)$, then at most $q$ vertices of the private set $U_r$ have
been matched by $\mathcal A$.

\paragraph{The phases.}
Set $a_0:=N$. We run $t$ phases. Phase $q$, for $q\in\{1,\ldots,t\}$, is
defined as follows.

At the start of the phase, we take the pool $\mathcal P_{q-1}$ produced at the
end of the previous phase; for $q=1$, this is the initial pool
$\mathcal P_0$. By induction, this pool has size $a_{q-1}$, and every
component in it has type $T^{q-1}$. We also initialize an empty next pool
$\mathcal P_q$.

The phase proceeds while $|\mathcal P_{q-1}|\ge 2$. In each iteration, we
reveal a new online vertex $j$ adjacent exactly to the vertices $\act(C)$ with
$C\in\mathcal P_{q-1}$. We call the vertices revealed in this loop the
phase-$q$ vertices.

The vertex $j$ has no free neighbor, since every vertex $\act(C)$ is matched
by $\mathcal A$. Thus, if $\mathcal A$ matches $j$, it cannot use an augmenting
path of length $1$; it must use an augmenting path of length~$3$.

For each $C\in\mathcal P_{q-1}$, there is a feasible augmenting path of
length $3$ whose first edge is $j\act(C)$ and whose middle edge is the current
matching edge $\act(C)\rootof(C)$. Indeed, let $u$ be a vertex of the private
set $U_{\rootof(C)}$ that has never been matched by $\mathcal A$. Such a vertex exists because, before phase $q$, the root $\rootof(C)$ has
used its initial private endpoint and at most one new private endpoint in
each of the previous $q-1$ phases. Thus it has been matched to at most
$q$ vertices of $U_{\rootof(C)}$, while $|U_{\rootof(C)}|=t+2$. Then
\[
    \apath{j}{\act(C)}{\rootof(C)}{u}
\]
is such a path: the edge $j\act(C)$ is present by the definition of $j$, the
edge $\act(C)\rootof(C)$ is the current matching edge of $C$, and
$\rootof(C)u$ is present because $u\in U_{\rootof(C)}$. Moreover, $\act(C)$ has never been reassigned, so the path is feasible for every offline budget $s\ge 1$, and $\rootof(C)$ has been reassigned only $q-1<t$ times.

There are two cases.

First, suppose that $\mathcal A$ matches $j$. Since the only neighbors of $j$
are the vertices $\act(C)$ with $C\in\mathcal P_{q-1}$, the augmenting path
used by $\mathcal A$ starts with $j\act(C)$ for some
$C\in\mathcal P_{q-1}$. Since $\act(C)$ is currently matched to $\rootof(C)$,
the path then uses the matching edge $\act(C)\rootof(C)$. Before updating the
active endpoint of $C$, we insert $\act(C)$ into the set $\mathcal B$.

By the tree-transition lemma (Lemma~\ref{lem:tree-transition}), the component $C$
changes from type $T^{q-1}$ to type $T^q$. We remove $C$ from
$\mathcal P_{q-1}$, update $\act(C)$ to the new offline endpoint adjacent to
$\rootof(C)$, and insert $C$ into $\mathcal P_q$. The old active endpoint has
just been reassigned and is recorded in $\mathcal B$. The new active endpoint
is the fresh offline endpoint used at the end of the augmentation, so it has
not been reassigned. Hence the invariants are preserved.

Since the loop runs only while $|\mathcal P_{q-1}|\ge 2$, after removing $C$
there is still another component in $\mathcal P_{q-1}$. Choose one such
component $S$ and set $\wit(j)=S$.

For the second case, suppose that $\mathcal A$ does not match $j$. Then choose
any component $S\in\mathcal P_{q-1}$ and set $\wit(j)=S$. In both cases, we remove $\wit(j)$ from the current pool $\mathcal P_{q-1}$ and insert it into $\mathcal R$. This component will never return to a pool and will never be used again as a witness. We refer to $\wit(j)$ as the \emph{witness component} for $j$. We also set
\[
    \mate(j):=\act(\wit(j)).
\]
In the offline matching constructed at the end of this proof, the vertex $j$
will be matched to $\mate(j)$.

This update preserves the invariants. If a component is advanced, it moves from
$\mathcal P_{q-1}$ to $\mathcal P_q$ by Lemma~\ref{lem:tree-transition}. If a
component is used as a witness, it is removed from the pools and is never used
again. No later phase vertex is adjacent to an endpoint removed from a pool.

When the loop stops, phase $q$ ends. Let $p_q$ be the number of phase-$q$
vertices, and let
$
    e_q:=|\mathcal P_{q-1}|
$
be the number of components left in the current pool \emph{at the end of the
phase}. Since the loop stops when fewer than two components remain,
$e_q\in\{0,1\}$. Finally, set
$
    a_q:=|\mathcal P_q|.
$
Equivalently, $a_q$ is the number of phase-$q$ vertices matched by
$\mathcal A$, since $\mathcal P_q$ consists exactly of the components advanced
by $\mathcal A$ during phase $q$.

We have
\[
    p_q+a_q=a_{q-1}-e_q .
\]
Indeed, the quantity $a_{q-1}-e_q$ is exactly the number of components that
leave $\mathcal P_{q-1}$ during phase $q$. Each such component goes to exactly
one of two places. It either enters $\mathcal R$, in which case it is
$\wit(j)$ for a unique phase-$q$ vertex $j$, or it enters $\mathcal P_q$, in
which case it is the component advanced when $\mathcal A$ matches a phase-$q$
vertex. The first group has size $p_q$, and the second group has size $a_q$.

Since every phase-$q$ vertex matched by $\mathcal A$ is one of the $p_q$
phase-$q$ vertices, we have $a_q\le p_q$. Therefore
\[
    a_q\le \frac{a_{q-1}}2 .
\]
As $a_0=N$, it follows that
\[
    a_q\le \frac{N}{2^q}
    \qquad\text{for }q\in \{0,\ldots,t\}.
\]

\paragraph{Final blockers.}
After the $t$ phases, we use the sets $\mathcal B$ and $\mathcal P_t$ to define
additional online vertices, called blockers, that $\mathcal A$ cannot match.

For every offline vertex $x\in\mathcal B$, we reveal one online vertex $b_x$
adjacent only to $x$. Let $j$ be the phase vertex that became matched to $x$
when $x$ was inserted into $\mathcal B$. At the arrival of $j$, all neighbors
of $j$ were active endpoints of components in the current pool. Thus all of
them were matched. The full neighborhood of $j$ was revealed at that time,
and offline vertices never become free after being matched. Hence $j$ has no
free offline neighbor at the end.

The vertex $b_x$ has no feasible direct match, since $x$ is matched to $j$.
Any length-$3$ augmenting path starting at $b_x$ would have to be of the form
$\apath{b_x}{x}{j}{i}$, with $i$ a free offline neighbor of $j$. No such $i$
exists. Hence $\mathcal A$ cannot match $b_x$.

Second, for every component $C\in\mathcal P_t$, we reveal one online vertex
adjacent only to $\act(C)$. This vertex cannot be matched by an augmenting path
of length $1$, since $\act(C)$ is matched. Any augmenting path of length $3$
would have to use the current matching edge $\act(C)\rootof(C)$, but
$\rootof(C)$ has already been reassigned $t$ times. Thus, the online budget of
$\rootof(C)$ blocks the path, and $\mathcal A$ cannot match this vertex either.

There are
\[
    |\mathcal B|=\sum_{q=1}^t a_q
\]
blockers of the first kind. Indeed, one offline vertex is inserted into
$\mathcal B$ for each phase vertex matched by $\mathcal A$, and these inserted
vertices are distinct because an old active endpoint is never active again. There are also
$a_t$ blockers of the second kind, one for each component in $\mathcal P_t$.

\paragraph{Counting.}
We now compare the size of $\mathcal A$'s matching with the size of an offline matching.
The algorithm $\mathcal A$ matches the $N$ initial vertices in $R_0$ and exactly $a_q$ phase-$q$ vertices for each $q\in\{1,\ldots,t\}$. It matches no final blocker. Hence
\[
    \alg = N+\sum_{q=1}^t a_q .
\]

We now construct a feasible offline matching.

First, match each of the $N$ vertices $r\in R_0$ to a vertex of its private set $U_r$ never matched by $\mathcal A$. Such a vertex exists because the component rooted at $r$ uses its initial
private endpoint and at most one new private endpoint in each of the $t$
phases, hence at most $t+1$ vertices of $U_r$ are matched by $\mathcal A$.
Since $|U_r|=t+2$, at least one vertex of $U_r$ has never been matched by
$\mathcal A$.

Second, for every vertex $j$ revealed in phase $q$, for some $q\in\{1,\ldots,t\}$, match $j$ to $\mate(j)$. This is feasible because, when $j$ was revealed, it was adjacent to $\act(\wit(j))$, and $\mate(j)$ was defined to be this vertex. Moreover, each witness component is inserted into $\mathcal R$ at most once, so these offline vertices are all distinct. The total number of these vertices is $\sum_{q=1}^t p_q$.

Finally, match every final blocker to its unique offline neighbor.

\begin{claim}
The offline vertices used in the offline matching described above are
pairwise distinct.
\end{claim}

\begin{proof}
For each $r\in R_0$, we choose a vertex of $U_r$ that has never been matched
by $\mathcal A$. Every active endpoint used in the construction has been matched by
$\mathcal A$ at some time. Hence the chosen vertices from the sets $U_r$ are not
active endpoints ever used by the construction. Also, the sets $U_r$ are
pairwise disjoint.

Each phase vertex $j$ is matched to
$\mate(j)=\act(\wit(j))$. Once $\wit(j)$ is inserted into $\mathcal R$, it
never returns to a pool and is never used again as a witness. Therefore no
two phase vertices use the same offline vertex.

Blockers of the first kind use vertices inserted into $\mathcal B$, i.e.,
old active endpoints of components that were advanced by $\mathcal A$.
Such vertices are never current active endpoints again. Blockers of the
second kind use the final active endpoints of components in $\mathcal P_t$.
Finally, a component inserted into $\mathcal R$ never advances later, so the
vertex used as $\mate(j)$ for a phase vertex is never later inserted into
$\mathcal B$ and is not a final active endpoint in $\mathcal P_t$.

Thus the offline vertices used for $R_0$, for phase vertices, and for final
blockers are pairwise distinct.
\end{proof}

Using $p_q+a_q=a_{q-1}-e_q$ and $a_0=N$, we have
$\sum_{q=1}^t p_q=N-a_t-\sum_{q=1}^t e_q$. Therefore,
\begin{align*}
    \opt
    &\ge N+\sum_{q=1}^t p_q+\sum_{q=1}^t a_q+a_t  \\
    &= 2N+\sum_{q=1}^t a_q-\sum_{q=1}^t e_q
     \ge 2N+\sum_{q=1}^t a_q-t ,
\end{align*}
where the last inequality uses $e_q\in\{0,1\}$ for every $q$.

Let $\alpha:=\sum_{q=1}^t a_q$. Then
$\alg/\opt\le (N+\alpha)/(2N+\alpha-t)$. For fixed $N>t$, the right-hand
side is increasing in $\alpha$. Since
$\alpha\le N\sum_{q=1}^t 2^{-q}=N(1-2^{-t})$, we get
\begin{align*}
    \frac{\alg}{\opt}
    &\le
    \frac{N+N(1-2^{-t})}{2N+N(1-2^{-t})-t}  =
    \frac{N(2\cdot 2^t-1)}
    {N(3\cdot 2^t-1)-t2^t}.
\end{align*}
Letting $N\to\infty$ gives
\[
    \limsup_{N\to\infty}\frac{\alg}{\opt}
    \le \frac{2\cdot 2^t-1}{3\cdot 2^t-1}
    = \gamma_t .
\]

Thus, for every deterministic algorithm $\mathcal A$ and every $\varepsilon>0$, there is a finite instance on which $\mathcal A$ obtains ratio at most $\gamma_t+\varepsilon$. Since $s\ge 1$ was arbitrary, no deterministic algorithm for $\omp_3(s,t)$, for any $s\ge 1$, has competitive ratio larger than $\gamma_t$.\qedhere
\end{proof}

\paragraph{Consequences.}
Together with Theorem~\ref{thm:lcp-ratio}, Theorem~\ref{thm:upper-bound-general-t}, and Lemma~\ref{lemma:lcp-spgreedy}, this shows that $\gamma_t$ is the optimal deterministic competitive ratio for every finite $t\ge1$ and every $s\ge1$. If $s=0$ or $t=0$, the model reduces to classical deterministic online matching, and the optimal ratio is $1/2$. If $s\ge1$ and $t=\infty$, Theorem~\ref{thm:lcp-infinity} and Lemma~\ref{lemma:lcp-spgreedy} give ratio $2/3$, while Theorem~\ref{thm:upperboundgeneral}(a) gives the matching bound.

\section{Offline-vertex-weighted Setting}
\label{sec:offline-weighted}

We consider the variant in which each offline vertex $i\in L$ has weight
$w(i)\ge 0$, and the objective is to maximize the total weight of the matched
offline vertices. The arrival model and the allowed operations are the same as
before. We refer to this offline-vertex-weighted model as \emph{WOMP};
correspondingly, $\womp_K(s,t)$ denotes the weighted analogue of $\omp_K(s,t)$
with the same augmentation and reassignment-budget constraints.
Note that the weighted results do not follow as a formal corollary
of the unweighted analysis; in particular, the unweighted dual variables cannot
encode arbitrary offline weights.

We study two offline-vertex-weighted regimes. First, in $\womp_3(1,1)$, both
sides have unit reassignment budget; the optimal deterministic ratio is
$2-\sqrt2$. Second, in $\womp_3(1,\infty)$, the offline side still has unit budget
but the online side has unlimited budget; the optimal deterministic ratio is
$(\sqrt5-1)/2$. The two analyses use different greedy rules and different
primal--dual constructions.

\begin{theorem}\label{thm:weighted-optimal-ratio}
The optimal deterministic competitive ratio for offline-vertex-weighted
$\womp_3(1,1)$ is exactly
\[
    2-\sqrt2 \approx 0.585786.
\]
More precisely, there is a deterministic online algorithm with competitive
ratio $2-\sqrt2$, and no deterministic online algorithm can achieve a larger
competitive ratio.
\end{theorem}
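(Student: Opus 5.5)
The plan is to prove the two directions separately: a primal--dual lower bound for a weighted greedy rule, and an adaptive adversarial upper bound. Throughout, write $c:=2-\sqrt2$. The value $c$ is the root in $(0,1)$ of $c^2-4c+2=0$, and $1/c=1+1/\sqrt2$. I expect both directions to reduce to balancing two one-parameter trade-offs whose common optimum is this quadratic.

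\emph{Algorithm.} Upon the arrival of $j$, every feasible augmentation gains exactly the weight of its free offline endpoint: $w(i)$ for $\apath{j}{i}$ and $w(i)$ for $\apath{j}{x}{y}{i}$. I would use a discounted greedy rule with a parameter $\theta>1$. The rule takes the direct match to the heaviest free neighbor unless some feasible $3$-path reaches a free $i$ with $w(i)\ge\theta\cdot w(\text{best direct})$, in which case it takes the heaviest such $3$-path. The reason for the discount is that a $3$-path spends the single unit of budget of both $x$ and $y$, freezing $xy$ and also $yi$ (since $t=1$). A heavy later arrival could then be blocked, so the rule should spend budget only for a multiplicative gain. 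I expect $\theta=\sqrt2$ to be the right choice. The argument of Lemma~\ref{lemma:lcp-spgreedy} shows that this rule never reassigns an offline vertex twice, so the history components are again of type $T^0$ or $T^1$ by Lemma~\ref{lem:history-components}.

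\emph{Dual fitting.} I would use the weighted dual with constraints $\alpha_i+\beta_j\ge w(i)$ and express the dual values as multiples of the relevant weights.
\begin{itemize}
\item On a direct match $\apath{j}{i}$, set $\alpha_i=\mu\,w(i)$ and $\beta_j=w(i)$.
\item On a $3$-path $\apath{j}{x}{y}{i}$, raise $\alpha_x$ to $w(x)$, set $\alpha_i=\mu' w(i)$, and give $j$ a value $\beta_j$ proportional to $w(i)/\theta$. This value covers the direct-match edges of $j$, which were all lighter than $w(i)/\theta$.
\end{itemize}
As in Algorithm~\ref{alg:dual-lcp-1t}, a postprocessing flip is applied to non-frozen matched edges whose online endpoint has no free neighbor. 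Feasibility is checked by the same case analysis as in the proof of Theorem~\ref{thm:lcp-ratio}, using Lemma~\ref{lemma:middle} for the non-frozen middle edges. The new weighted cases are:
\begin{itemize}
\item an edge $ij$ from a free $j$ to a heavier $i$ that was declined in favor of a lighter direct match;
\item an edge from $j$ into a $T^1$ leaf, whose middle edge is frozen because the root is exhausted.
\end{itemize}
Each case gives one linear inequality in $\mu,\mu',\theta$. The ratio requirement is that every update raises the dual by at most $(1/c)$ times the weight gained. Eliminating parameters should produce the quadratic $c^2-4c+2=0$, after which Lemma~\ref{lemma:pd-template}, adapted to weights, finishes this direction. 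The hardest step is the charging in the $3$-path update. The increase $w(x)-\alpha_x^{\text{old}}=(1-\mu)w(x)$ must be paid out of the gain $w(i)$, and a priori $w(x)$ and $w(i)$ are unrelated. I would first try to handle this by charging $x$'s increase to $x$'s own original direct match, which was accounted with slack $1/c-1-\mu$. If that fails, I would strengthen the rule to also compare $w(i)$ against $w(x)$.

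\emph{Upper bound.} The adversary mirrors Theorem~\ref{thm:upperboundgeneral}(b), but with two weight classes, $1$ and a parameter $\omega>1$. First it reveals $r_1$ adjacent to a light vertex and to a private set of light vertices, and an arrival whose neighbors force a choice between a direct match to a light vertex and a $3$-path reaching a heavy vertex of weight $\omega$.
\begin{itemize}
\item If the algorithm takes the direct light match, the adversary makes the heavy vertex the private neighbor of a later unmatched-able arrival.
\item If it takes the $3$-path, both budgets are exhausted, and blockers adjacent only to the frozen endpoints are revealed, as in Case~1 of Theorem~\ref{thm:upperboundgeneral}(b).
\end{itemize}
The algorithm's ratio is then $\min\{f_1(\omega),f_2(\omega)\}$ for two explicit rational functions. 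Maximizing this minimum over the algorithm's choices and then minimizing over $\omega$ should give $\omega=\sqrt2$ and ratio $2-\sqrt2$. If a single gadget only approaches this value, the plan is to replicate it $N$ times with private neighborhoods and let $N\to\infty$, as in Theorem~\ref{thm:upper-bound-general-t}.
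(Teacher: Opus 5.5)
Your outline breaks down at the step you flag as hardest, and neither remedy you propose fixes it. In your dual, $\beta_y$ keeps the value $w(x)$ it received at $y$'s direct match. After $\apath{j}{x}{y}{i}$, feasibility forces three things:
\begin{itemize}
\item $\alpha_x\ge w(x)$ and $\alpha_i\ge w(i)$, because $x$ and $y$ are now exhausted, so a later arrival adjacent only to $x$ (or only to $i$) is blocked and keeps $\beta=0$;
\item $\beta_j\ge w(D(j))$, because $D(j)$ may stay free.
\end{itemize}
Now take $N(r_1)=\{a,b\}$ with $w(a)=1>w(b)=\beta$, and then $N(r_2)=\{a,d\}$ with $w(d)=\beta/\theta$. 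Your rule augments along $\apath{r_2}{a}{r_1}{b}$ and never touches $\beta_{r_1}=1$. The final dual total is therefore at least $\alpha_a+\beta_{r_1}+\alpha_b+\beta_{r_2}\ge 2+\beta+\beta/\theta$, while the budget is $(1+1/\sqrt2)(1+\beta)$. The excess is positive for every $\theta$ once $\beta<\sqrt2-1$, and for every $\beta<1$ when $\theta=1+\sqrt2$. Since this is a bound on the total, reallocating the slack of $a$'s direct match cannot help. A side condition $w(i)\ge\delta w(x)$ with $\delta<1$ still admits $\beta\in[\delta,1)$. With $\delta=1$, the algorithm would refuse to augment on $N(r_1)=\{a,b\}$, $N(r_2)=\{a\}$ with $w(b)$ slightly below $w(a)$, and its ratio would drop to about $1/2$.

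The missing idea is a non-monotone dual. The paper \emph{lowers} $\beta_y$ from $w(x)$ to $w(i)$ at the $3$-path step and sets $\beta_j=w(x)/q$. The net increase is then $2w(i)-(\delta-1/q)w(x)\le(1+\delta)w(i)$, precisely because $w(i)\le w(x)$ and $\delta-1/q=1-\delta$. Lowering $\beta_y$ is sound only because $y$ is now exhausted, so its remaining constraints are controlled by $w(i)$. Its final free neighbors are handled by the choice of $i$. Its edges to directly matched offline vertices, which carry only $\alpha=\delta w$, are handled by the eligibility condition $w(i)\ge\delta w(x)$; this is the real role of that condition. The threshold $w(x)\ge q\,w(D(j))$ on the \emph{middle} vertex, with lexicographic maximization of $(w(x),w(i))$, is what lets $\beta_j=w(x)/q$ cover $j$'s edges. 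In that scheme the postprocessing test must also be weighted: every final free neighbor of $j$ has weight at most $\delta w(i)$, rather than requiring no free neighbor at all.

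Your parameters are also wrong, and the upper-bound gadget is inverted. With $\theta=\sqrt2$ the algorithm itself fails. Take $N(r_1)=\{a,b,c\}$, all of weight $1$, and $N(r_2)=\{a,d\}$ with $w(d)=1/\sqrt2$. Your rule augments along $\apath{r_2}{a}{r_1}{b}$. Blockers with $N(r_3)=\{a\}$ and $N(r_4)=\{b\}$ then leave it at $2$ against an optimum of $3+1/\sqrt2$, a ratio of about $0.54<2-\sqrt2$. On this family the threshold must be exactly $1+\sqrt2$.

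The same instance with $w(d)=\gamma:=\sqrt2-1$ is the paper's whole upper bound. The \emph{heavy} vertices are $r_1$'s neighbors (the middle vertex and the endpoint of the $3$-path), and the light vertex $d$ is the direct option.
\begin{itemize}
\item In the direct branch, the blocker $N(r_3)=\{d\}$ cannot be matched because $r_2$ has no free neighbor. This gives ratio $(1+\gamma)/(2+\gamma)$.
\item In the $3$-path branch, the two blockers give ratio $2/(3+\gamma)$.
\end{itemize}
In your sketch $r_1$'s neighbors are all light, so no $3$-path through $r_1$ can reach a heavy vertex. Also, a blocker attached to a heavy vertex that is still free would simply be matched directly. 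The algorithm picks its better branch, so with weight ratio $\omega$ the bound is $\max\{(\omega+1)/(2\omega+1),\,2\omega/(3\omega+1)\}$, not a minimum. This is minimized at $\omega=1+\sqrt2$; at $\omega=\sqrt2$ it is only about $0.63$. A single four-vertex instance suffices, so no replication over $N$ is needed.
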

Compare with the unweighted unit-budget case, where the optimal deterministic ratio is $3/5$.
Since $2-\sqrt2 < \frac35$, offline weights make $\womp_3(1,1)$ strictly harder.

We next consider $\womp_3(1,\infty)$, in which the offline side still has unit
reassignment budget but the online side has unlimited budget.

\begin{theorem}\label{thm:weighted-unbounded-optimal-ratio}
The optimal deterministic competitive ratio for offline-vertex-weighted
$\womp_3(1,\infty)$ is exactly the reciprocal of the golden ratio
\[
    (\sqrt5-1)/2 \approx 0.61803.
\]
More precisely, there is a deterministic online algorithm with competitive
ratio $(\sqrt5-1)/2$, and no deterministic online algorithm can achieve a larger
competitive ratio.
\end{theorem}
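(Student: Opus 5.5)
The proof has two halves: a greedy algorithm analyzed with the primal--dual template, and an adaptive adversary matching its ratio. Set $\phi:=(1+\sqrt5)/2$, so $1/\phi=(\sqrt5-1)/2$ and $\phi^2=\phi+1$.

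\emph{Algorithm.} The algorithm is \lcp-like but weight-aware. Note that any augmentation starting at $j$ changes the matched weight by exactly the weight of its free offline endpoint. The only hidden cost of a $3$-path $\apath{j}{x}{y}{i}$ is that $x$ becomes exhausted, since $s=1$.
\begin{enumerate}[leftmargin=*, itemsep=2pt]
\item Let $w_1$ be the largest weight of a free neighbor of $j$.
\item Let $w_3$ be the largest weight $w(i)$ over feasible $3$-paths $\apath{j}{x}{y}{i}$.
\item Take the $3$-path only if $w_3>\phi\, w_1$; otherwise match $j$ directly. The threshold is tentative and will be fixed by the dual recurrences.
\end{enumerate}
First I will check, as in Lemma~\ref{lemma:lcp-spgreedy}, that the rule never tries to reassign an offline vertex twice. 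Then Lemma~\ref{lem:history-components} applies, so every history component has type $T^Q$. Each component has a single active offline leaf, and all other offline vertices in it are exhausted.

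\emph{Dual.} I will use the weighted dual, whose constraints are $\alpha_i+\beta_j\ge w(i)$ for $ij\in E$. The constants below are tentative and will be tuned to make the weakest cases tight.
\begin{enumerate}[leftmargin=*, itemsep=2pt]
\item When offline vertex $i$ becomes matched, set $\alpha_i=c\,w(i)$ and give the arriving $j$ the value $\beta_j=(1-c)\,w(i)$ or larger.
\item When $x$ is used as the middle vertex, raise $\alpha_x$ to $w(x)$.
\item In postprocessing, flip values on non-frozen matched edges whose online endpoint has no free neighbor, as in Algorithm~\ref{alg:dual-lcp-1inf}.
\end{enumerate}
The objective increase per step must be charged to the gained weight $w(i)$. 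In a $3$-path step, the raise $(1-c)w(x)$ is paid for because the rule guarantees $w(i)>\phi\,w(x)$-type slack. This holds since $x$ was available as a direct candidate in some earlier comparison, or because the path was chosen as best.

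\emph{Feasibility.} The case analysis mirrors the proof of Theorem~\ref{thm:lcp-infinity}.
\begin{enumerate}[leftmargin=*, itemsep=2pt]
\item Both endpoints free: impossible.
\item $i$ free, $j$ matched: $j$ saw $i$ when it arrived. Hence $\beta_j\ge$ (what it got) $\ge w(i)/\phi$-type bound, by the threshold rule.
\item $i$ matched, $j$ free: use the middle-edge lemma (Lemma~\ref{lemma:middle}) exactly as before, since online vertices are never exhausted. Either $\alpha_i=w(i)$ after postprocessing, or $i$'s edge is frozen and $\alpha_i=w(i)$ already.
\end{enumerate}
Requiring each step to raise the dual by at most $\phi$ times the weight gained forces the constants to satisfy a quadratic. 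I expect this quadratic to be $c^2+c=1$, with the threshold equal to $\phi$.

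\emph{Upper bound.} I will build a short adaptive instance with offline weights in $\{1,\phi\}$. The adversary punishes each possible choice as follows.
\begin{enumerate}[leftmargin=*, itemsep=2pt]
\item A first vertex adjacent to a weight-$1$ vertex $a$ gets matched; otherwise stop.
\item An arrival $r_2$ is adjacent to $a$ and to a fresh vertex of weight $\phi$, while $r_1$ also sees a free weight-$\phi$ vertex. This offers the algorithm both a direct match and a $3$-path.
\item If the algorithm matches directly, the adversary adds blockers exploiting the now-blocked high-weight vertex.
\item If it augments, $a$ is exhausted, and a blocker adjacent only to the new endpoint (matched to a vertex with no free neighbor) cannot be served.
\end{enumerate}
Balancing the two branches should give ratio $1/\phi$ in each.

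\emph{Main obstacle.} I expect the hardest step to be the charging in the $3$-path update. The raise of $\alpha_x$ to the full $w(x)$ must be paid by $w(i)$, and this requires relating $w(x)$ to the decisions made when $x$'s current edge was created. If a single threshold fails, I would try a potential-style dual: store on each active leaf a value $\phi^{-1}w$, so that the budget for a later raise is pre-paid.
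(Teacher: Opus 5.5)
Your proposal has genuine gaps in both halves. The obstacle you flag is real for the rule you chose.

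\textbf{Charging in the $3$-path step.} With $c=\rho$ (as you predict), the new active vertex costs $\rho\,w(i)$. Raising $\alpha_x$ to $w(x)$ costs $\lambda w(x)$, where $\lambda=1-\rho=\rho^2$. To stay within the budget $w(i)/\rho=(1+\rho)w(i)$ you therefore need $\beta_j\le w(i)-\lambda w(x)$.

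The slack you invoke points the wrong way. The vertex $i$ is free now, so it was an available endpoint when $x$ became the partner of $y$. Your greedy choice at that time gives $w(x)\ge w(i)$. Moreover, your threshold compares $w(i)$ only with $w_1$, never with $w(x)$.

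For instance, let $N(r)=\{x,i\}$ with $w(x)=W$ and $w(i)=1$, and then $N(j)=\{x\}$. Your rule augments along $\apath{j}{x}{r}{i}$, and the step costs at least $\rho+\lambda W>1+\rho$ once $W>\phi^2$. So the dual you describe cannot certify your rule.

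The paper builds this cost into the algorithm. A $3$-path has score $w(i)-\lambda w(x)$ and a direct path has score $w(i)$. The algorithm takes a path of maximum \emph{nonnegative} score, and $\beta_j$ is set to that score. Each step then costs exactly $w(i)/\rho$, $\beta_j\ge0$ is automatic, and score-greediness supplies every comparison inequality needed later. In the example above, it simply leaves $j$ unmatched.

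This also handles your case ``$i$ free, $j$ matched''. There $\alpha_i=0$ forces $\beta_j\ge w(i)$, so a $w(i)/\phi$-type bound is not enough.

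\textbf{Feasibility.} Your case list omits the central case, in which both endpoints are matched. In the unweighted proof this case was trivial because every matched vertex had dual at least $1/2$; here it is not.

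Take $N(r)=\{a,f\}$ with $w(a)=1$, $w(f)=\varepsilon$, and $N(j)=\{a,d\}$ with $w(d)=\varepsilon$. Your rule matches $r$ to $a$ and $j$ directly to $d$, so $\beta_j\le\varepsilon$. Since $r$ keeps the free neighbor $f$, no flip occurs, and $\alpha_a+\beta_j\le\rho+\varepsilon<1$.

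The paper handles this in three steps:
\begin{enumerate}
\item It proves monotonicity $W_0\ge\cdots\ge W_q\ge F_C$ of weights inside each history component.
\item It transfers a graded amount $\eta_C=\min\{\lambda W_C^{\min},\,W_C^{\max}-F_C\}$ from $\beta_{\rootof(C)}$ to $\alpha_{\act(C)}$.
\item It treats separately whether $i$ was free or already matched to the root when $j$ arrived, including the subcase where $j$ is itself a root. Each case is settled by score-greediness against $\apath{j}{i}$ or $\apath{j}{i}{r}{f}$.
\end{enumerate}

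\textbf{Upper bound.} A short instance does not close the argument. As described, $r_1$ sees both $a$ and a free weight-$\phi$ vertex $b$, so the algorithm may simply take $b$. Your sketch does not treat that branch.

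Suppose $r_2$ then sees $b$ and a fresh vertex of weight $q$. Blockers push a direct match down to ratio $\tau=1/\phi$ only if $q\le 0$, and push the $3$-path down to $\tau$ only if $q\ge\phi$. So the adversary must keep going, while the algorithm can keep extending a single component.

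The paper therefore uses a $K$-phase chain. A root is adjacent to weight-$1$ vertices $a_0,\dots,a_{K+1}$. Each phase vertex is adjacent to the current active vertex and to a decoy $d_h$ of weight $p_h(\theta_K)$. These weights are chosen so that every stopping branch, and the full run, gives ratio at most $\theta_K$. Since $\theta_K\downarrow\tau$, the bound is reached only in the limit.
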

Compare with the unweighted unit-budget case, where the optimal deterministic ratio is $2/3$.
Since $(\sqrt5-1)/2 < \frac23$, offline weights make $\womp_3(1,\infty)$ strictly harder.

Both ratios are achieved by greedy-type algorithms and matched by adaptive
upper bounds against deterministic algorithms. Finally, we observe that if
weights are placed on online vertices instead, no positive deterministic
competitive ratio is possible.

We next prove these two exact ratios. Subsection~\ref{subsec:weighted-11}
presents the $\womp_3(1,1)$ algorithm and its analysis, and
Subsection~\ref{subsec:weighted-11-upper} gives the corresponding upper bound.
Subsection~\ref{subsec:weighted-1inf} presents the $\womp_3(1,\infty)$ algorithm
and its analysis, and Subsection~\ref{subsec:weighted-1inf-upper} gives the
corresponding upper bound. Finally, Subsection~\ref{subsec:weighted-online-side}
explains why restricting weights to the offline side is necessary.

We begin by stating the LP relaxation for maximum-weight bipartite matching
(with weights on $L$) and a weighted primal--dual template, which will be used
in both algorithmic analyses. The primal LP and its dual are:
\[
\begin{array}{@{}c@{\qquad}c@{}}
\begin{aligned}
(P_{G,w})\qquad
\max \ & \sum_{ij\in E} w(i)\,x_{ij}\\
\text{s.t.}\ &
\sum_{j\in N(i)} x_{ij} \le 1 \qquad \forall i\in L,\\
&
\sum_{i\in N(j)} x_{ij} \le 1 \qquad \forall j\in R,\\
& x_{ij}\ge 0 \qquad \forall ij\in E
\end{aligned}
&
\begin{aligned}
(D_{G,w})\qquad
\min \ & \sum_{i\in L}\alpha_i+\sum_{j\in R}\beta_j\\
\text{s.t.}\ &
\alpha_i+\beta_j \ge w(i) \qquad \forall ij\in E,\\
& \alpha_i\ge 0 \qquad \forall i\in L,\\
& \beta_j\ge 0 \qquad \forall j\in R
\end{aligned}
\end{array}
\]

\begin{lemma}[Weighted primal--dual template]\label{lemma:weighted-pd-template}
Let $\mathcal A$ be an online algorithm. Suppose that during the execution of
$\mathcal A$, we maintain dual variables $\alpha_i,\beta_j\ge 0$, initially all
zero, such that whenever $\mathcal A$ newly matches an offline vertex $i\in L$,
the dual objective
\[
    \sum_{x\in L}\alpha_x+\sum_{y\in R}\beta_y
\]
increases by at most $w(i)/\rho$. Suppose further that, after $\mathcal A$
terminates, an optional postprocessing step is applied that does not increase
the dual objective, and the final variables are feasible for $(D_{G,w})$. Then
$\mathcal A$ is $\rho$-competitive.
\end{lemma}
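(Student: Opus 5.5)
The plan is to mirror the proof of Lemma~\ref{lemma:pd-template}, replacing the unit weights by $w$. Let $X=\sum_{i\in L \text{ matched}} w(i)$ denote the final weight of $\mathcal A$, and let $Y=\sum_{x\in L}\alpha_x+\sum_{y\in R}\beta_y$ be the final dual objective. The first step is to argue that $Y\le X/\rho$.

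\begin{itemize}
\item At the start, all dual variables are zero.
\item The hypothesis does not allow any other increase: the dual changes only when an offline vertex $i$ becomes newly matched, and then it rises by at most $w(i)/\rho$.
\item In the model, an augmentation never frees a previously matched vertex, and each offline vertex becomes newly matched at most once. So the charged increments range over distinct offline vertices, namely exactly those matched at the end.
\item Summing these increments gives at most $X/\rho$ before postprocessing. The postprocessing step does not increase the objective, so $Y\le X/\rho$.
\end{itemize}

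I will state explicitly that the dual may change only at these matching events, or during the non-increasing postprocessing step, since this is implicit in the template.

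The second step is LP duality. Final feasibility of $(\alpha,\beta)$ for $(D_{G,w})$ gives $\opt(D_{G,w})\le Y$. Weak duality then gives $\opt(P_{G,w})\le \opt(D_{G,w})$. (Strong duality holds as well, but weak duality suffices.)

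The third step relates the LP to the true optimum. The constraint matrix of $(P_{G,w})$ is the incidence matrix of a bipartite graph, which is totally unimodular. Hence the polytope is integral, and the maximum-weight matching value $\opt_w(G)$ equals $\opt(P_{G,w})$; in fact only the inequality $\opt_w(G)\le \opt(P_{G,w})$ is needed, and it holds trivially because every matching is a feasible $0/1$ point. Chaining everything,
\[
\opt_w(G)\ \le\ \opt(P_{G,w})\ \le\ Y\ \le\ X/\rho,
\]
so $X\ge \rho\cdot \opt_w(G)$ on every instance, which is exactly $\rho$-competitiveness.

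The only delicate point is the accounting in the first step: we must ensure that no offline vertex is charged twice. This holds because matched offline vertices stay matched, and reassignments of an offline vertex do not count as "newly matching" it. Everything else is routine.
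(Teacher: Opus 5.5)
Your proposal is correct and follows the paper's proof essentially step by step. You charge each newly matched offline vertex $i$ at most $w(i)/\rho$; each vertex is charged once, since matched offline vertices never become free, so the charged set is exactly the vertices matched in the final matching. You then use that postprocessing does not increase the objective, and conclude by weak duality. Your observation that only the trivial inequality between the maximum-weight matching value and $\opt(P_{G,w})$ is needed, not integrality, makes explicit a step the paper leaves implicit.
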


\begin{proof}
Let $M$ be the final matching produced by $\mathcal A$, and let
$W_{\mathcal A}=\sum_{ij\in M} w(i)$ be its total weight. In this model, once an
offline vertex is matched, it never becomes free again. Thus, the set of offline
vertices newly matched during the execution is exactly the set of vertices matched
in $M$.

By hypothesis, each newly matched offline vertex $i$ increases the
dual objective by at most $w(i)/\rho$. The total increase during the online phase is
therefore bounded by $\sum_{ij\in M} w(i)/\rho = W_{\mathcal A}/\rho$. The postprocessing
step does not increase the dual objective. Since the final variables are feasible
for $(D_{G,w})$, weak duality gives
\[
    \opt(P_{G,w})
    \le
    \sum_{i\in L}\alpha_i+\sum_{j\in R}\beta_j
    \le
    \frac{W_{\mathcal A}}{\rho}.
\]
We conclude that $W_{\mathcal A}\ge \rho\cdot \opt(P_{G,w})$, so $\mathcal A$ is 
$\rho$-competitive.
\end{proof}

\subsection{\texorpdfstring{$\womp_3(1,1)$}{WOMP3(1,1)}: Algorithm and Analysis}
\label{subsec:weighted-11}

Our threshold greedy algorithm $\greedy_{q,\delta}$
(Algorithm~\ref{alg:greedy-q-delta}) works as follows. When an online vertex $j\in R$ arrives, let
$D(j)$ be a heaviest free neighbor of $j$; if none exists, set $D(j)=\bot$ and use the convention $w(D(j))=0$.
A feasible augmenting $3$-path $\apath{j}{x}{y}{i}$ is \emph{eligible} if the matched edge $xy$ 
is not frozen, $i$ is free, and the weights satisfy
\[
    w(x)\ge q\,w(D(j))
    \qquad\text{and}\qquad
    w(i)\ge \delta\,w(x).
\]
Combinatorially, this ensures that an augmentation is performed only if displacing the current match $x$ yields a sufficient multiplicative gain over the best direct match $D(j)$, and the newly matched offline vertex $i$ is not too light compared to $x$.

If any eligible paths exist, the algorithm augments along one that maximizes the 
pair $(w(x),w(i))$ lexicographically (i.e., it maximizes $w(x)$ first, and then 
maximizes $w(i)$ among those achieving the maximum $w(x)$). Otherwise, it matches 
$j$ directly to $D(j)$ if possible.

\begin{algorithm}[ht]
  \SetAlgoLined
  \DontPrintSemicolon
  \KwIn{$L$ offline side, $w$ weight function on $L$, parameters $q\ge 1$ and $\delta\in[0,1]$}
  \KwOut{A matching $M$}

  Initialize $M\gets \emptyset$\;

  \ForAll{arrivals $j\in R$}{
    $D(j)\gets$ a free neighbor $i\in N(j)$ of maximum weight $w(i)$
    (if none exists, set $D(j)=\bot$ and $w(D(j))=0$)\;

        Let $\mathcal P(j)$ be the set of eligible augmenting $3$-paths $\apath{j}{x}{y}{i}$,
    where eligibility means that $xy\in M$, the matched edge $xy$ is not frozen,
    $i$ is free, $w(x)\ge q\cdot w(D(j))$, and $w(i)\ge \delta\cdot w(x)$\;

    \uIf{$\mathcal P(j)\neq\emptyset$}{
      Choose a path $P(j)=\apath{j}{x}{y}{i}\in\mathcal P(j)$ that maximizes
      $(w(x),w(i))$ lexicographically\;

      Augment along $P(j)$. That is, set
      $M\gets (M\setminus \{xy\}) \cup \{jx,yi\}$.
    }
    \lElseIf{$D(j)\neq\bot$}{ Augment along $\apath{j}{D(j)}$.
    }
  }
  \caption{$\greedy_{q,\delta}$ with parameters $q\ge 1$ and $\delta\in[0,1]$.}
  \label{alg:greedy-q-delta}
\end{algorithm}

When the algorithm augments along a path $\apath{j}{x}{y}{i}$, we define
$S(j)=x$ and $T(j)=i$. By definition, the pair $(w(S(j)), w(T(j)))$ satisfies
$w(S(j))\ge q\,w(D(j))$ and $w(T(j))\ge \delta\,w(S(j))$, and is lexicographically maximum
among all eligible paths (ordered by $(w(x),w(i))$). The proof uses this lexicographic rule only when comparing the chosen path with another eligible path available at the same arrival. In particular, if the two paths have the same middle offline vertex, then the chosen path is guaranteed to have a free endpoint weight at least as large as the alternative one.

\begin{theorem}\label{thm:weighted-lower-bound}
    Choosing $q=1+\sqrt2$ and $\delta=1/\sqrt2$, the
    $\greedy_{q,\delta}$ algorithm is $(2-\sqrt2)$-competitive for
    offline-vertex-weighted $\womp_3(1,1)$.
\end{theorem}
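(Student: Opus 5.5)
We plan to apply the weighted primal--dual template, Lemma~\ref{lemma:weighted-pd-template}, with $\rho=2-\sqrt2$, so $1/\rho=(2+\sqrt2)/2=1+1/\sqrt2$. Write $\lambda:=1/\rho-1=1/\sqrt2=\delta$. The dual will be built online.

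\textbf{Online dual updates.} On a direct match $\apath{j}{i}$ we set $\alpha_i\gets \lambda w(i)$ and $\beta_j\gets w(i)$. The increase is $(1+\lambda)w(i)=w(i)/\rho$.

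On an augmentation $\apath{j}{x}{y}{i}$ the edge $xy$ is not frozen, so $x$ and $y$ are both unreassigned and $xy$ was created as a fresh match. We plan to maintain the invariant that, at that moment, $(\alpha_x,\beta_y)=(\lambda w(x),w(x))$. We then update
\[
\alpha_x\gets w(x),\qquad \beta_j\gets c\,w(x),\qquad \alpha_i\gets \lambda w(i),\qquad \beta_y\gets w(i).
\]
Here $c$ is a constant still to be chosen. The increase is $(1-\lambda)w(x)+c\,w(x)+\lambda w(i)+w(i)-w(x)$, and it must be at most $(1+\lambda)w(i)$. Equivalently, $(c-\lambda)w(x)\le 0$, which forces $c\le\lambda$. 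With $c=\lambda$ the new edge $yi$ again satisfies the invariant (up to $y$ now being exhausted when $t=1$).

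A postprocessing step, as in Algorithm~\ref{alg:dual-lcp-1t}, swaps $(\alpha_i,\beta_j)\gets(w(i),\lambda w(i))$ on every non-frozen matched edge whose online endpoint has no free neighbour at the end. This preserves the objective.

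\textbf{Dual feasibility.} We fix $ij\in E$ and split into cases according to the final states of $i$ and $j$.

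\begin{enumerate}[leftmargin=*]
\item Both $i$ and $j$ are free. This is impossible, since the algorithm never leaves $j$ unmatched when it has a free neighbour.

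\item $i$ is exhausted or postprocessed. Then $\alpha_i=w(i)$, and feasibility holds.

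\item $j$ is matched directly to $D(j)$. Then $\beta_j\ge\lambda w(D(j))$ (or $w(D(j))$ before the swap). If $i$ was free at $j$'s arrival, then $w(i)\le w(D(j))$. We need $\beta_j+\alpha_i\ge w(i)$.
 \begin{itemize}
 \item If $i$ is still free and $j$ was not postprocessed, then $\beta_j=w(D(j))\ge w(i)$.
 \item If $i$ was matched before $j$ arrived, we use the fact that no eligible path through $i$ was taken. Either $w(i)<q\,w(D(j))$, or $i$'s partner had no free neighbour heavy enough, namely with weight at least $\delta w(i)$, or the edge was frozen. The constraints $q=1+\sqrt2$ and $\delta=1/\sqrt2$ should be exactly what makes $\lambda w(i)+\beta_j\ge w(i)$ hold in each subcase. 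For instance, in the first subcase $\beta_j\ge w(D(j))>w(i)/q$ and $\lambda+1/q=1/\sqrt2+(\sqrt2-1)=\ldots\ge 1$ needs checking. This is where the precise parameter values enter.
 \end{itemize}

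\item $j$ is matched by a $3$-path with middle vertex $S(j)$. Then $\beta_j=\lambda w(S(j))$. The lexicographic maximality of $(w(S(j)),w(T(j)))$, compared with any eligible path through $i$, gives $w(S(j))\ge w(i)$ whenever such a path was eligible. Then $\lambda w(S(j))+\lambda w(i)\ge 2\lambda w(i)=\sqrt2\,w(i)\ge w(i)$. Otherwise we fall back on the ineligibility reasons listed in case~3.

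\item $j$ is free and $i$ is matched but not postprocessed. We use the middle-edge lemma, Lemma~\ref{lemma:middle}, adapted to the thresholds. Because of the weight thresholds, $\greedy_{q,\delta}$ is not \spgreedy, so the lemma does not apply verbatim. We must redo its argument, and the conclusion becomes that $i$'s partner has no free neighbour of weight at least $\delta w(i)$. Note that a light free neighbour can remain and block the postprocessing.
\end{enumerate}

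\textbf{Main obstacle.} We expect the main obstacle to be case~5 together with the ``$i$ matched before $j$, path ineligible'' subcases. Pure weight thresholds may leave $\alpha_i=\lambda w(i)$ with $\beta_j$ small, so the dual likely needs refinement. Our first attempt would be to make the postprocessing conditional on ``no free neighbour of weight $\ge\delta w(i)$'' rather than ``no free neighbour''. We would also give the light free endpoint's weight to $\beta_y$ so that the edge from $y$ to that light free endpoint stays covered; since $\beta_y\ge\delta w(i)$ must cover it, it does. We then choose $q$ and $\delta$ by making the two binding inequalities tight. Solving them should yield $q=1+\sqrt2$, $\delta=1/\sqrt2$ and $\rho=2-\sqrt2$.
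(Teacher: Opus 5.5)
There is a genuine gap. It sits in your $3$-path dual update, and the refinements you list for Case~5 cannot repair it. In $\womp_3(1,1)$, after an augmentation $\apath{j}{x}{y}{i}$ the online vertex $y$ has spent its unit budget, so the new edge $yi$ is frozen immediately. Your postprocessing acts only on non-frozen edges, so $\alpha_i=\lambda w(i)$ is final. Your Case~2 does not apply, since $i$ is neither exhausted nor postprocessed. The middle-edge reasoning you plan for Case~5 is also unavailable, because $iy$ is frozen.

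A concrete failure: take unit-weight offline vertices $x,i,h$.
\begin{enumerate}
\item Let $y$ arrive with $N(y)=\{x,i,h\}$ and, after renaming, be matched directly to $x$.
\item Let $j$ arrive with $N(j)=\{x\}$. Then $D(j)=\bot$, and after renaming the algorithm augments along $\apath{j}{x}{y}{i}$.
\item Let $j'$ arrive with $N(j')=\{i\}$. Its only candidate path uses the frozen edge $iy$, so $j'$ stays free.
\end{enumerate}
Then $\alpha_i+\beta_{j'}=1/\sqrt2<1=w(i)$. Adding a light neighbour $d$ to $j'$ breaks your Case~3 the same way, since then $\beta_{j'}\le w(d)$. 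Swapping on $yi$ does not help: $h$ is a final free neighbour of $y$ with $w(h)=w(i)$, so $\beta_y$ must stay at $w(i)$. On a $3$-path you therefore need $\alpha_i+\beta_y=2w(i)$. Your conclusion that ``$c\le\lambda$ is forced'' is only an artifact of having fixed $\alpha_i=\lambda w(i)$.

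The missing idea is the paper's update: $\alpha_x\gets w(x)$, $\alpha_i\gets w(i)$, $\beta_y\gets w(i)$, $\beta_j\gets w(x)/q$. Its increase is $2w(i)-(\delta-1/q)w(x)$. To bound this by $(1+\delta)w(i)$ you need $w(i)\le w(x)$. This holds because the non-frozen edge $xy$ was created by a direct match to $x=D(y)$ while $i$ was already a free neighbour of $y$. Then use $\delta-1/q=1-\delta$.

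The smaller value $\beta_j=w(x)/q$ still suffices. Eligibility gives $w(x)/q\ge w(D(j))$, which covers every edge from $j$ to a vertex that was free when $j$ arrived, and $j$ is never reassigned because $jx$ is frozen. Writing $\beta_j=c\,w(x)$, coverage needs $c\ge 1/q$, and the budget in the worst case $w(x)=w(i)$ needs $c\le 2\delta-1$. This is what ties $q$ to $\delta$ via $1/q=2\delta-1$.

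After this change, $\alpha_i=\delta w(i)$ survives only on a non-postprocessed direct edge $iy$. With your conditional postprocessing rule, $y$ then has a free neighbour $h$ with $w(h)>\delta w(i)$. No adapted middle-edge lemma is needed: split on whether $j$ arrived before or after $y$. If $j$ arrived after $y$, then $\apath{j}{i}{y}{h}$ was feasible and met the endpoint condition, so only $w(i)\ge q\,w(D(j))$ can have failed.

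One more correction: when $j$ is itself postprocessed you only have $\beta_j=\delta w(D(j))$. So the inequality you need there is $\delta/q\ge 1-\delta$, which is tight for these parameters, not $\lambda+1/q\ge 1$.
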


These values are chosen to balance the inequalities used in the primal--dual
analysis. This balance is used in Lemma~\ref{lem:weighted-dual-objective}.

Fix an instance $(G,\pi,w)$, where $G=(L,R,E)$, $\pi$ is the arrival order on $R$, and
$w\colon L\to\mathbb R_{\ge 0}$ assigns weights to offline vertices. Let $\mathcal A$
be the $\greedy_{q,\delta}$ algorithm for some parameter $q\ge 1$, to be determined later.
The dual updates and the postprocessing step are illustrated in
Figure~\ref{fig:weight11-updates}
(\subref{fig:weight11-direct}--\subref{fig:weight11-post}).

For the rest of the section, set
\[
    q=1+\sqrt2,
    \qquad
    \delta=\frac1{\sqrt2},
    \qquad
    \rho=\frac1{1+\delta}=2-\sqrt2.
\]

We will use the identities and inequalities
\[
    \delta-\frac1q=1-\delta,
    \qquad
    \frac{\delta}{q}=1-\delta,
    \qquad
    \frac1q\ge 1-\delta,
    \qquad
    \delta\ge 1-\delta.
\]
We next describe the online dual updates as a function of the moves performed
by $\mathcal A=\greedy_{q,\delta}$.

Unlike the unweighted dual constructions above, the weighted dual updates are not monotone in individual variables. A $3$-path update may decrease the value of the online dual variable associated with the middle matched vertex. This is harmless: the weighted primal--dual template only uses the net change in the dual objective and final dual feasibility.

\begin{algorithm}[ht]
  \SetAlgoLined
  \DontPrintSemicolon
  \KwIn{Instance $(G,\pi,w)$ and the execution of $\mathcal A$}
  \KwOut{Dual variables $(\alpha_i)_{i\in L}$ and $(\beta_j)_{j\in R}$}

  \lForAll{$i\in L,\, j\in R$}{Set $\alpha_i\gets 0$, $\beta_j\gets 0$}

  \ForAll(\tcp*[f]{dual update steps}){$j\in R$ in arrival order, while running $\mathcal A$}{
    \lIf{$\mathcal A$ matches $j$ directly to $i\in L$}{
      Set $\alpha_i \gets \delta\, w(i)$ and $\beta_j \gets w(i)$
    }
    \uElseIf{$\mathcal A$ matches $j$ via an augmenting $3$-path $\apath{j}{x}{y}{i}$}{
      Set $\alpha_x \gets w(x)$, $\alpha_i \gets w(i)$,
      $\beta_y \gets w(i)$, and $\beta_j \gets w(x)/q$
    }
  }

  Let $M$ be the final matching constructed by $\mathcal A$\;
  \ForAll(\tcp*[f]{postprocessing step}){$ij\in M$ with $i\in L$ and $j\in R$}{
    \If{neither $i$ nor $j$ is exhausted, and every free neighbor $h$ of $j$ at the end satisfies $w(h)\le \delta\,w(i)$}{
      Set $\alpha_i \gets w(i)$ and $\beta_j \gets \delta\, w(i)$
    }
  }
  \caption{Dual for offline-weighted $\womp_3(1,1)$.}
  \label{alg:dual-weight}
\end{algorithm}

\begin{figure}[ht]
\centering

\begin{minipage}[b]{0.48\textwidth}
    \centering
    \begin{subfigure}[b]{\textwidth}
        \centering
        \scalebox{0.75}{
        \begin{tikzpicture}
            \useasboundingbox (-2, -0.4) rectangle (4, 5.9);
            
            \Vertex[x=0,y=4.0, style={color=white}, label={$\alpha_i=0$}, position=left]{yi_top} 
            \Vertex[x=2,y=5.5, style={color=white}, label={$\beta_y=w(x)$}, position=right]{yy_top}
            \Vertex[x=0,y=5.5, style={color=white}, label={$\alpha_x=\delta w(x)$}, position=left]{yx_top} 
            \Vertex[x=2,y=4.0, style={color=white}, label={$\beta_j=0$}, position=right]{yj_top}

            \Vertex[x=0,y=4.0, size=.5, color=white, style=dashed, label=$w(i)$, position=above]{i_top} 
            \Vertex[x=2,y=5.5, size=.5, color=black, label=$y$, position=above]{y_top}
            \Vertex[x=0,y=5.5, size=.5, color=black, label=$w(x)$, position=above]{x_top} 
            \Vertex[x=2,y=4.0, size=.5, color=white, style=dashed, label=$j$, position=above]{j_top}
            
            \Edge[style=dashed](i_top)(y_top)
            \Edge[style=dashed](x_top)(j_top)
            \Edge[style=solid](x_top)(y_top)

            \draw[->, >=stealth, line width=1.5pt] (1, 3.5) -- (1, 2.0);

            \Vertex[x=0,y=0, style={color=white}, label={$\alpha_i=w(i)$}, position=left]{yi_bot} 
            \Vertex[x=2,y=1.5, style={color=white}, label={$\beta_y=w(i)$}, position=right]{yy_bot}
            \Vertex[x=0,y=1.5, style={color=white}, label={$\alpha_x=w(x)$}, position=left]{yk_bot} 
            \Vertex[x=2,y=0, style={color=white}, label={$\beta_j=w(x)/q$}, position=right]{yj_bot}

            \Vertex[x=0,y=0, size=.5, color=black, label=$w(i)$, position=above]{i_bot} 
            \Vertex[x=2,y=1.5, size=.5, color=black, label=$y$, position=above]{y_bot}
            \Vertex[x=0,y=1.5, size=.5, color=black, label=$w(x)$, position=above]{x_bot} 
            \Vertex[x=2,y=0, size=.5, color=black, label=$j$, position=above]{j_bot}
            
            \Edge[style={line width=2pt}](i_bot)(y_bot)
            \Edge[style={line width=2pt}](x_bot)(j_bot)
            \Edge[style=dashed](x_bot)(y_bot)
        \end{tikzpicture}
        }
        \caption{$j$ is matched via eligible path $\apath{j}{x}{y}{i}$}
        \label{fig:weight11-aug3}
    \end{subfigure}
\end{minipage}%
\hfill
\begin{minipage}[b]{0.48\textwidth}
    \centering
    \begin{subfigure}[b]{\textwidth}
        \centering
        \scalebox{0.75}{
        \begin{tikzpicture}
            \useasboundingbox (-2, -0.4) rectangle (4, 2.4);
            
            \Vertex[x=0, y=2.0, style={color=white}, label={$\alpha_i=0$}, position=left]{yi_top} 
            \Vertex[x=2, y=2.0, style={color=white}, label={$\beta_j=0$}, position=right]{yj_top}
            \Vertex[x=0, y=2.0, size=.5, color=white, style=dashed, label=$w(i)$, position=above]{i_top} 
            \Vertex[x=2, y=2.0, size=.5, color=white, style=dashed, label=$j$, position=above]{j_top}
            \Edge[style=dashed](i_top)(j_top)

            \draw[->, >=stealth, line width=1.5pt] (1, 1.6) -- (1, 0.4);

            \Vertex[x=0, y=0, style={color=white}, label={$\alpha_i=\delta w(i)$}, position=left]{yi_bot} 
            \Vertex[x=2, y=0, style={color=white}, label={$\beta_j=w(i)$}, position=right]{yj_bot}
            \Vertex[x=0, y=0, size=.5, color=black, label=$w(i)$, position=above]{i_bot} 
            \Vertex[x=2, y=0, size=.5, color=black, label=$j$, position=above]{j_bot}            
            \Edge[style={line width=2pt}](i_bot)(j_bot)
        \end{tikzpicture}
        }
        \caption{Direct match}
        \label{fig:weight11-direct}
    \end{subfigure}

    \vspace{0.5cm}

    \begin{subfigure}[b]{\textwidth}
        \centering
        \scalebox{0.75}{
        \begin{tikzpicture}
            \useasboundingbox (-2, -0.4) rectangle (4, 2.4);
            
            \Vertex[x=0,y=2.0, style={color=white}, label={$\alpha_i=\delta w(i)$}, position=left]{yi_top} 
            \Vertex[x=2,y=2.0, style={color=white}, label={$\beta_j=w(i)$}, position=right]{yj_top}
            \Vertex[x=0,y=2.0, size=.5, color=black, label=$w(i)$, position=above]{i_top} 
            \Vertex[x=2,y=2.0, size=.5, color=black, label=$j$, position=above]{j_top}
            
            \Edge[style=solid](i_top)(j_top)

            \draw[->, >=stealth, line width=1.5pt] (1, 1.6) -- (1, 0.4);
            
            \Vertex[x=0,y=0, style={color=white}, label={$\alpha_i=w(i)$}, position=left]{yi_bot} 
            \Vertex[x=2,y=0, style={color=white}, label={$\beta_j=\delta w(i)$}, position=right]{yj_bot}

            \Vertex[x=0,y=0, size=.5, color=black, label=$w(i)$, position=above]{i_bot} 
            \Vertex[x=2,y=0, size=.5, color=black, label=$j$, position=above]{j_bot}

            \Edge[style=solid](i_bot)(j_bot)
        \end{tikzpicture}
        }
        \caption{Postprocessing step.}
        \label{fig:weight11-post}
    \end{subfigure}
\end{minipage}

\caption{Dual updates used in Algorithm~\ref{alg:dual-weight}.}
\label{fig:weight11-updates}
\end{figure}

The following lemma formalizes the dual increase.

\begin{lemma}\label{lem:weighted-dual-objective}
Each time $\mathcal A$ matches a new offline vertex $i\in L$, the dual objective $\displaystyle 
    \sum_{x\in L}\alpha_x+\sum_{y\in R}\beta_y$
increases by at most $\displaystyle
    (1+\delta)w(i)= w(i) / \rho.$
Furthermore, the postprocessing step preserves the total dual objective value.
\end{lemma}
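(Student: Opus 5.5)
The plan is to check the dual change separately for the two kinds of online moves in Algorithm~\ref{alg:dual-weight}, and then for the postprocessing step. The useful structural fact is that, with $s=t=1$, a matched edge that is not frozen must have been created by a \emph{direct} match and never touched afterwards. I would prove this first.

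Let $xy\in M$ be a current matching edge whose endpoints are both unexhausted. Since $t=1$, the online vertex $y$ has never been reassigned, so its current partner $x$ was assigned when $y$ arrived. That assignment was either a direct match of $y$ to $x$, or a $3$-path $\apath{y}{x}{y'}{i'}$. The latter is impossible: in that path $x$ is the middle offline vertex, so it would have been reassigned and, since $s=1$, would now be exhausted. Hence $y$ was matched directly to $x$, and Algorithm~\ref{alg:dual-weight} set $(\alpha_x,\beta_y)=(\delta w(x),w(x))$.

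These values have not changed since. Changing them would require $x$ or $y$ to be an internal vertex of a later $3$-path, which would reassign it. The value $\alpha_x$ could also have been written earlier, when $x$ was the free endpoint of a $3$-path, but that earlier value is overwritten at the direct match. Moreover, $D(y)=x$ was a heaviest free neighbor of $y$ at $y$'s arrival.

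\textbf{Direct match} of $j$ to $i$. Both $i$ and $j$ were free, so both variables were $0$. The increase is therefore $\delta w(i)+w(i)=(1+\delta)w(i)$.

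\textbf{Three-path} $\apath{j}{x}{y}{i}$. The middle edge $xy$ is eligible, hence not frozen, so by the fact above $(\alpha_x,\beta_y)=(\delta w(x),w(x))$ beforehand. Also $\alpha_i=\beta_j=0$, since $i$ is free and $j$ is new. After the update the variables are $(w(x),w(i),w(i),w(x)/q)$, so the net change is
\[
(1-\delta)w(x)+w(i)+\bigl(w(i)-w(x)\bigr)+\frac{w(x)}{q}
=2w(i)-\Bigl(\delta-\frac1q\Bigr)w(x)
=2w(i)-(1-\delta)w(x),
\]
using the identity $\delta-1/q=1-\delta$. This is at most $(1+\delta)w(i)$ exactly when $w(i)\le w(x)$.

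\textbf{The key step is establishing $w(i)\le w(x)$}, since eligibility only gives a lower bound on $w(i)$. I would argue it as follows. The vertex $i$ is free now, and offline vertices never become free after being matched, so $i$ was free when $y$ arrived. The edge $yi$ was already revealed at that time. Since $y$ was matched directly to $x=D(y)$, a heaviest free neighbor, we get $w(x)\ge w(i)$. Thus the increase is at most $(1+\delta)w(i)=w(i)/\rho$, and the offline vertex newly matched in this move is exactly $i$.

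\textbf{Postprocessing.} The step only acts on edges $ij\in M$ with neither endpoint exhausted. By the structural fact, such an edge came from a direct match of $j$ to $i$ and was never modified, so its values are $(\delta w(i),w(i))$. Postprocessing replaces them with $(w(i),\delta w(i))$, which swaps the two values and preserves their sum. Hence the total dual objective is unchanged.
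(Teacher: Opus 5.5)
Your proof is correct and follows the paper's argument. It uses the same case split, the same observation that under unit budgets a non-frozen middle edge must come from an untouched direct match (so $(\alpha_x,\beta_y)=(\delta w(x),w(x))$ beforehand), and the same use of $x=D(y)$ being a heaviest free neighbor of $y$ to get $w(i)\le w(x)$. You also spell out why every postprocessed edge carries the values $(\delta w(i),w(i))$, which the paper's proof uses without comment.
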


\begin{proof}
We compute the net change in the dual objective; individual variables need not increase in this weighted construction.
If $j$ is matched directly to $i$, then $\alpha_i$ and $\beta_j$ are new
variables, and the update increases the dual objective by
$\delta w(i)+w(i)=(1+\delta)w(i)$.

Suppose now that $j$ is matched via an augmenting $3$-path $\apath{j}{x}{y}{i}$. Since the middle edge $xy$ is not frozen and both budgets are one, neither
$x$ nor $y$ has been reassigned before. Hence $xy$ could not have been
created by an earlier $3$-path: then either $x$ or $y$ would already be
exhausted. Therefore $xy$ was previously created by a direct match. Hence, before the update,
\[
    \alpha_x+\beta_y=\delta w(x)+w(x)=(1+\delta)w(x).
\]
The variables $\alpha_i$ and $\beta_j$ are new, while $\alpha_x$ and $\beta_y$
are updated to
\[
    \alpha_x=w(x),\qquad
    \alpha_i=w(i),\qquad
    \beta_y=w(i),\qquad
    \beta_j=\frac{w(x)}q.
\]
Thus the increase in the dual objective is
\[
    \left(w(x)+2w(i)+\frac{w(x)}q\right)-(1+\delta)w(x)
    =
    2w(i)-\left(\delta-\frac1q\right)w(x).
\]
For $q=1+\sqrt2$ and $\delta=1/\sqrt2$, we have
$\delta-1/q=1-\delta$. Also, $w(i)\le w(x)$: when $y$ was matched directly to
$x$, the vertex $i$ was a free neighbor of $y$, and $x$ was chosen as a
heaviest free neighbor of $y$. Therefore the increase is at most
\[
    2w(i)-(1-\delta)w(i)=(1+\delta)w(i).
\]

Finally, the postprocessing step replaces
$(\alpha_i,\beta_j)=(\delta w(i),w(i))$ with
$(\alpha_i,\beta_j)=(w(i),\delta w(i))$, and hence preserves the dual
objective.
\end{proof}

\begin{proof}[Proof of Theorem~\ref{thm:weighted-lower-bound}]
Let $q=1+\sqrt2$, $\delta=\frac1{\sqrt2}$, and $\rho=\frac1{1+\delta}=2-\sqrt2$.
By Lemmas~\ref{lemma:weighted-pd-template} and
\ref{lem:weighted-dual-objective}, it remains to prove that the final dual
variables produced by Algorithm~\ref{alg:dual-weight} are feasible. 

We use the following observation. If the algorithm matches an online vertex $r$
through an augmenting $3$-path $\apath{r}{x}{s}{z}$ (with $r,s\in R$ and $x,z\in L$), the new edge $rx$
becomes frozen. This holds because the offline vertex $x$ changes partner and immediately
exhausts its reassignment budget. Consequently, the algorithm never reassigns $r$ later in
the execution.

Fix an arbitrary edge $ij\in E$, with $i\in L$ and $j\in R$. We prove that $\alpha_i+\beta_j\ge w(i)$. The cases below are exhaustive. If $i$ is free at the end, then $\alpha_i=0$. If $i$ is matched at the end, then the update rules and the postprocessing step imply that either $\alpha_i=w(i)$ or $\alpha_i=\delta w(i)$. We first rule out the only problematic subcase with $\alpha_i=0$.

\noindent\textbf{Case 0: Both $i$ and $j$ are unmatched at the end.}
This case cannot occur. Because $i$ is unmatched at the end, $i$ was free when
$j$ arrived. Consequently, $D(j)\neq\bot$. The algorithm therefore either matched $j$ directly
or augmented along an eligible $3$-path, ensuring $j$ remains matched.

\noindent\textbf{Case 1: $i$ is free at the end.}
Then $i$ was free throughout the execution. By Case~0, $j$ is matched at the
end. We show that $\beta_j\ge w(i)$.

Suppose first that $j$ is matched directly to $D(j)$ when it arrives. Since
$i$ was free at that time, $w(D(j))\ge w(i)$.

If $j$ is never reassigned, then
either $\beta_j=w(D(j))$, or the postprocessing step is applied and its
condition gives $w(i)\le \delta w(D(j))=\beta_j$. 

If $j$ is later reassigned
through a path $\apath{r}{D(j)}{j}{k}$, then the alternative path $\apath{r}{D(j)}{j}{i}$ was feasible
because $i$ was still free. If this alternative path was eligible, the
lexicographic choice of the augmenting path gives $w(k)\ge w(i)$. The alternative path $\apath{r}{D(j)}{j}{i}$ and the chosen path $\apath{r}{D(j)}{j}{k}$ share the intermediate vertex $D(j)$. Since the chosen path is eligible, the inequality $w(D(j))\ge q\,w(D(r))$ holds. If $\apath{r}{D(j)}{j}{i}$ is not eligible, it must violate the remaining weight inequality, yielding $w(i)<\delta w(D(j))$. By contrast, the chosen path satisfies $w(k)\ge\delta w(D(j))$. In both subcases, the update gives $\beta_j=w(k)\ge w(i)$.

Suppose now that $j$ is matched through an augmenting $3$-path when it arrives.
By the observation above, $j$ is never reassigned later, so
$\beta_j=w(S(j))/q$. Since $i$ was free when $j$ arrived,
$w(D(j))\ge w(i)$. By eligibility, $w(S(j))\ge q\,w(D(j))$, and hence
\[
    \beta_j=\frac{w(S(j))}{q}\ge w(D(j))\ge w(i).
\]

\noindent\textbf{Case 2: $\alpha_i=w(i)$.}
This case is immediate.

\noindent\textbf{Case 3: $\alpha_i=\delta w(i)$.}
Then $i$ was matched directly to some online vertex $y$, the edge $iy$ remains
in the final matching, and the postprocessing step was not applied to $iy$.
Since the edge $iy$ remains in the final matching, neither $i$ nor $y$ is
exhausted. Therefore the postprocessing condition implies that $y$ has a free
neighbor $h$ at the end such that
\[
    w(h)>\delta w(i).
\]
We prove that $\beta_j\ge (1-\delta)w(i)$.

If $j=y$, then $\beta_j=w(i)$, so assume $j\neq y$.

First suppose that $j$ arrives before $y$. Then $i$ is free when $j$ arrives.
If $j$ is matched directly to $D(j)$, then $w(D(j))\ge w(i)$. If $j$ is never
reassigned, then $\beta_j\ge \delta w(D(j))\ge\delta w(i)$, where the factor
$\delta$ accounts for the possible postprocessing step. If $j$ is later
reassigned through a path $\apath{r}{D(j)}{j}{k}$, then the weight inequality for the chosen path's final vertex gives $\beta_j=w(k)\ge\delta w(D(j))\ge\delta w(i)$. In both subcases,
$\beta_j\ge\delta w(i)\ge(1-\delta)w(i)$.

If $j$ is matched through an augmenting $3$-path when it arrives, then
$\beta_j=w(S(j))/q$. Since $w(D(j))\ge w(i)$ and
$w(S(j))\ge q\,w(D(j))$, we get $\beta_j\ge w(i)\ge(1-\delta)w(i)$.

It remains to consider the case in which $j$ arrives after $y$. Since
$h$ is free at the end, it was also free when $j$ arrived. Thus
$\apath{j}{i}{y}{h}$ was a feasible augmenting $3$-path at the arrival
of $j$. Its endpoint condition holds because $w(h)>\delta w(i)$. Here
$D(j)$ denotes the heaviest free neighbor of the arriving vertex $j$ at
that time. Hence, if $\apath{j}{i}{y}{h}$ is not eligible, the only
possible failing condition is the middle condition: the inequality
$w(i)\ge q\,w(D(j))$ fails. Thus $w(i)<q\,w(D(j))$.

If $j$ is matched directly to $D(j)$, then the inequality above gives
$w(i)<q\,w(D(j))$.
If $j$ is never reassigned, then after a possible postprocessing step we still
have
\[
    \beta_j\ge \delta w(D(j))>\frac{\delta}{q}w(i)=(1-\delta)w(i).
\]
If $j$ is later reassigned through a path $\apath{r}{D(j)}{j}{k}$, then the chosen
path is eligible, so $w(k)\ge \delta w(D(j))$. Together with $w(i)<q\,w(D(j))$,
this gives
\[
    \beta_j=w(k)\ge \delta w(D(j))>\frac{\delta}{q}w(i)=(1-\delta)w(i).
\]

Finally suppose that $j$ is matched through an augmenting $3$-path when it
arrives. If $w(i)\ge q\,w(D(j))$, then $\apath{j}{i}{y}{h}$ is eligible, and the lexicographic
choice of the augmenting path gives $w(S(j))\ge w(i)$. If $w(i)<q\,w(D(j))$, then
eligibility of the chosen path gives $w(S(j))\ge q\,w(D(j))>w(i)$. Thus, in all
cases, $w(S(j))\ge w(i)$. Since $j$ is not reassigned later,
\[
    \beta_j=\frac{w(S(j))}{q}\ge \frac{w(i)}{q}\ge (1-\delta)w(i).
\]
Therefore, in Case~3,
\[
    \alpha_i+\beta_j
    \ge
    \delta w(i)+(1-\delta)w(i)
    =
    w(i).
\]

We have proved that every dual constraint is satisfied. By
Lemma~\ref{lem:weighted-dual-objective}, every newly matched offline vertex
increases the dual objective by at most $(1+\delta)$ times its weight, and the
postprocessing step preserves the dual objective. Lemma~\ref{lemma:weighted-pd-template}
therefore gives competitive ratio
\[
    \rho=\frac1{1+\delta}
    =
    \frac1{1+1/\sqrt2}
    =
    2-\sqrt2.\qedhere
\]
\end{proof}

\subsection{\texorpdfstring{$\womp_3(1,1)$}{WOMP3(1,1)}: Upper Bound}
\label{subsec:weighted-11-upper}

We next show that the ratio proved above is best possible for deterministic
algorithms.

\begin{theorem}\label{thm:weighted-upper-bound}
No deterministic online algorithm for offline-vertex-weighted $\womp_3(1,1)$
can have competitive ratio larger than
\[
    \rho:=2-\sqrt2.
\]
\end{theorem}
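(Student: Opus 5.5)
The plan is to build an adaptive adversary with a real parameter that the algorithm implicitly chooses, in the same spirit as the proof of Theorem~\ref{thm:upperboundgeneral}(b) but with weights replacing budget counts. Each of the algorithm's two possible reactions (direct match versus length-$3$ augmentation) should give a ratio bound, and the minimum over the algorithm's threshold should be $2-\sqrt2=1/(1+1/\sqrt2)$. The target values are suggested by the analysis of $\greedy_{q,\delta}$: the tight cases there are governed by $q=1+\sqrt2$ and $\delta=1/\sqrt2$, so the adversary should make the algorithm face exactly the trade-offs encoded by $w(x)\ge q\,w(D(j))$ and $w(i)\ge\delta\,w(x)$.

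\textbf{Stage 1.} Reveal $r_1$ adjacent to a heavy vertex $x$ with $w(x)=1$, a vertex $c$ of weight $c\le 1$ (a parameter), and a few weight-$0$ spare vertices. If $\mathcal A$ matches $r_1$ to anything other than $x$ (or leaves it unmatched), a blocker adjacent only to $x$ already gives a ratio far below $\rho$. Hence we may assume $r_1x\in M$, and $r_1$ still has the free neighbor $c$.

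\textbf{Stage 2.} Reveal a sequence of online vertices $r_2^{(1)},r_2^{(2)},\dots$. Each $r_2^{(k)}$ is adjacent to $x$ and to a fresh vertex $z_k$, with weights $w_k$ increasing on a fine geometric grid from $0$ up to about $q$. As long as $\mathcal A$ matches $r_2^{(k)}$ directly to $z_k$, the adversary continues, and later charges these direct matches by blocking $z_k$ with a fresh online vertex adjacent only to $z_k$. We must check that such a blocker cannot be rescued: $r_2^{(k)}$ has no free neighbor except possibly $x$, which is matched. The first time $\mathcal A$ instead augments along $\apath{r_2^{(k)}}{x}{r_1}{c}$, both $x$ and $r_1$ become exhausted (budgets $(1,1)$). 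Blockers on $x$ and on $c$ then certify
\[
\opt\ \ge\ 1+c+w_k+\textstyle\sum_{\ell<k}(\text{charged } w_\ell),
\]
whereas $\alg$ gains only $1+c$ plus the earlier direct weights. If $\mathcal A$ never augments, the final step offers a weight near $q$ (or leaves the algorithm matched directly at the last $w_k$) and blocks $x$ through $r_1$'s remaining option, so that $\opt$ collects $1+w_k$ plus more.

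\textbf{Choosing parameters.} I would make the earlier $z_\ell$ negligible, for instance by resetting the instance each round with disjoint gadgets scaled so that only the last round matters. I would then write the ratio in the two branches as functions $f(w)$ (augment at $w$) and $g(w)$ (direct at $w$), and choose $c$ so that $\max_w\min\{f(w),g(w)\}=2-\sqrt2$. The heuristic is that the crossing point should occur at $w=1/q$ with $c=\delta$, reproducing the identities $\delta-1/q=1-\delta$ and $\delta/q=1-\delta$. This gives a value of the form $(1+\delta)^{-1}$ up to grid error, and letting the grid ratio tend to $1$ finishes the argument.

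\textbf{Main obstacle.} The hard part is the ``direct'' branch. After $r_2$ is matched directly, the edge $xr_1$ is still unfrozen and $r_1$ still has the free neighbor $c$. A second round is therefore needed, with a vertex adjacent to $x$ and $z$, to force the loss. In that round the algorithm again has a choice, and the resulting two-level game must still evaluate to exactly $2-\sqrt2$. I would first try a two-round gadget in which the second round uses $c$ as the ``$\delta$-light'' endpoint, and verify that every deterministic choice is dominated by one of the two ratio expressions. If this is not tight, I would add a third weight parameter for $z$ and optimize numerically before guessing the closed form.
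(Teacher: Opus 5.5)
Your gadget cannot reach $2-\sqrt2$. The missing ingredient is that $r_1$ needs \emph{two} more heavy free neighbors besides its partner.

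\textbf{Why your gadget fails.} In your Stage~1, the only positive-weight neighbors of $r_1$ are $x$ and $c$, so the total offline weight of the gadget is $1+c+\sum_\ell w_\ell$.
\begin{itemize}
\item If the algorithm augments along $\apath{r_2}{x}{r_1}{c}$, you can block $x$ and $c$. The optimum must then rematch $r_1$ to a weight-$0$ spare, so the ratio is at least $(1+c)/(1+c+w)$.
\item If the algorithm matches $r_2$ directly to $z$, the ratio is at least $(1+w)/(1+c+w)$.
\end{itemize}
For every $w$, the larger of these two ratios is at least $(1+c)/(1+2c)$, and this is at least $2/3$ when $c\le 1$.

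The rule ``augment iff the current $w_k\le c$'' guarantees this bound against your whole multi-round family. Earlier direct rounds add the same $w_\ell$ to $\alg$ and to the total weight, so they only help the algorithm. With your suggested parameters $c=\delta$ and $w=1/q$, the algorithm augments and gets $(1+\delta)/(1+\delta+1/q)\approx 0.80$.

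\textbf{Two further errors.}
\begin{itemize}
\item If $r_1$ is matched to $c$ or to a spare, a blocker adjacent only to $x$ is simply matched directly to the free $x$. So that step does not give a small ratio.
\item The ``main obstacle'' is misdiagnosed. In the direct branch, a single blocker on $z$ already closes the case, because $r_2$'s only other neighbor $x$ is matched. No second round, geometric grid, or rescaled gadgets are needed. Those rounds share $x$ and $r_1$ anyway, so they could not be made independent.
\end{itemize}

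\textbf{The fix (this is the paper's proof).} A single round suffices.
\begin{itemize}
\item Reveal $r_1$ with $N(r_1)=\{a,b,c\}$, all of weight $1$. By symmetry, $r_1$ is matched to $a$, and any augmentation from $r_2$ ends at $b$. If $r_1$ is left unmatched, stop with ratio $0$.
\item Reveal $r_2$ with $N(r_2)=\{a,d\}$ and $w(d)=\gamma=\sqrt2-1$. If $r_2$ is left unmatched, stop with ratio at most $1/2$.
\item If $r_2$ is matched directly to $d$, block $d$ with a vertex $r_3$ adjacent only to $d$. Then $\alg=1+\gamma$ and $\opt=2+\gamma$, via $r_3d$, $r_2a$, $r_1b$.
\item If the algorithm augments along $\apath{r_2}{a}{r_1}{b}$, then $a$ and $r_1$ are exhausted. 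Blockers $r_3$ on $a$ and $r_4$ on $b$ both fail. Then $\alg=2$ and $\opt=3+\gamma$, via $r_3a$, $r_4b$, $r_2d$, $r_1c$.
\end{itemize}
The third weight-$1$ vertex $c$ is exactly what makes the augment branch costly. Finally, $(1+\gamma)/(2+\gamma)=2/(3+\gamma)=2-\sqrt2$ because $\gamma^2+2\gamma-1=0$.
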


\begin{proof}
Let $\gamma:=\sqrt2-1$.  We use an adaptive instance.  The offline side is
$L=\{a,b,c,d\}$, with weights
\[
    w(a)=w(b)=w(c)=1,
    \qquad
    w(d)=\gamma.
\]

The first online vertex $r_1$ arrives with $N(r_1)=\{a,b,c\}$.  If the algorithm
leaves $r_1$ unmatched, we stop.  The algorithm gets value $0$, while the optimum
gets value $1$.  Thus we may assume that the algorithm matches $r_1$.  Since
$a,b,c$ have the same weight, we rename the matched offline vertex as $a$.
After the first arrival, the algorithm has the edge $r_1a$, and $b,c$ are free.

The second online vertex $r_2$ arrives with $N(r_2)=\{a,d\}$.  If the algorithm
leaves $r_2$ unmatched, we stop.  The algorithm has value $1$, while the optimum
uses $r_2a$ and $r_1b$, of value $2$.  Hence the ratio is at most $1/2<\rho$.

There are two remaining cases.

\smallskip
\noindent\textbf{Case 1: the algorithm matches $r_2$ directly to $d$.}
Then the algorithm has value $1+\gamma$.  We reveal one more online vertex $r_3$
with $N(r_3)=\{d\}$.  Thus the online side is $R=\{r_1,r_2,r_3\}$.

The algorithm cannot match $r_3$.  Indeed, $d$ is matched to $r_2$, and $r_2$ has no
free neighbor: its only neighbors are $a$ and $d$, and both are matched.  Hence
the final value of the algorithm is
\[
    \alg=1+\gamma.
\]
The optimum uses $r_3d$, $r_2a$, and $r_1b$.  Thus $\opt=2+\gamma$, and
\[
    \frac{\alg}{\opt}
    =
    \frac{1+\gamma}{2+\gamma}
    =
    2-\sqrt2
    =
    \rho.
\]

\smallskip
\noindent\textbf{Case 2: the algorithm augments along a $3$-path.}
Then the algorithm uses either $\apath{r_2}{a}{r_1}{b}$ or
$\apath{r_2}{a}{r_1}{c}$.  Renaming $b$ and $c$ if needed, assume that it uses
$\apath{r_2}{a}{r_1}{b}$.  After the augmentation, the algorithm has the edges
$r_2a$ and $r_1b$.  Since we are in $\womp_3(1,1)$, the offline vertex $a$ and the
online vertex $r_1$ are exhausted.

We reveal two more online vertices $r_3$ and $r_4$, with
\[
    N(r_3)=\{a\},
    \qquad
    N(r_4)=\{b\}.
\]
Thus the online side is $R=\{r_1,r_2,r_3,r_4\}$ in this branch.

The algorithm cannot match $r_3$, because $a$ is matched and exhausted.  It also
cannot match $r_4$.  The only possible $3$-path would use the matched edge
$br_1$ and end at the free neighbor $c$ of $r_1$, but $r_1$ is exhausted.  Thus the
algorithm finishes with value $\alg=2$.

The optimum uses $r_3a$, $r_4b$, $r_2d$, and $r_1c$.  Hence $\opt=3+\gamma$, and
\[
    \frac{\alg}{\opt}
    =
    \frac{2}{3+\gamma}
    =
    2-\sqrt2
    =
    \rho.
\]

In all branches, the algorithm obtains ratio at most $\rho=2-\sqrt2$.  This
proves the theorem.
\end{proof}

\begin{proof}[Proof of Theorem~\ref{thm:weighted-optimal-ratio}]
The algorithmic guarantee is Theorem~\ref{thm:weighted-lower-bound}. The
matching upper bound is Theorem~\ref{thm:weighted-upper-bound}. Hence the
optimal deterministic competitive ratio for offline-vertex-weighted
$\womp_3(1,1)$ is exactly $2-\sqrt2$.
\end{proof}
\subsection{\texorpdfstring{$\womp_3(1,\infty)$}{WOMP3(1,infinity)}: Algorithm and Analysis}
\label{subsec:weighted-1inf}

Our score-greedy algorithm (Algorithm~\ref{alg:score-greedy}) works as follows. Fix
$\lambda\in(0,1)$. Upon the arrival of an online vertex $j\in R$, the algorithm
considers all feasible augmenting paths of length $1$ or $3$ starting at $j$,
assigns each a score, and augments along a path of maximum nonnegative score, if
one exists.

A length-$1$ path $\apath{j}{i}$ has score
\[
    \score(\apath{j}{i}) := w(i),
\]
and a $3$-path $\apath{j}{x}{y}{i}$ has score
\[
    \score(\apath{j}{x}{y}{i}) := w(i)-\lambda w(x).
\]
At each arrival, the algorithm performs a feasible augmenting path of maximum
nonnegative score, if such a path exists. Ties can be broken by any deterministic rule; the analysis uses only maximality of the score.

\begin{algorithm}[ht]
  \SetAlgoLined
  \DontPrintSemicolon
  \KwIn{$L$ offline side and $w$ weight function on $L$}
  \KwOut{A matching $M$}

  Initialize $M\gets \emptyset$\;

  \ForAll{arrivals $j\in R$}{
    Let $\mathcal P(j)$ be the set of feasible augmenting paths of length $1$
    or $3$ starting at $j$\;

    \ForAll{$P\in\mathcal P(j)$}{
      \lIf{$P=\apath{j}{i}$}{
        Set $\score(P)\gets w(i)$
      }
      \lElseIf{$P=\apath{j}{x}{y}{i}$}{
        Set $\score(P)\gets w(i)-\lambda w(x)$
      }
    }
    \uIf{$\mathcal P(j)$ contains a path $P$ with $\score(P)\ge 0$}{
      Choose a path $P(j)\in\mathcal P(j)$ maximizing $\score(P(j))$.
      Augment along $P(j)$.
    }
  }

  \caption{Score-greedy for offline-weighted $\womp_3(1,\infty)$.}
  \label{alg:score-greedy}
\end{algorithm}

We prove that Algorithm~\ref{alg:score-greedy} is $\left(\frac{\sqrt 5 -1}{2}\right)$-competitive.

\begin{theorem}\label{thm:weighted-infinity-lower-bound}
Choosing $\lambda=\frac{3-\sqrt 5}{2}$, the score-greedy
Algorithm~\ref{alg:score-greedy} is
$1-\lambda= \frac{\sqrt 5 -1}{2}$-competitive for
offline-vertex-weighted $\womp_3(1,\infty)$.
\end{theorem}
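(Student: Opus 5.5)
The plan is a primal--dual argument through the weighted template, Lemma~\ref{lemma:weighted-pd-template}, with $\rho=1-\lambda$. The choice $\lambda=\frac{3-\sqrt5}{2}$ is the root of $\lambda^2-3\lambda+1=0$. This gives the two identities that should drive the whole calculation: $\frac{1}{1-\lambda}=2-\lambda$ and $\frac{\lambda}{1-\lambda}=1-\lambda$. The budget per newly matched offline vertex $i$ is therefore $(2-\lambda)w(i)$. It splits naturally as $(1-\lambda)w(i)$ on $\alpha_i$ and $w(i)$ on the online side, mirroring the $(\delta w(i),w(i))$ split of Algorithm~\ref{alg:dual-weight}.

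\emph{Dual updates.} A direct match $\apath{j}{i}$ sets $\alpha_i\gets(1-\lambda)w(i)$ and $\beta_j\gets w(i)$, which increases the objective by exactly $(2-\lambda)w(i)$. For a $3$-path $\apath{j}{x}{y}{i}$, the vertex $x$ has not been reassigned, because $s=1$ and the path is feasible, so $\alpha_x=(1-\lambda)w(x)$ before the update. After the update $x$ is exhausted, and I set
\[
\alpha_x\gets w(x),\qquad \alpha_i\gets(1-\lambda)w(i),\qquad \beta_y\gets w(i),\qquad \beta_j\gets \lambda w(x).
\]
The net change is then $\lambda w(x)+(1-\lambda)w(i)+\bigl(w(i)-\beta_y^{\rm old}\bigr)+\lambda w(x)$. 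To bound this by $(2-\lambda)w(i)$ I need $\beta_y^{\rm old}\ge 2\lambda w(x)$ in the relevant range.

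This is the point where the choice of $\beta_j$ may need tuning. If the bound fails, I would try $\beta_j\gets 0$ and rely on the score inequality $w(i)\ge\lambda w(x)$ to absorb the $\lambda w(x)$ charged to $\alpha_x$. In any case I would fix the update so that the increase is exactly $w(i)-\lambda w(x)+(1-\lambda)w(i)+\lambda w(x)$, and use the invariant $\beta_y^{\rm old}\ge w(x)$. That invariant holds since $y$'s previous partner $x$ was the best-scoring option when $y$ last changed partner.

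\emph{Postprocessing.} As in Algorithm~\ref{alg:dual-weight}, a final step handles each matched edge $iy$ whose endpoints are not exhausted and for which every free neighbor $h$ of $y$ satisfies $w(h)\le\lambda w(i)$. On such an edge I flip the values, setting $(\alpha_i,\beta_y)\gets(w(i),(1-\lambda)w(i))$, which preserves the objective.

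\emph{Feasibility.} I would check $\alpha_i+\beta_j\ge w(i)$ case by case, in the order of Theorem~\ref{thm:weighted-lower-bound}.
\begin{enumerate}[leftmargin=*]
\item If $i$ is free at the end, then $i$ was a free neighbor when $j$ arrived. Maximality of the score gives that $j$'s chosen path scores at least $w(i)$, and I want this to force $\beta_j\ge w(i)$. Later reassignments of $j$ along $\apath{r}{x'}{j}{k}$ must keep $\beta_j=w(k)\ge w(i)$. This holds because the alternative $\apath{r}{x'}{j}{i}$ shares the middle vertex and was feasible, and $t=\infty$ means $j$ is never blocked.
\item If $\alpha_i=w(i)$, the constraint holds immediately.
\item If $\alpha_i=(1-\lambda)w(i)$ with $iy\in M$ not postprocessed, I need $\beta_j\ge\lambda w(i)$. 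In this case $y$ has a free neighbor $h$ with $w(h)>\lambda w(i)$. If $j$ arrived after $iy$ formed, the path $\apath{j}{i}{y}{h}$ was feasible with positive score $w(h)-\lambda w(i)$, so $j$'s chosen path scored at least that. I then translate the score of $j$'s path into a lower bound on $\beta_j$.
\end{enumerate}

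The main obstacle is Case~3 when $j$ took a $3$-path with small $w(x)$. The lower bound on $\beta_j$ must come from the score comparison, and I expect the golden-ratio identities to be exactly what closes the resulting inequality. If the argument fails, the first thing I would adjust is the value assigned to $\beta_j$ in the $3$-path update.
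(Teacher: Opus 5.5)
Your template, budget split and direct-match update agree with the paper. The gap is in the rest of the dual. The proposal never commits to a $3$-path update or to a valid postprocessing, and each concrete choice it writes down is infeasible.

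With $\beta_j\gets\lambda w(x)$ for a $3$-path arrival, Case~1 already fails. Let $y$ arrive with $N(y)=\{x,k\}$, $w(x)=w(k)=1$, and be matched to $x$. Then let $j$ arrive with $N(j)=\{x,i\}$ and $w(i)=\rho-\epsilon$, where $\rho=1-\lambda$. The path $\apath{j}{x}{y}{k}$ has score $\rho>w(i)$ and is chosen, $i$ stays free, and $\alpha_i+\beta_j=\lambda<w(i)$. Score-maximality bounds only the \emph{score} of $j$'s path from below, so $\beta_j$ must be that score. The paper sets $\beta_j:=\score(P_j)$ for every arrival; your fallback $\beta_j\gets 0$ is worse still.

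Independently of $\beta_j$, resetting $\beta_y\gets w(i)$ on the middle vertex (always a root) is fatal. A root's partner weights can fall geometrically along score-$0$ paths. Let $j$ arrive with $N(j)=\{a_0,a_1,a_2,i\}$, where $w(a_0)=1$, $w(i)=1-\epsilon$, $w(a_1)=\lambda$, $w(a_2)=\lambda^2$, and be matched to $a_0$. Let $r$ with $N(r)=\{i,f\}$, $w(f)=0.9$, be matched to $i$. Then two vertices, adjacent only to $a_0$ and only to $a_1$ respectively, push $j$ to $a_1$ and then to $a_2$. Now $\beta_j\le\lambda^2$, the edge $ri$ is not flipped because of $f$, and $\alpha_i+\beta_j\le\rho+\lambda^2<w(i)$. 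This is exactly the subcase of Case~3 you did not treat: $i$ was free when $j$ arrived, and $j$ is a root.

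The paper never changes $\beta_r$ in a $3$-path update. Your fallback increase $w(i)-\lambda w(x)+(1-\lambda)w(i)+\lambda w(x)$ is precisely that update. Hence $\beta_r$ stays at $W_C^{\max}$, the weight of the root's direct match. By the monotonicity $W_0\ge W_1\ge\cdots\ge W_q\ge F_C$ along a component (with $F_C$ the heaviest final free neighbor of the root), this value dominates all later partners and all final free neighbors. Your flip, however, is built around $\beta_y=w(i)$, so it has to be redone.

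This is where the real missing idea lies: no all-or-nothing flip can close Case~3. Take a root $r$ matched directly to $i$ with $w(i)=1$, a free neighbor $f$ of $r$ with $w(f)=0.7$, and a later $j$ with $N(j)=\{i,g\}$, $w(g)=0.35$. A flip must leave $\beta_r\ge F_C$, because of Case~1 for edges at $r$, so its threshold is at most $\rho w(i)<0.7$. Hence $ri$ is not flipped and $\alpha_i=\rho$. Meanwhile $j$ prefers $\apath{j}{g}$ (score $0.35$) to $\apath{j}{i}{r}{f}$ (score $0.7-\lambda\approx 0.32$), so $\alpha_i+\beta_j\le\rho+0.35<1$.

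The paper instead moves a graded amount $\eta_C=\min\{\lambda W_C^{\min},\,W_C^{\max}-F_C\}$ from the root to the active vertex.
\begin{itemize}
\item If $\eta_C=\lambda W_C^{\min}$, then $\alpha_i=w(i)$ and the edge is covered.
\item Otherwise $\alpha_i=\rho W_C^{\min}+W_C^{\max}-F_C$ and $\beta_r=F_C\ge\rho W_C^{\max}$. So $\apath{j}{i}{r}{f}$ was available to $j$ with score $F_C-\lambda W_C^{\min}\ge 0$.
\end{itemize}
In the second case, if $j$ is not a root this gives $\beta_j\ge F_C-\lambda W_C^{\min}$. If $j$ is the root of another component $C'$, it gives $\beta_j\ge\rho W_{C'}^{\max}\ge\rho(F_C-\lambda W_C^{\min})$. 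The identities $1-\rho=\lambda$ and $\rho^2=\lambda$, together with $F_C\le W_C^{\min}\le W_C^{\max}$, then close both subcases. The root bound $\beta_{r'}\ge\rho W_{C'}^{\max}$, which relies on the untouched $\beta_r$, also handles the omitted subcase above.
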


\begin{proof}
Let $\lambda:=\frac{3-\sqrt5}{2}$ and
$\rho:=1-\lambda=\frac{\sqrt5-1}{2}$.  Then
$\lambda=\rho^2$, $1+\rho=1/\rho$, $\rho>\lambda$, and $2\rho>1$.
We prove the theorem with Lemma~\ref{lemma:weighted-pd-template}.

\medskip
\noindent
\textbf{Dual updates.}
Initially all dual variables are zero.  If the algorithm matches an online
vertex $j$, let $P_j$ be the path chosen when $j$ arrives, and set
$\beta_j:=\score(P_j)$.  If $j$ is not matched, keep $\beta_j=0$.

If $P_j=\apath{j}{i}$, then set $\beta_j=w(i)$ and
$\alpha_i:=\rho w(i)$.  This creates a new history component with root $j$
and active offline vertex $i$.  The dual objective increases by
\[
    w(i)+\rho w(i)=(1+\rho)w(i)=\frac{w(i)}{\rho}.
\]

If $P_j=\apath{j}{x}{r}{i}$, then $xr$ is the active edge of the component
rooted at $r$ just before the update.  Set
\[
    \alpha_x:=w(x),\qquad
    \beta_j:=w(i)-\lambda w(x),\qquad
    \alpha_i:=\rho w(i).
\]
The vertex $i$ becomes the new active offline vertex of that component.  The
variable $\beta_r$ is not changed.  Also $\beta_j\ge 0$, because the algorithm
uses only paths of nonnegative score.  The net increase in the dual objective is
\[
    (1-\rho)w(x)+\bigl(w(i)-\lambda w(x)\bigr)+\rho w(i)
    =(1+\rho)w(i)
    =
    \frac{w(i)}{\rho},
\]
where we use $1-\rho=\lambda$.

Thus, whenever a new offline vertex $i$ is matched, the dual objective
increases by exactly $w(i)/\rho$.

\medskip
\noindent
\textbf{Postprocessing.}
Let $C$ be a final history component.  Write
$r:=\rootof(C)$ and $a:=\act(C)$.  Let $i_0,\ldots,i_q$ be the offline
vertices of $C$, with $i_q=a$.  Thus $i_0,\ldots,i_{q-1}$ are exhausted.
For all $h\in\{0,\ldots,q-1\}$, let $j_h$ be the online vertex that created the
transition from $i_h$ to $i_{h+1}$.  The final matching inside $C$ contains
$r i_q$ and $j_h i_h$ for all $h\in\{0,\ldots,q-1\}$.

Set $W_h:=w(i_h)$,
$W_C^{\min}:=w(a)=W_q$, and $W_C^{\max}:=W_0$.  Before postprocessing,
the online updates give
\[
    \beta_r=W_C^{\max},
    \qquad
    \alpha_a=\rho W_C^{\min},
\]
and, for all $h\in\{0,\ldots,q-1\}$,
\[
    \alpha_{i_h}=W_h,
    \qquad
    \beta_{j_h}=W_{h+1}-\lambda W_h.
\]

Now define
\[
    F_C:=\max\{w(f): f \text{ is a final free neighbor of } r\},
\]
with $F_C=0$ if $r$ has no final free neighbor.

\paragraph{Monotonicity of weights in a component.}
For every final component $C$, we have
\begin{equation}\label{eq:weighted-component-monotonicity}
    W_C^{\max}=W_0\ge W_1\ge \cdots \ge W_q=W_C^{\min}\ge F_C.
\end{equation}

Indeed, first consider
$h=0$.  When $r$ arrived, the vertices $i_0,\ldots,i_q$ were free.  Every
final free neighbor of $r$ was also free.  Moreover, all edges from $r$ to
these vertices had already been revealed.  Hence all direct paths from $r$ to
these vertices were feasible.  The algorithm chose $\apath{r}{i_0}$, so
score-greediness gives $W_0\ge W_k$ for every $k\ge 0$ and $W_0\ge F_C$.

Now fix $h\in\{1,\ldots,q\}$.  When $j_{h-1}$ arrived, the edge
$i_{h-1}r$ was the active matched edge of the component rooted at $r$.  At that
time, the vertices $i_h,\ldots,i_q$ were free.  Every final free neighbor of
$r$ was also free.  Moreover, all edges from $r$ to these vertices had already
been revealed.

Let $z$ be either one of the vertices $i_h,\ldots,i_q$, or a final free
neighbor of $r$.  By the previous paragraph, the path
$\apath{j_{h-1}}{i_{h-1}}{r}{z}$ was feasible when $j_{h-1}$ arrived.  All
these paths have the same middle offline vertex $i_{h-1}$.  Hence their scores
differ only in the weight of the last endpoint.  The algorithm chose the path
ending at $i_h$.  Thus $W_h=w(i_h)\ge w(z)$ for every such vertex $z$.

Taking $z=i_{h+1}$ gives $W_h\ge W_{h+1}$ for all $h\in\{0,\ldots,q-1\}$.  Taking
$h=q$ and taking $z$ to be a maximum-weight final free neighbor gives
$W_q\ge F_C$.

\paragraph{Final dual lower bounds for the root.} After postprocessing, the root $r$ of every final component $C$ satisfies $\beta_r\ge F_C$ and $\beta_r\ge \rho W_C^{\max}$. We verify this below.

The purpose of the postprocessing step is to transfer dual mass from the root $r$ to the active offline vertex $a$. This transfer compensates for the potential presence of a heavy final free neighbor of $r$, which otherwise threatens the dual feasibility of edges arriving later at $a$. Define $\eta_C:=\min\{\lambda W_C^{\min},\, W_C^{\max}-F_C\}$.  For the
component $C$, perform the shift
\[
    \beta_r\gets \beta_r-\eta_C,
    \qquad
    \alpha_a\gets \alpha_a+\eta_C.
\]
All other dual variables remain unchanged.  This shift preserves the dual
objective.  It also keeps the variables nonnegative, because $\eta_C\ge 0$ and
$\beta_r=W_C^{\max}-\eta_C\ge F_C\ge 0$.

\paragraph{A bound for roots.}
We will use one simple bound for roots of final components.  Let $C$ be a final
component, and write $r:=\rootof(C)$.  After postprocessing,
$\beta_r=W_C^{\max}-\eta_C$.  Moreover,
\begin{equation}\label{eq:weighted-root-bound}
    \beta_r\ge F_C
    \qquad\text{and}\qquad
    \beta_r\ge \rho W_C^{\max}.
\end{equation}
The first inequality follows from the definition of $\eta_C$.  For the second
one, use \eqref{eq:weighted-component-monotonicity}:
\[
    \eta_C
    \le
    \lambda W_C^{\min}
    \le
    \lambda W_C^{\max}.
\]
Thus
\[
    \beta_r
    =
    W_C^{\max}-\eta_C
    \ge
    (1-\lambda)W_C^{\max}
    =
    \rho W_C^{\max}.
\]

\medskip
\noindent
\textbf{Dual feasibility.}
All that remains is to prove that the final dual variables are feasible. Fix an edge $ij\in E$, with $i\in L$ and $j\in R$.  We prove
$\alpha_i+\beta_j\ge w(i)$.

There are three cases for $i$.  At the end, it is either exhausted, final free,
or the active offline vertex of its final component.  These cases are
exhaustive.  An unmatched offline vertex is final free.  A matched offline
vertex belongs to a final component.  In that component, it is either active or
it has already been reassigned once.  Since the offline budget is $1$, the
second case means that it is exhausted.

\smallskip
\noindent
\textbf{Case 1: $i$ is exhausted.}
When $i$ was reassigned, the dual update set $\alpha_i=w(i)$.  Hence
$\alpha_i+\beta_j\ge w(i)$.

\smallskip
\noindent
\textbf{Case 2: $i$ is final free.}
Then $\alpha_i=0$.  We prove that $\beta_j\ge w(i)$.

\smallskip
\noindent
\textbf{Case 2a: $j$ is unmatched at the end.}
This case cannot occur.  Since $i$ is final free, it was free when $j$
arrived.  Hence the direct path $\apath{j}{i}$ was feasible and had score
$w(i)\ge 0$.  The algorithm would have matched $j$, a contradiction.

\smallskip
\noindent
\textbf{Case 2b: $j$ is matched but is not the root of a final component.}
Again, $i$ was free when $j$ arrived.  Hence the path $\apath{j}{i}$ was
feasible and had score $w(i)$.  Since $j$ is not the root of a final component,
postprocessing does not change $\beta_j$.  Thus the final value of $\beta_j$
is still $\score(P_j)$.  By score-greediness, $\score(P_j)\ge w(i)$.  Hence
$\beta_j\ge w(i)$.

\smallskip
\noindent
\textbf{Case 2c: $j$ is the root of a final component $C'$.}
Then $i$ is a final free neighbor of $j=\rootof(C')$.  Hence $i$ is counted in
$F_{C'}$, so $F_{C'}\ge w(i)$.  By \eqref{eq:weighted-root-bound} applied to
$C'$, we have $\beta_j\ge F_{C'}$.  Therefore $\beta_j\ge w(i)$.

This proves Case~2.
\smallskip
\noindent
\textbf{Case 3: $i$ is the active offline vertex of its final component.}
Let $C$ be this final component, and write $r:=\rootof(C)$.  Thus
$i=\act(C)$.  Recall that we fixed $ij\in E$.  By postprocessing,
\[
    \alpha_i=\rho W_C^{\min}+\eta_C.
\]

If $j=r$, then postprocessing gives $\beta_j=W_C^{\max}-\eta_C$.  Hence
\[
    \alpha_i+\beta_j
    =
    (\rho W_C^{\min}+\eta_C)+(W_C^{\max}-\eta_C)
    =
    \rho W_C^{\min}+W_C^{\max}
    \ge
    W_C^{\min}
    =
    w(i),
\]
where the inequality follows from
\eqref{eq:weighted-component-monotonicity}.  Thus $ij$ is covered.  Hence we
assume $j\ne r$.

When $j$ arrived, the offline vertex $i$ was either free or matched to $r$.
We consider these two cases.

\smallskip
\noindent
\textbf{Case 3a: $i$ was free when $j$ arrived.}
When $j$ arrived, $\apath{j}{i}$ was feasible: $j$ was new, $i$ was free, and
$ij\in E$.  Its score was $w(i)=W_C^{\min}$.

If $j$ is unmatched at the end, then the algorithm did not match $j$ when it
arrived.  This is impossible, because $\apath{j}{i}$ was feasible and had
nonnegative score.

Suppose next that $j$ is matched but is not the root of a final
component.  Then postprocessing does not change $\beta_j$.  Hence the final
value of $\beta_j$ is still $\score(P_j)$, where $P_j$ is the path chosen when
$j$ arrived.  Since $\apath{j}{i}$ was feasible then and had score
$W_C^{\min}$, score-greediness gives
\[
    \score(P_j)\ge W_C^{\min}.
\]
Also $\alpha_i\ge 0$.  Therefore
\[
    \alpha_i+\beta_j
    \ge
    \beta_j
    =
    \score(P_j)
    \ge
    W_C^{\min}
    =
    w(i).
\]
Thus $ij$ is covered.

It remains to consider the case where $j$ is the root of another final
component $C'$. Then the path chosen when $j$ arrived was direct. Its score
was $W_{C'}^{\max}$. Since $\apath{j}{i}$ was feasible at the same time and
had score $W_C^{\min}$, score-greediness gives
$W_{C'}^{\max}\ge W_C^{\min}$. By \eqref{eq:weighted-root-bound} applied to
$C'$, we have $\beta_j\ge \rho W_{C'}^{\max}\ge \rho W_C^{\min}$. Thus
$\alpha_i+\beta_j\ge 2\rho W_C^{\min}\ge W_C^{\min}=w(i)$, because
$2\rho>1$. Hence $ij$ is covered.

\smallskip
\noindent
\textbf{Case 3b: $i$ was matched to $r$ when $j$ arrived.}
Recall that $\eta_C=\min\{\lambda W_C^{\min},\, W_C^{\max}-F_C\}$. If
$\eta_C=\lambda W_C^{\min}$, then
$\alpha_i=\rho W_C^{\min}+\lambda W_C^{\min}=W_C^{\min}=w(i)$, so $ij$ is
covered. Hence we assume
$\eta_C=W_C^{\max}-F_C<\lambda W_C^{\min}$.

Since $r=\rootof(C)$, postprocessing gives
$\beta_r=W_C^{\max}-\eta_C=F_C$. By \eqref{eq:weighted-root-bound} applied to
$C$ and \eqref{eq:weighted-component-monotonicity},
\[
    F_C\ge \rho W_C^{\max}\ge \rho W_C^{\min}.
\]
Hence
$F_C-\lambda W_C^{\min}\ge(\rho-\lambda)W_C^{\min}\ge0$.
Also, the strict inequality
$\eta_C=W_C^{\max}-F_C<\lambda W_C^{\min}$ implies $F_C>0$. Let $f$ be a
final free neighbor of $r$ with $w(f)=F_C$.

We claim that $\apath{j}{i}{r}{f}$ was feasible when $j$ arrived.  We have
$ij\in E$.  Also, $i$ was matched to $r$ at that time.  Since $i=\act(C)$, the
offline vertex $i$ had not been used before as the middle offline vertex of a
length-$3$ augmentation; otherwise it would not be the active offline vertex
of the final component.  Thus using $i$ as the middle offline vertex was
allowed.  The vertex $f$ was free at that time, because $f$ is final free.  The
pair $rf$ had already been revealed when $r$ arrived.  Finally, $r$ had
infinite online budget.  Hence $\apath{j}{i}{r}{f}$ was feasible.  Its score
was $F_C-\lambda W_C^{\min}$, which is nonnegative by the inequality above. We now split depending on whether $j$ is matched or not.

If $j$ is unmatched at the end, then the algorithm did not match $j$ when it
arrived.  This is impossible, because $\apath{j}{i}{r}{f}$ was feasible and
had nonnegative score.

Suppose next that $j$ is matched at the end but is not the root of a final
component.  Then postprocessing does not change $\beta_j$.  Hence the final
value of $\beta_j$ is still $\score(P_j)$.  Since $\apath{j}{i}{r}{f}$ was
feasible when $j$ arrived and had score $F_C-\lambda W_C^{\min}$,
score-greediness gives
\[
    \score(P_j)\ge F_C-\lambda W_C^{\min}.
\]
Using $\eta_C=W_C^{\max}-F_C$, we get
\[
\begin{aligned}
    \alpha_i+\beta_j
    &\ge
    \rho W_C^{\min}
    +(W_C^{\max}-F_C)
    +(F_C-\lambda W_C^{\min})  \\
    &=
    W_C^{\max}+(\rho-\lambda)W_C^{\min}
    \ge
    W_C^{\min}
    =
    w(i),
\end{aligned}
\]
where the last inequality follows from
\eqref{eq:weighted-component-monotonicity}.  Thus $ij$ is covered.

Finally, suppose that $j$ is the root of another final
component $C'$.  Then the path chosen when $j$ arrived was direct.  Its score
was $W_{C'}^{\max}$.  Since $\apath{j}{i}{r}{f}$ was feasible at the same time
and had score $F_C-\lambda W_C^{\min}$, score-greediness gives
\[
    W_{C'}^{\max}\ge F_C-\lambda W_C^{\min}.
\]
By \eqref{eq:weighted-root-bound} applied to $C'$,
\[
    \beta_j
    \ge
    \rho W_{C'}^{\max}
    \ge
    \rho(F_C-\lambda W_C^{\min}).
\]
Combining this with $\alpha_i=\rho W_C^{\min}+\eta_C$ and
$\eta_C=W_C^{\max}-F_C$, we get
\[
\begin{aligned}
    \alpha_i+\beta_j
    &\ge
    \rho W_C^{\min}
    +(W_C^{\max}-F_C)
    +\rho F_C-\rho\lambda W_C^{\min}  \\
    &=
    W_C^{\max}
    -(1-\rho)F_C
    +\rho(1-\lambda)W_C^{\min} \\
    &=
    W_C^{\max}
    -\lambda F_C
    +\rho^2 W_C^{\min} \\
    &=
    W_C^{\max}
    -\lambda F_C
    +\lambda W_C^{\min}.
\end{aligned}
\]
Here we used $1-\rho=\lambda$, $1-\lambda=\rho$, and $\rho^2=\lambda$.
By \eqref{eq:weighted-component-monotonicity}, $F_C\le W_C^{\min}$.  Hence the
last expression is at least $W_C^{\max}$.  By the same equation,
$W_C^{\max}\ge W_C^{\min}=w(i)$.  Therefore $\alpha_i+\beta_j\ge w(i)$.

The cases above cover every edge $ij\in E$.  Hence the final dual variables are
feasible for $(D_{G,w})$.  The postprocessing step preserves the dual objective.
Each online update increases the dual objective by exactly $w(i)/\rho$ when a
new offline vertex $i$ is matched.  Lemma~\ref{lemma:weighted-pd-template}
therefore implies that Algorithm~\ref{alg:score-greedy} is
$\rho$-competitive.  Since $\rho=1-\lambda=(\sqrt5-1)/2$, the theorem follows.
\end{proof}

\begin{figure}[ht]
\centering
\scalebox{0.82}{
\begin{tikzpicture}
    \useasboundingbox (-1.0,-6.20) rectangle (11.20,1.20);

\node at (1.5,0.75) {\small stop branch: $C_h\simeq T^h$};

\Vertex[x=0,y=0, size=.5, color=black, style=solid,
        label=$a_0$, position=left]{L-a0}
\Vertex[x=0,y=-1.25, size=.5, color=black, style=solid,
        label=$a_{h-1}$, position=left]{L-ahm}
\Vertex[x=0,y=-2.5, size=.5, color=black, style=solid,
        label=$a_h$, position=left]{L-ah}
\Vertex[x=0,y=-3.75, size=.5, color=white, style=dashed,
        label=$a_{h+1}$, position=left]{L-ahp}
\Vertex[x=0,y=-5.0, size=.5, color=white, style=dashed,
        label=$d_h$, position=left]{L-dh}

\Vertex[x=3,y=0, size=.5, color=black, style=solid,
        label=$r$, position=right]{L-r}
\Vertex[x=3,y=-1.25, size=.5, color=black, style=solid,
        label=$j_0$, position=right]{L-j0}
\Vertex[x=3,y=-2.5, size=.5, color=black, style=solid,
        label=$j_{h-1}$, position=right]{L-jhm}
\Vertex[x=3,y=-3.75, size=.5, color=white, style=dashed,
        label=$j_h$, position=right]{L-jh}

\node at (0,-0.63) {$\vdots$};
\node at (3,-0.63) {$\vdots$};

\Edge[style=solid](L-a0)(L-r)
\Edge[style=solid](L-ahm)(L-r)

\Edge[style={line width=2pt}](L-a0)(L-j0)
\Edge[style={line width=2pt}](L-ahm)(L-jhm)
\Edge[style={line width=2pt}](L-ah)(L-r)

\Edge[style=dashed](L-jh)(L-ah)
\Edge[style=dashed](L-r)(L-ahp)
\Edge[style=dashed](L-jh)(L-dh)

\node at (1.5,-5.75) {\small stop if no length-$3$ augmentation is used};

\node at (8.0,0.75) {\small continue branch: $C_{h+1}\simeq T^{h+1}$};

\Vertex[x=6.5,y=0, size=.5, color=black, style=solid,
        label=$a_0$, position=left]{R-a0}
\Vertex[x=6.5,y=-1.25, size=.5, color=black, style=solid,
        label=$a_{h-1}$, position=left]{R-ahm}
\Vertex[x=6.5,y=-2.5, size=.5, color=black, style=solid,
        label=$a_h$, position=left]{R-ah}
\Vertex[x=6.5,y=-3.75, size=.5, color=black, style=solid,
        label=$a_{h+1}$, position=left]{R-ahp}
\Vertex[x=6.5,y=-5.0, size=.5, color=white, style=dashed,
        label=$d_h$, position=left]{R-dh}

\Vertex[x=9.5,y=0, size=.5, color=black, style=solid,
        label=$r$, position=right]{R-r}
\Vertex[x=9.5,y=-1.25, size=.5, color=black, style=solid,
        label=$j_0$, position=right]{R-j0}
\Vertex[x=9.5,y=-2.5, size=.5, color=black, style=solid,
        label=$j_{h-1}$, position=right]{R-jhm}
\Vertex[x=9.5,y=-3.75, size=.5, color=black, style=solid,
        label=$j_h$, position=right]{R-jh}

\node at (6.5,-0.63) {$\vdots$};
\node at (9.5,-0.63) {$\vdots$};

\Edge[style=solid](R-a0)(R-r)
\Edge[style=solid](R-ahm)(R-r)
\Edge[style=solid](R-ah)(R-r)

\Edge[style={line width=2pt}](R-a0)(R-j0)
\Edge[style={line width=2pt}](R-ahm)(R-jhm)
\Edge[style={line width=2pt}](R-ah)(R-jh)
\Edge[style={line width=2pt}](R-ahp)(R-r)

\Edge[style=dashed](R-jh)(R-dh)

\node at (8.0,-5.75) {\small next phase starts with active vertex $a_{h+1}$};

\end{tikzpicture}
}
\caption{The history-graph branch at phase $h$ of the weighted upper-bound
construction.  Offline vertices are drawn on the left and online vertices on
the right.  Thin solid edges belong to the history graph but are no longer in
the current matching.  Thick solid edges are current matching edges.  Dashed
edges are revealed edges that are not in the history graph.  If the algorithm
does not augment along $\apath{j_h}{a_h}{r}{a_{h+1}}$, the construction stops.
If it does augment, the component grows from type $T^h$ to type $T^{h+1}$.}
\label{fig:weighted-upper-history}
\end{figure}

\subsection{\texorpdfstring{$\womp_3(1,\infty)$}{WOMP3(1,infinity)}: Upper Bound}
\label{subsec:weighted-1inf-upper}

We now show that the ratio $\tau=(\sqrt5-1)/2$ is best possible.  The hard
instance is parameterized by an integer $K$, and its analysis naturally leads to
a one-parameter family of sequences $\{p_h(\theta)\}_{h\ge 0}$ and to a specific
choice of $\theta=\theta_K$.

\begin{lemma}\label{lem:weighted-infinity-sequence}
Let $\tau=(\sqrt5-1)/2$.  For a parameter $\theta\in(\tau,1)$ define a sequence
$\{p_h(\theta)\}_{h\ge 0}$ by
\[
    p_0(\theta):=\frac{2\theta-1}{1-\theta},
    \qquad
    p_{h+1}(\theta):=\frac{p_h(\theta)}{1-\theta}-1 .
\]
Then, for every integer $K\ge 0$, there exists a unique value
$\theta_K\in(\tau,1)$ such that $p_K(\theta_K)=1$.  Moreover,
$\theta_K\downarrow \tau$ as $K\to\infty$.

Fix $K$ and write $p_h:=p_h(\theta_K)$.  Then, for every $h\in\{0,\ldots,K\}$,
\[
    h+1+p_h
    =
    \theta_K\left(h+2+\sum_{s=0}^h p_s\right).
\]
\end{lemma}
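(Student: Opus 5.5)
The plan is to solve the linear recurrence in closed form and read off everything from that formula. Write $c:=1/(1-\theta)>1$. The recurrence $p_{h+1}=c\,p_h-1$ has fixed point $1/(c-1)=(1-\theta)/\theta$. Hence
\[
p_h(\theta)=\frac{1-\theta}{\theta}+c^h\Bigl(p_0(\theta)-\frac{1-\theta}{\theta}\Bigr).
\]
A short computation gives
\[
p_0-\frac{1-\theta}{\theta}=\frac{\theta(2\theta-1)-(1-\theta)^2}{\theta(1-\theta)}=\frac{\theta^2+\theta-1}{\theta(1-\theta)}.
\]
The numerator $\theta^2+\theta-1$ vanishes exactly at $\theta=\tau$ and is positive on $(\tau,1)$.

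Existence and uniqueness of $\theta_K$ then follow from monotonicity and the boundary behaviour.

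\emph{Monotonicity.} On $(\tau,1)$ we have $p_0>0$. Each of the factors $(1-\theta)/\theta$, $c^h$ and $(\theta^2+\theta-1)/(\theta(1-\theta))$ is differentiable. I would instead argue by induction that each $p_h$ is continuous and strictly increasing in $\theta$ wherever $p_h>0$. This holds for $p_0=(2\theta-1)/(1-\theta)$. For the step, note $p_{h+1}=c\,p_h-1$, where $c$ is increasing and positive and $p_h$ is increasing and positive. (Positivity propagates: $p_h$ exceeds the fixed point $(1-\theta)/\theta$, so $p_{h+1}>p_h$ direction-wise and stays above the fixed point.)

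\emph{Boundary values.} As $\theta\downarrow\tau$, $p_h\to(1-\tau)/\tau=\tau<1$, using $\tau^2=1-\tau$. As $\theta\uparrow1$, $p_h\to\infty$.

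By the intermediate value theorem and strict monotonicity, there is a unique $\theta_K$ with $p_K(\theta_K)=1$.

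\emph{Convergence $\theta_K\downarrow\tau$.} Since $p_{K+1}=c\,p_K-1$ and $p_K(\theta_K)=1$, we get $p_{K+1}(\theta_K)=c-1=\theta_K/(1-\theta_K)>1$, because $\theta_K>1/2$. Monotonicity in $\theta$ then forces $\theta_{K+1}<\theta_K$. For the limit, suppose $\theta_K\ge\theta^*>\tau$ for all $K$. Then $p_K(\theta_K)\ge p_K(\theta^*)\to\infty$, since $c^*>1$ and the coefficient $(\theta^{*2}+\theta^*-1)$ is positive. This contradicts $p_K(\theta_K)=1$.

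\emph{The identity.} I would prove it by induction on $h$, for any $\theta$ (the choice $\theta_K$ is irrelevant here). Write $S_h:=\sum_{s\le h}p_s$ and $D_h:=h+1+p_h-\theta(h+2+S_h)$.

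The base case is
\[
D_0=1+p_0-\theta(2+p_0)=(1-\theta)p_0+1-2\theta=0.
\]
For the step,
\[
D_{h+1}-D_h=1+p_{h+1}-p_h-\theta(1+p_{h+1})=(1-\theta)(1+p_{h+1})-p_h,
\]
which is zero precisely by the recurrence $1+p_{h+1}=p_h/(1-\theta)$.

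The only mildly delicate step is the monotonicity of $p_K$ in $\theta$. The inductive positivity argument above handles it, so I expect no real obstacle.
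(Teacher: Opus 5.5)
Your proposal is correct and follows essentially the same route as the paper: strict monotonicity in $\theta$ plus the intermediate value theorem give existence and uniqueness of $\theta_K$, the identity $p_{K+1}(\theta_K)=\theta_K/(1-\theta_K)>1$ shows that $\theta_K$ decreases, a contradiction argument gives the limit, and the same one-step induction via $(1-\theta)(1+p_{h+1})=p_h$ proves the identity. Your closed form $p_h=\frac{1-\theta}{\theta}+c^h\,\frac{\theta^2+\theta-1}{\theta(1-\theta)}$ also usefully makes explicit two points the paper only asserts: that $p_h>0$ on $(\tau,1)$, which is needed for the product $c\,p_h$ to be increasing, and that $p_h(\theta')\to\infty$ for each fixed $\theta'>\tau$.
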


\begin{proof}
We begin by checking the boundary value $\theta=\tau$.  Since $1-\tau=\tau^2$, we
have $p_0(\tau)=\tau$, and if $p_h(\tau)=\tau$ then
\[
    p_{h+1}(\tau)=\frac{\tau}{1-\tau}-1=\tau.
\]
Hence $p_K(\tau)=\tau<1$ for all $K$.

Next fix $\theta\in(\tau,1)$.  Each function $p_h(\theta)$ is continuous and
strictly increasing in $\theta$: this is immediate for $p_0$, and the
recurrence preserves these properties.  Moreover, $p_K(\theta)\to\infty$ as
$\theta\to 1$.  Therefore, by the intermediate value theorem, for each $K$ there
exists a unique $\theta_K\in(\tau,1)$ such that $p_K(\theta_K)=1$.

We now prove the stated identity.  For $h=0$ it is exactly the definition of
$p_0$, namely $1+p_0=\theta_K(2+p_0)$.  Assume it holds for some $h$ and write
$S_h:=\sum_{s=0}^h p_s$.  Then
\[
    h+1+p_h=\theta_K(h+2+S_h).
\]
Using the recurrence $p_{h+1}=p_h/(1-\theta_K)-1$ and rearranging yields
\[
    h+2+p_{h+1}=\theta_K(h+3+S_h+p_{h+1}),
\]
which is the desired identity for $h+1$.

Finally, we show that $\theta_K\downarrow\tau$.  Since $p_K(\theta_K)=1$, we get
\[
    p_{K+1}(\theta_K)=\frac{1}{1-\theta_K}-1=\frac{\theta_K}{1-\theta_K}>1.
\]
Because $p_{K+1}(\cdot)$ is strictly increasing and $p_{K+1}(\tau)=\tau<1$, the
(unique) solution to $p_{K+1}(\theta)=1$ satisfies $\theta_{K+1}<\theta_K$.
Thus $(\theta_K)$ is decreasing and bounded below by $\tau$, so it converges to
some limit $\bar\theta\ge\tau$.

If $\bar\theta>\tau$, choose $\theta'\in(\tau,\bar\theta)$.  Then $\theta_K\ge\theta'$
for all $K$, and by monotonicity $1=p_K(\theta_K)\ge p_K(\theta')$.  But for any
fixed $\theta'>\tau$, the recurrence $p_{h+1}=p_h/(1-\theta')-1$ makes
$p_h(\theta')\to\infty$ as $h\to\infty$, a contradiction. Therefore, $\bar\theta=\tau$.
\end{proof}

\begin{theorem}\label{thm:weighted-infinity-upper-bound}
No deterministic online algorithm for offline-vertex-weighted $\womp_3(1,\infty)$
can have competitive ratio larger than
\[
    \tau=\frac{\sqrt5-1}{2}.
\]
\end{theorem}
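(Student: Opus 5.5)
The plan is to build an adaptive instance, parameterized by $K$, that forces every deterministic algorithm into the history-component picture of Figure~\ref{fig:weighted-upper-history}. In the branch where the algorithm stops after phase $h$, it should get value $h+1+p_h$ while the optimum is $h+2+\sum_{s\le h}p_s$, with $p_h:=p_h(\theta_K)$. Lemma~\ref{lem:weighted-infinity-sequence} then makes the ratio exactly $\theta_K$ in every such branch. Since $\theta_K\downarrow\tau$, this yields the bound $\tau$.

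\textbf{Setup.} Take unit-weight offline vertices $a_0,\ldots,a_{K+2}$ and light offline vertices $d_0,\ldots,d_K$ with $w(d_h)=p_h$.
\begin{itemize}
\item The first arrival $r$ is adjacent to all of $a_0,\ldots,a_{K+2}$. If $r$ is left unmatched we stop; the ratio is $0$.
\item The $a$'s are symmetric, so we may rename and call $r$'s partner $a_0$.
\end{itemize}

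\textbf{Phase $h$, for $h=0,\ldots,K$.} By induction the component rooted at $r$ has type $T^h$, with active vertex $a_h$. The exhausted offline vertices are $a_0,\ldots,a_{h-1}$, matched to $j_0,\ldots,j_{h-1}$. Reveal $j_h$ with $N(j_h)=\{a_h,d_h\}$. The algorithm has three options.
\begin{itemize}
\item It augments along some $\apath{j_h}{a_h}{r}{a}$ with $a$ a still-free $a$-vertex. By symmetry, rename $a$ as $a_{h+1}$ and continue; the component becomes $T^{h+1}$ by Lemma~\ref{lem:tree-transition}.
\item It matches $j_h$ directly to $d_h$. We stop (branch A).
\item It leaves $j_h$ unmatched. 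We stop (branch B).
\end{itemize}

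\textbf{Stopping and counting.} When we stop at phase $h$, reveal blockers $b_0,\ldots,b_{h-1}$ with $N(b_s)=\{a_s\}$, and one more blocker $b'$ with $N(b')=\{d_h\}$.
\begin{itemize}
\item Each $b_s$ cannot be matched: $a_s$ is matched and exhausted (offline budget $1$).
\item In branch A, $b'$ cannot be matched: $d_h$ is matched to $j_h$, whose neighbors $a_h,d_h$ are both matched.
\item In branch B, $b'$ may be matched, but that only gives the algorithm $h+1+p_h$, the same as branch A.
\item The vertices $d_s$ with $s<h$ remain free for the algorithm, but their only neighbors $j_s$ are matched and frozen, so the algorithm cannot collect them.
\end{itemize}
So the algorithm's value is at most $h+1+p_h$. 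For the optimum, match:
\begin{itemize}
\item $b_s\to a_s$ for $s<h$;
\item $j_h\to a_h$ and $r\to a_{h+1}$;
\item $j_s\to d_s$ for $s<h$, and $b'\to d_h$.
\end{itemize}
This gives $h+2+\sum_{s\le h}p_s$, and the identity in Lemma~\ref{lem:weighted-infinity-sequence} turns the ratio into exactly $\theta_K$.

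\textbf{Surviving all $K+1$ phases.} Here the algorithm holds $a_0,\ldots,a_{K+1}$, so its value is $K+2$. Add blockers on $a_0,\ldots,a_K$ (all exhausted) and on $a_{K+1}$; the latter is blocked because using it requires a path $\apath{\cdot}{a_{K+1}}{r}{a_{K+2}}$, and we must make sure this is unavailable. The optimum matches these blockers to $a_0,\ldots,a_{K+1}$, $r\to a_{K+2}$, and $j_s\to d_s$ for all $s\le K$, giving $K+3+\sum_{s\le K}p_s$. Since $p_K=1$, the identity gives $K+2=\theta_K(K+2+\sum_s p_s)$, so the ratio is below $\theta_K$.

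\textbf{Main obstacle.} The delicate step is the last one: with infinite online budget, the root $r$ can always be reassigned. Two fixes are possible, and I would try the first:
\begin{itemize}
\item Make $r$'s neighborhood exactly $\{a_0,\ldots,a_{K+1}\}$, so that after phase $K$ the root $r$ has no free neighbor and the final blocker on $a_{K+1}$ is blocked. The optimum then simply loses the extra $r$ edge; I need to recheck that the ratio is still at most $\theta_K$ using $p_K=1$.
\item Alternatively, end with an extra phase $j_{K+1}$ that is treated like a stop branch.
\end{itemize}
I also need to verify that in each phase no other augmentation is available. The only free neighbors reachable through matched edges are $r$'s unused $a$'s, and all $a$'s are interchangeable, which justifies the renaming.
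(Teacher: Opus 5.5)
Once you adopt your first fix $N(r)=\{a_0,\ldots,a_{K+1}\}$, your construction is exactly the paper's proof: the stop branches coincide with the paper's Case~1. The recheck you defer also goes through, since in the surviving branch $\opt\ge K+2+\sum_{s\le K}p_s$ (the paper simply matches $r$ to $a_{K+1}$ instead of adding a blocker there), and the identity at $h=K$ with $p_K=1$ gives ratio exactly $\theta_K$. The fix is genuinely necessary, not optional: with $a_{K+2}\in N(r)$, a blocker $b$ with $N(b)=\{a_{K+1}\}$ can be matched via $\apath{b}{a_{K+1}}{r}{a_{K+2}}$, because $a_{K+1}$ is unexhausted and $r$ has infinite budget, and that algorithm then gets ratio $(K+3)/(K+3+\sum_{s\le K}p_s)>\theta_K$.
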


\begin{proof}
Fix $K\ge 0$, and let $\theta_K$ be as in
Lemma~\ref{lem:weighted-infinity-sequence}.  In this proof, set
\[
    q_h:=p_h(\theta_K),
    \quad \forall\, h\in\{0,\ldots,K\}.
\]
Thus $q_K=1$, and Lemma~\ref{lem:weighted-infinity-sequence} gives
\begin{equation}\label{eq:weighted-upper-identity}
    h+1+q_h
    =
    \theta_K\left(h+2+\sum_{s=0}^h q_s\right),
    \quad \forall\, h\in\{0,\ldots,K\}.
\end{equation}

We construct an adaptive instance where every deterministic algorithm gets
ratio at most $\theta_K$.

\paragraph{Initial construction.}
The offline vertices are $
    a_0,a_1,\ldots,a_{K+1}$ and $d_0,d_1,\ldots,d_K$.
Set
\[
    w(a_h)=1, \quad \forall\, h\in\{0,\ldots,K+1\},
    \qquad
    w(d_h)=q_h, \quad \forall\, h\in\{0,\ldots,K\}.
\]
The labels of the vertices $a_h$ will be assigned adaptively.  This is harmless,
because all these vertices have weight $1$ and have the same neighborhood until
one of them is selected by the algorithm.

The first online vertex $r$ arrives with
\[
    N(r)=\{a_0,a_1,\ldots,a_{K+1}\}.
\]
If the algorithm leaves $r$ unmatched, we stop.  The algorithm gets value $0$,
while the optimum gets value $1$.  Thus the ratio is $0$.  Hence we may assume
that the algorithm matches $r$.  We rename the offline vertex matched to $r$ as
$a_0$.

\paragraph{The phases.}
We now describe the adaptive part of the instance in terms of the history graph.
The construction tries to keep the history graph as one main component for as
long as possible.  At the beginning of phase $h$, for all $h\in\{0,\ldots,K\}$, the main
component is of type $T^h$.  Its root is $r$ (i.e., $\rootof(C)=r$), and its active offline vertex is
$a_h$.  Equivalently, $r$ is currently matched to $a_h$.  The vertices
$a_0,\ldots,a_{h-1}$ are exhausted, and the vertices
$a_{h+1},\ldots,a_{K+1}$ are still free.

The online vertex $j_h$ arrives with
\[
    N(j_h)=\{a_h,d_h\}.
\]
Since $r$ is adjacent to the still-free vertices
$a_{h+1},\ldots,a_{K+1}$, every length-$3$ augmentation that keeps the main
component alive has the form
\[
    \apath{j_h}{a_h}{r}{a}
\]
for some free vertex $a\in\{a_{h+1},\ldots,a_{K+1}\}$.  If the algorithm uses
such an augmentation, we rename that endpoint as $a_{h+1}$.  Then the next
phase starts with root $r$ and active offline vertex $a_{h+1}$.
Figure~\ref{fig:weighted-upper-history} shows this history-graph branch.

\paragraph{The two cases.}

\textbf{Case 1: The construction stops at some phase $h$.}
Fix the first phase $h$ at which the algorithm does not keep the main component
alive.  Thus, when $j_h$ arrives, the algorithm does not use a length-$3$
augmentation of the form
\[
    \apath{j_h}{a_h}{r}{a}
\]
with $a\in\{a_{h+1},\ldots,a_{K+1}\}$ free.  There are two subcases.

\smallskip
\noindent
\textbf{Case 1a: The algorithm leaves $j_h$ unmatched.}
We reveal one more online vertex $u_h^d$ adjacent only to $d_h$.  Then we stop
the main construction.  The algorithm can match at most one online vertex to
$d_h$.  Hence the total contribution of $d_h$ to the algorithm is at most
$q_h$.

\smallskip
\noindent
\textbf{Case 1b: The algorithm matches $j_h$ directly to $d_h$.}
Then $j_h$ starts a new component instead of extending the main component.  We
reveal one more online vertex $u_h^d$ adjacent only to $d_h$.  Then we stop the
main construction.  This new vertex cannot increase the algorithm's value.
Indeed, $d_h$ is already matched to $j_h$.  Also, $u_h^d$ has no neighbor other
than $d_h$.  The only possible alternating walk of length $3$ starting from
$u_h^d$ would have to go through $d_h$ and $j_h$, and then to $a_h$.  But
$a_h$ is not free.  Thus no feasible length-$3$ augmentation starts at
$u_h^d$.  Hence the total contribution of $d_h$ to the algorithm is exactly
$q_h$.

\smallskip
\noindent

In both subcases, the contribution of $d_h$ to the algorithm is at most $q_h$.
We now finish this instance.  For each $s<h$, we reveal one online vertex
$u_s^a$ adjacent only to $a_s$, and then no more online vertices arrive.

The algorithm cannot match any of the vertices $u_s^a$ with $s<h$.  Indeed,
each $a_s$ with $s<h$ was matched to $r$ at the beginning of phase $s$, and
was later used as the middle offline vertex of the length-$3$ augmentation in
that phase.  Since the offline budget is $1$, all vertices
$a_0,\ldots,a_{h-1}$ are exhausted.

At this point the algorithm has matched the offline vertices
$a_0,\ldots,a_h$ in the main component.  These vertices have total weight
$h+1$.  The only possible additional contribution is from $d_h$, and this
contribution is at most $q_h$.  Therefore, $\alg\le h+1+q_h$.

We now lower bound the optimum on the same revealed instance.  The optimum
matches
\[
    u_s^a\text{ to }a_s \quad (s<h),
    \qquad
    j_s\text{ to }d_s \quad (s<h),
\]
and also matches
\[
    j_h\text{ to }a_h,
    \qquad
    u_h^d\text{ to }d_h,
    \qquad
    r\text{ to }a_{h+1}.
\]
All these edges have been revealed.  The chosen online vertices are distinct,
and the chosen offline vertices are distinct.  Hence $
    \opt
    \ge
    h+2+\sum_{s=0}^h q_s $. 
Combining this with \eqref{eq:weighted-upper-identity}, we get
\[
    \frac{\alg}{\opt}
    \le
    \frac{h+1+q_h}{h+2+\sum_{s=0}^h q_s}
    =
    \theta_K .
\]

\smallskip
\noindent
\textbf{Case 2: The construction reaches the end of phase $K$.}
It remains to consider the case in which Case~1 never occurs.  Thus, at every
phase $h\in\{0,\ldots,K\}$, the algorithm uses a length-$3$ augmentation that keeps
the main component alive.  After each such augmentation, we rename the new free
endpoint as $a_{h+1}$.  Hence, after phase $K$, the main component has type
$T^{K+1}$.  The algorithm has matched the offline vertices
$a_0,a_1,\ldots,a_{K+1}$, and its value is $
    \alg=K+2.$

We now reveal one online vertex $u_s^a$ adjacent only to $a_s$, for each
$s\in\{0,\ldots,K\}$, and then no more online vertices arrive.  The algorithm cannot
match any of these vertices.  Indeed, each $a_s$ with $s\le K$ was used as the
middle offline vertex of the length-$3$ augmentation in phase $s$.  Since the
offline budget is $1$, all vertices $a_0,\ldots,a_K$ are exhausted.

The optimum matches
\[
    u_s^a\text{ to }a_s, \quad \forall\, s\in\{0,\ldots,K\},
    \qquad
    j_s\text{ to }d_s, \quad \forall\, s\in\{0,\ldots,K\},
\]
and also matches $r$ to $a_{K+1}$.  All these edges have been revealed.  The
chosen online vertices are distinct, and the chosen offline vertices are
distinct.  Therefore, $\opt
    \ge
    K+2+\sum_{s=0}^K q_s$.
    
Since $q_K=1$, \eqref{eq:weighted-upper-identity} with $h=K$ gives
\[
    K+2
    =
    \theta_K\left(K+2+\sum_{s=0}^K q_s\right).
\]
Thus
\[
    \frac{\alg}{\opt}
    \le
    \frac{K+2}{K+2+\sum_{s=0}^K q_s}
    =
    \theta_K .
\]

For every $K$, no deterministic algorithm can guarantee ratio larger than
$\theta_K$.  Since $\theta_K\downarrow\tau$, no deterministic algorithm can guarantee ratio larger than $\tau$.
\end{proof}

\begin{proof}[Proof of Theorem~\ref{thm:weighted-unbounded-optimal-ratio}]
The algorithmic guarantee is Theorem~\ref{thm:weighted-infinity-lower-bound}.
The matching upper bound is Theorem~\ref{thm:weighted-infinity-upper-bound}.
Hence the optimal deterministic competitive ratio for offline-vertex-weighted
$\womp_3(1,\infty)$ is exactly $(\sqrt5-1)/2$.
\end{proof}
\subsection{Online-side Vertex Weights}
\label{subsec:weighted-online-side}

The weighted variant considered above assigns weights only to offline
vertices. This restriction is necessary for any positive deterministic
guarantee under adversarial arrivals if the algorithm is only allowed to
augment.

Indeed, consider a single offline vertex $i$. The first online vertex
$r_1$ has weight $1$ and is adjacent only to $i$. If a deterministic
algorithm leaves $r_1$ unmatched, the adversary stops, and the ratio is
zero. Otherwise, the algorithm matches $r_1$ to $i$. The adversary then
reveals a second online vertex $r_2$, also adjacent only to $i$, with
weight $W$. Since there is no free offline neighbor to which $r_1$ can
be moved, no length-$3$ augmenting path is available. The algorithm
keeps value $1$, while the optimum has value $W$. Letting $W$ tend to
infinity gives competitive ratio zero.

Thus, with weights on online vertices, an algorithm would need
replacement operations that do not increase the cardinality of the
matching. Such operations are outside $\omp_3$.

\medskip
\paragraph*{Acknowledgements} This work was partially supported by ANID (Agencia Nacional de Investigaci\'on y Desarrollo, Chile) through FONDECYT Grant No.~1231669 and the BASAL Center for Mathematical Modeling (FB210005).
\bibliographystyle{plain}
\bibliography{bibliography}

\appendix
\section{Additional examples and a special case}\label{app:bad-examples}

This appendix contains auxiliary material used to separate the roles of the
assumptions in the main results. The first three subsections give small,
explicit instances cited in Section~\ref{sec:intro}. The last subsection gives
a self-contained proof of the case $t=1$ of Theorem~\ref{thm:lcp-ratio}.

\noindent Specifically:
\begin{enumerate}
    \item The direct-first property by itself does not yield the competitive-ratio
    guarantee of \lcp: the tie-breaking among feasible length-$3$ augmentations
    matters (Subsection~\ref{subsec:app-direct-first}).

    \item The ``collapse'' of the offline budget is a phenomenon tied to
    length-$3$ augmentations; it can fail when length-$5$ augmentations are
    allowed (Subsection~\ref{subsec:app-omp5}).

    \item Prioritizing length-$1$ augmentations is necessary: without it, even
    natural deterministic rules can have strictly worse performance
    (Subsection~\ref{subsec:app-prioritize-direct}).

    \item The case $t=1$ of Theorem~\ref{thm:lcp-ratio} admits a short
    self-contained proof (Subsection~\ref{subsec:lcp-one-one}).
\end{enumerate}

\subsection{Insufficiency of the direct-first property}\label{subsec:app-direct-first}

This example shows that the direct-first property alone does not imply the \lcp guarantee.

\begin{proposition}\label{prop:max-type-bad}
There is a direct-first deterministic algorithm for $\omp_3(1,2)$ whose
competitive ratio is at most $5/8$. In particular, the direct-first property
alone does not imply the guarantee $\gamma_2=7/11$ achieved by \lcp.
\end{proposition}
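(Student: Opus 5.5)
The plan is to exhibit one concrete direct-first rule that breaks ties among length-$3$ augmentations the opposite way from \lcp, and to run it on a small explicit instance. I would take the ``max-type'' rule. It takes a direct match whenever one exists. Otherwise, among feasible $3$-paths $\apath{j}{x}{y}{i}$, it picks one \emph{maximizing} the reassignment count $c_y$ of the middle online vertex, with a fixed deterministic tie-break. This rule is direct-first and never reassigns an offline vertex twice, by the argument of Lemma~\ref{lemma:lcp-spgreedy}, which uses only the direct-first property. So Lemma~\ref{lem:history-components} applies, and its history components are trees of type $T^Q$ with $Q\le 2$. The intended failure mode is that the rule keeps pushing an already-reassigned root up to $T^2$, where its budget runs out. \lcp would instead spread the augmentations over fresh $T^0$ components.

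The instance will be fixed and non-adaptive, since the algorithm is now a specific one. It uses several roots $r_1,r_2,\dots$, each with a private set of $3$ offline neighbours. This guarantees a free offline endpoint for every augmentation and a spare private vertex for \opt, exactly as in the proof of Theorem~\ref{thm:upper-bound-general-t}. Next come phase vertices $j$ whose neighbourhoods consist only of current active offline endpoints $\act(C)$. Each such $j$ contains one endpoint whose root has $c=1$ and one whose root has $c=0$. The max-type rule then chooses the $c=1$ root, turning that component into type $T^2$ and exhausting its root. Finally I reveal blockers, each adjacent to a single offline vertex. Blockers at old active endpoints (the analogue of $\mathcal B$) are unmatchable because the phase vertex now matched there has no free neighbour. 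Blockers at the new active endpoint of a $T^2$ component are unmatchable because the root is exhausted. This is the same reasoning as in the final-blockers paragraph of Theorem~\ref{thm:upper-bound-general-t}, and Lemma~\ref{lemma:middle} confirms that no other augmentation is possible.

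To finish, I would count. \alg is the number of roots plus the number of phase vertices actually matched. For \opt, I give each root a spare private vertex, each phase vertex the active endpoint of a component the algorithm did \emph{not} advance (a ``witness'' in the style of $\wit(j)$), and each blocker its unique neighbour. The target is $\alg=5$ and $\opt\ge 8$ (or a scaled copy), together with a check that the offline vertices used by \opt are pairwise distinct.

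The main obstacle is tuning the instance to reach exactly $5/8$. A first attempt with two roots, two phase vertices and three blockers gives only $4/6$, because one phase vertex ends up unmatched in \opt as well. I would therefore add a third root and let the later phase vertices see both a $c=1$ endpoint and fresh $c=0$ endpoints, which the max-type rule ignores. This should create one more witness available to \opt and one more exhausted-root blocker. I would also compare the resulting ratio with $\gamma_2=7/11$ to confirm that $5/8<7/11$, which gives the separation claimed in the statement.
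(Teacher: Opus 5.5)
You pick the right rule: maximizing $c_y$ is the paper's ``component of maximum type $T^q$'' rule, by Lemma~\ref{lem:history-components}. The three-root shape you are steering toward is also the paper's instance. But there is a genuine gap. The proposal never fixes an instance, and the one parameter you do commit to breaks the count.

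A root pushed to type $T^2$ has been matched to three of its private vertices. Your blocker scheme then uses all three of them in \opt. With private sets of size $3$, that root and its three blockers are four online vertices whose neighborhoods lie in a $3$-element set. Hence $\nu(G)\le 7$, and you only get $5/7>7/11$. This gives neither the claimed $5/8$ nor a separation from \lcp. The proof of Theorem~\ref{thm:upper-bound-general-t} uses private sets of size $t+2=4$ precisely so that a spare remains after $t$ reassignments. Your $4/6$ count for the two-root attempt already assumes this spare; with size-$3$ sets it would be $4/5$.

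Your sketched repair is also mis-accounted. Going from two to three roots adds $1$ to \alg (the new root). It adds $2$ to \opt: the new root's spare, and a witness for the second phase vertex. The number of blockers stays at three. With only two phase vertices there is no second exhausted-root blocker.

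Moreover, the first phase vertex cannot see a $c=1$ endpoint. Both of its endpoints have $c=0$ and the tie-break decides, so the two candidate roots need equal private-set sizes, or labels assigned after the fact.

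What is missing concretely is the following instance:
$N(r_1)=\{a_0,a_1\}$, $N(r_2)=\{b_0,\dots,b_3\}$, $N(r_3)=\{c_0,\dots,c_3\}$, $N(r_4)=\{b_0,c_0\}$, $N(r_5)=\{a_0,c_1\}$, then blockers $N(r_6)=\{c_0\}$, $N(r_7)=\{c_1\}$, $N(r_8)=\{c_2\}$.
On it, the rule plays $\apath{r_4}{c_0}{r_3}{c_1}$ and then $\apath{r_5}{c_1}{r_3}{c_2}$, exhausting $r_3$. Then $r_6$ and $r_7$ are blocked because $r_4$ and $r_5$ have no free neighbor. The blocker $r_8$ is blocked by the exhausted root $r_3$. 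The algorithm therefore ends with $5$ edges. Meanwhile $\{r_1a_1,r_2b_1,r_3c_3,r_4b_0,r_5a_0,r_6c_0,r_7c_1,r_8c_2\}$ is a matching of size $8$.
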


\begin{proof}
Consider the following direct-first rule: if a length-$1$ augmentation is
available, the algorithm takes one; otherwise, it chooses a feasible length-$3$
augmentation whose middle matched edge lies in a component of maximum type
$T^q$ in the current history graph.

We give an adaptive instance where all unlisted neighborhoods are empty. Let the
offline vertices be $a_0,a_1$, $b_0,b_1,b_2,b_3$, and $c_0,c_1,c_2,c_3$.
The first three online vertices arrive with $N(r_1)=\{a_0,a_1\}$,
$N(r_2)=\{b_0,b_1,b_2,b_3\}$, and $N(r_3)=\{c_0,c_1,c_2,c_3\}$. Without loss of
generality, assume that the algorithm matches $r_1$ to $a_0$, $r_2$ to $b_0$,
and $r_3$ to $c_0$.

Vertex $r_4$ arrives with $N(r_4)=\{b_0,c_0\}$. There are two symmetric length-$3$
augmentations. Without loss of generality, assume that the algorithm uses
$\apath{r_4}{c_0}{r_3}{c_1}$. The current matching is $\{r_1a_0, r_2b_0, r_4c_0, r_3c_1\}$.
Thus the $c$-component has type $T^1$.

Vertex $r_5$ arrives with $N(r_5)=\{a_0,c_1\}$. There are two feasible length-$3$
augmentations: $\apath{r_5}{a_0}{r_1}{a_1}$ and $\apath{r_5}{c_1}{r_3}{c_2}$. The first one uses a component
of type $T^0$, while the second one uses the $c$-component, which has type $T^1$.
Therefore the algorithm chooses the second path. The matching becomes
$M=\{r_1a_0, r_2b_0, r_4c_0, r_5c_1, r_3c_2\}$. At this point the online vertex $r_3$ has
been reassigned twice, so it has exhausted its budget $t=2$.

Finally, vertices $r_6,r_7,r_8$ arrive with $N(r_6)=\{c_0\}$, $N(r_7)=\{c_1\}$, and
$N(r_8)=\{c_2\}$. Vertices $r_6$ and $r_7$ have no length-$1$ or length-$3$ augmenting
path. For vertex $r_8$, the only possible length-$3$ augmentation is $\apath{r_8}{c_2}{r_3}{c_3}$,
but this path is infeasible because it would reassign the online vertex $r_3$ for
a third time. Hence the algorithm finishes with matching size $5$.

On the same instance, an optimum matching consists of the edges $r_1a_1$, $r_2b_1$,
$r_3c_3$, $r_4b_0$, $r_5a_0$, $r_6c_0$, $r_7c_1$, and $r_8c_2$.
Thus the ratio of the algorithm on this instance is $5/8<7/11$.
\end{proof}

\begin{figure}[ht]
\centering
\scalebox{0.58}{
\begin{minipage}{0.49\textwidth}
\centering
\begin{tikzpicture}
    \Vertex[x=0,y=0, size=.5, color=black, style=solid, label=$a_0$, position=left]{A-p1a0}
    \Vertex[x=0,y=-0.8, size=.5, color=white, style=dashed, label=$a_1$, position=left]{A-p1a1}

    \Vertex[x=0,y=-1.6, size=.5, color=black, style=solid, label=$b_0$, position=left]{A-p1b0}
    \Vertex[x=0,y=-2.4, size=.5, color=white, style=dashed, label=$b_1$, position=left]{A-p1b1}
    \Vertex[x=0,y=-3.2, size=.5, color=white, style=dashed, label=$b_2$, position=left]{A-p1b2}
    \Vertex[x=0,y=-4.0, size=.5, color=white, style=dashed, label=$b_3$, position=left]{A-p1b3}

    \Vertex[x=0,y=-4.8, size=.5, color=black, style=solid, label=$c_0$, position=left]{A-p1c0}
    \Vertex[x=0,y=-5.6, size=.5, color=black, style=solid, label=$c_1$, position=left]{A-p1c1}
    \Vertex[x=0,y=-6.4, size=.5, color=black, style=solid, label=$c_2$, position=left]{A-p1c2}
    \Vertex[x=0,y=-7.2, size=.5, color=white, style=dashed, label=$c_3$, position=left]{A-p1c3}

    \Vertex[x=3,y=0, size=.5, color=black, style=solid, label=$r_1$, position=right]{A-p1u1}
    \Vertex[x=3,y=-1.6, size=.5, color=black, style=solid, label=$r_2$, position=right]{A-p1u2}
    \Vertex[x=3,y=-4.8, size=.5, color=black, style=solid, label=$r_3$, position=right]{A-p1u3}
    \Vertex[x=3,y=-5.6, size=.5, color=black, style=solid, label=$r_4$, position=right]{A-p1u4}
    \Vertex[x=3,y=-6.4, size=.5, color=black, style=solid, label=$r_5$, position=right]{A-p1u5}
    \Vertex[x=3,y=-7.2, size=.5, color=white, style=dashed, label=$r_6$, position=right]{A-p1u6}
    \Vertex[x=3,y=-8.0, size=.5, color=white, style=dashed, label=$r_7$, position=right]{A-p1u7}
    \Vertex[x=3,y=-8.8, size=.5, color=white, style=dashed, label=$r_8$, position=right]{A-p1u8}

    \Edge[style=dashed](A-p1a1)(A-p1u1)

    \Edge[style=dashed](A-p1b1)(A-p1u2)
    \Edge[style=dashed](A-p1b2)(A-p1u2)
    \Edge[style=dashed](A-p1b3)(A-p1u2)

    \Edge[style=dashed](A-p1c3)(A-p1u3)

    \Edge[style=dashed](A-p1b0)(A-p1u4)
    \Edge[style=dashed](A-p1a0)(A-p1u5)

    \Edge[style=dashed](A-p1c0)(A-p1u6)
    \Edge[style=dashed](A-p1c1)(A-p1u7)
    \Edge[style=dashed](A-p1c2)(A-p1u8)

    \Edge[style=solid](A-p1c0)(A-p1u3)
    \Edge[style=solid](A-p1c1)(A-p1u3)

    \Edge[style={line width=2pt}](A-p1a0)(A-p1u1)
    \Edge[style={line width=2pt}](A-p1b0)(A-p1u2)
    \Edge[style={line width=2pt}](A-p1c2)(A-p1u3)
    \Edge[style={line width=2pt}](A-p1c0)(A-p1u4)
    \Edge[style={line width=2pt}](A-p1c1)(A-p1u5)
\end{tikzpicture}

{\small Algorithm's final matching}
\end{minipage}
\hfill
\begin{minipage}{0.49\textwidth}
\centering
\begin{tikzpicture}
    \Vertex[x=0,y=0, size=.5, color=black, style=solid, label=$a_0$, position=left]{B-p1a0}
    \Vertex[x=0,y=-0.8, size=.5, color=black, style=solid, label=$a_1$, position=left]{B-p1a1}

    \Vertex[x=0,y=-1.6, size=.5, color=black, style=solid, label=$b_0$, position=left]{B-p1b0}
    \Vertex[x=0,y=-2.4, size=.5, color=black, style=solid, label=$b_1$, position=left]{B-p1b1}
    \Vertex[x=0,y=-3.2, size=.5, color=white, style=dashed, label=$b_2$, position=left]{B-p1b2}
    \Vertex[x=0,y=-4.0, size=.5, color=white, style=dashed, label=$b_3$, position=left]{B-p1b3}

    \Vertex[x=0,y=-4.8, size=.5, color=black, style=solid, label=$c_0$, position=left]{B-p1c0}
    \Vertex[x=0,y=-5.6, size=.5, color=black, style=solid, label=$c_1$, position=left]{B-p1c1}
    \Vertex[x=0,y=-6.4, size=.5, color=black, style=solid, label=$c_2$, position=left]{B-p1c2}
    \Vertex[x=0,y=-7.2, size=.5, color=black, style=solid, label=$c_3$, position=left]{B-p1c3}

    \Vertex[x=3,y=0, size=.5, color=black, style=solid, label=$r_1$, position=right]{B-p1u1}
    \Vertex[x=3,y=-1.6, size=.5, color=black, style=solid, label=$r_2$, position=right]{B-p1u2}
    \Vertex[x=3,y=-4.8, size=.5, color=black, style=solid, label=$r_3$, position=right]{B-p1u3}
    \Vertex[x=3,y=-5.6, size=.5, color=black, style=solid, label=$r_4$, position=right]{B-p1u4}
    \Vertex[x=3,y=-6.4, size=.5, color=black, style=solid, label=$r_5$, position=right]{B-p1u5}
    \Vertex[x=3,y=-7.2, size=.5, color=black, style=solid, label=$r_6$, position=right]{B-p1u6}
    \Vertex[x=3,y=-8.0, size=.5, color=black, style=solid, label=$r_7$, position=right]{B-p1u7}
    \Vertex[x=3,y=-8.8, size=.5, color=black, style=solid, label=$r_8$, position=right]{B-p1u8}

    \Edge[style=dashed](B-p1a0)(B-p1u1)

    \Edge[style=dashed](B-p1b0)(B-p1u2)
    \Edge[style=dashed](B-p1b2)(B-p1u2)
    \Edge[style=dashed](B-p1b3)(B-p1u2)

    \Edge[style=dashed](B-p1c0)(B-p1u3)
    \Edge[style=dashed](B-p1c1)(B-p1u3)
    \Edge[style=dashed](B-p1c2)(B-p1u3)

    \Edge[style=dashed](B-p1c0)(B-p1u4)
    \Edge[style=dashed](B-p1a0)(B-p1u5)

    \Edge[style={line width=2pt}](B-p1a1)(B-p1u1)
    \Edge[style={line width=2pt}](B-p1b1)(B-p1u2)
    \Edge[style={line width=2pt}](B-p1c3)(B-p1u3)
    \Edge[style={line width=2pt}](B-p1b0)(B-p1u4)
    \Edge[style={line width=2pt}](B-p1a0)(B-p1u5)
    \Edge[style={line width=2pt}](B-p1c0)(B-p1u6)
    \Edge[style={line width=2pt}](B-p1c1)(B-p1u7)
    \Edge[style={line width=2pt}](B-p1c2)(B-p1u8)
\end{tikzpicture}

{\small An optimum matching}
\end{minipage}
}
\caption{The final graph in Proposition~\ref{prop:max-type-bad}. In the left
panel, thick solid edges are the algorithm's final matching, thin solid edges
entered the history graph but are not in the final matching, and dashed edges
were not used by the algorithm. In the right panel, thick solid edges form an
optimum matching, and all other displayed edges are dashed.}
\label{fig:max-type-bad}
\end{figure}

\subsection{Limits of length-\texorpdfstring{$5$}{5} augmentations}\label{subsec:app-omp5}

This example shows that allowing length-$5$ augmentations does not yield, for
$\omp_5(1,\infty)$, the same competitive ratio as the unbudgeted $3/4$
guarantee for $\omp_5=\omp_5(\infty,\infty)$.

\begin{proposition}\label{prop:omp5-unit-offline-bad}
No deterministic algorithm for $\omp_5(1,\infty)$ can guarantee a competitive
ratio larger than $5/7$.
\end{proposition}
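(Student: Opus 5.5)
The plan is to give an adaptive adversary in the style of Theorem~\ref{thm:upperboundgeneral}, on a constant number of offline vertices (around seven) and at most seven online arrivals. The target is that every branch ends either with $\alg\le 5$ and $\opt\ge 7$, or earlier with a ratio at most $2/3<5/7$ when the algorithm refuses a match.

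The guiding idea is the obstruction named in the introduction. With length-$5$ paths, an augmentation $\apath{j}{x_1}{y_1}{x_2}{y_2}{i}$ reassigns two offline vertices, $x_1$ and $x_2$. So once an offline vertex $x$ has been reassigned, every augmenting path with $x$ as an internal vertex is infeasible, even though online budgets are infinite. The adversary therefore first forces some length-$3$ (or length-$5$) augmentation that exhausts an offline vertex $x$. It then ends with blockers whose only augmenting paths of length at most $5$ pass through $x$ or through an online vertex with no free neighbour.

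I would build the instance in stages, as in the proof of Theorem~\ref{thm:upperboundgeneral}(b).

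\emph{Stage 1.} Reveal $r_1$ with a large neighbourhood of fresh vertices, and $r_2$ with a large neighbourhood that overlaps it, so that any later displacement of $r_1$ or $r_2$ always has free targets. By symmetry, rename so that $r_1a_1$ and $r_2a_2$ are matched. If the algorithm refuses either vertex, the ratio is at most $1/2$.

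\emph{Stage 2.} Reveal $r_3$ with $N(r_3)=\{a_1,a_2\}$. As in the $3/5$ construction, this forces a length-$3$ augmentation $\apath{r_3}{v}{x}{u}$, which exhausts $v\in\{a_1,a_2\}$. Or the algorithm refuses $r_3$, in which case a second copy $r_4$ with the same neighbourhood forces the augmentation anyway, or the ratio drops to at most $2/4$.

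\emph{Stage 3.} Add one or two further phase vertices adjacent to $v$ and to the still-active offline endpoint $u$. These are chosen so that the algorithm can match them only through length-$5$ paths. Every such path either uses the exhausted $v$ internally or ends at a new free vertex; in the latter case the adversary records which offline vertices became exhausted.

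\emph{Stage 4.} Reveal blockers, each adjacent to a single offline vertex. A blocker is added only when every alternating path of length $\le 5$ from it either meets an exhausted internal offline vertex or reaches an online vertex all of whose neighbours are matched.

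In each terminal branch I will then exhibit an explicit optimal matching of size $7$, built as in the earlier proofs. The initial vertices $r_1,r_2$ go to unused vertices of their large neighbourhoods, the phase vertices go to their "witness" offline vertices, and each blocker goes to its unique neighbour. The algorithm's matching will be checked to have size at most $5$.

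The main obstacle is Stage~3 and the case analysis it creates. Unlike length $3$, an arriving vertex may have several length-$5$ options through different middle edges, and a deterministic algorithm may also decline a feasible augmentation to preserve structure. So I must choose neighbourhoods for the phase vertices such that every one of the algorithm's possible choices, including refusing, leads to a configuration whose blockers are genuinely unmatchable. The check I will rely on is the following: every augmenting path of length $5$ from a blocker must use two matched edges, and the neighbourhoods are designed so that one of these two edges is incident to an exhausted offline vertex. I would first try to get the target by running the $3/5$ construction and then adding one more phase in which length-$5$ paths are forced through the exhausted $v$. I will tune that phase's neighbourhoods until each branch gives exactly $5$ versus $7$.
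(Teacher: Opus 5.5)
Your proposal is a search plan rather than a proof. The whole content of the statement sits in Stages~3--4, and there you specify no neighbourhoods, carry out no branch analysis, and exhibit no terminal configuration with $\alg\le 5$ and $\opt\ge 7$. ``I will tune that phase's neighbourhoods'' is exactly the missing step.

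The concrete guidance you do give also points the wrong way. With $t=\infty$, no online vertex is ever exhausted. So a blocker adjacent only to a non-exhausted matched offline vertex $z$, with partner $y$, is unmatchable only if two things hold: $y$ has no free neighbour, and every length-$5$ augmenting path from the blocker contains an exhausted internal offline vertex. Your Stage~1 design gives $r_1,r_2$ large neighbourhoods ``so that any later displacement always has free targets''. This guarantees the opposite: every non-exhausted matched edge at $r_1$ or $r_2$ stays usable as a middle edge. In particular, a Stage~3 vertex $j$ adjacent to $v$ and $u$ is \emph{not} matchable only through length-$5$ paths. The path $\apath{j}{u}{x}{w}$, with $w$ a free neighbour of $x$, is a feasible $3$-path. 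Likewise, Stage~2 does not force a length-$3$ augmentation. If the overlap you built contains $a_2$, then $\apath{r_3}{a_1}{r_1}{a_2}{r_2}{i}$ is also feasible, and your case analysis must cover it.

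The missing idea, which the paper's instance uses, has two parts. First, keep the initial gadgets small and disjoint. Second, add a fresh, independent gadget bridged to the first, so that whichever move the algorithm makes, the offline vertex it exhausts lies on the unique length-$5$ escape route of a later blocker.

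In the paper's labelling, $N(r_1)=\{a_0,a_1,a_2\}$, $N(r_2)=\{b_0,b_1,b_2\}$ and $N(r_3)=\{a_0,b_0\}$. Disjointness rules out $5$-paths for $r_3$, and refusing $r_3$ already gives $2/3<5/7$. After $\apath{r_3}{a_0}{r_1}{a_1}$, a blocker $r_6$ at $a_0$ has only the routes $\apath{r_6}{a_0}{r_3}{b_0}{r_2}{b_k}$, and these are cut by the exhausted $a_0$.

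Then $r_4$ arrives with the fresh neighbourhood $\{c_0,c_1\}$, and $r_5$ with $N(r_5)=\{a_1,c_0\}$. Suppose $r_4$ is matched to $c_0$ and the algorithm matches $r_5$. Then $r_5$ can only use $\apath{r_5}{a_1}{r_1}{a_2}$ or $\apath{r_5}{c_0}{r_4}{c_1}$. The final blocker $r_7$ at the newly exhausted vertex then has only $\apath{r_7}{a_1}{r_5}{c_0}{r_4}{c_1}$ or $\apath{r_7}{c_0}{r_5}{a_1}{r_1}{a_2}$, and both are infeasible. If the algorithm declines $r_4$ or $r_5$, the blocker at $a_1$ can sometimes still be matched via $\apath{r_7}{a_1}{r_1}{a_2}$, but the declined vertex already costs the algorithm one unit.

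Every branch ends with $\alg\le 5$ against a matching of size $7$. One such matching is $\{r_6a_0,r_7a_1,r_1a_2,r_3b_0,r_2b_1,r_5c_0,r_4c_1\}$; when the blocker sits at $c_0$, use $r_7c_0,r_5a_1$ in place of $r_7a_1,r_5c_0$. Without this, or some other explicit instance with a complete case analysis, your proposal does not establish the bound.
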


\begin{proof}
We give an adaptive instance. Let the offline vertices be $a_0,a_1,a_2$,
$b_0,b_1,b_2$, and $c_0,c_1$. The first two online vertices arrive with
$N(r_1)=\{a_0,a_1,a_2\}$ and $N(r_2)=\{b_0,b_1,b_2\}$. If the algorithm leaves
one of them unmatched, the adversary stops and the ratio is at most $1/2$.
Thus, after relabeling, we may assume that the algorithm matches $r_1$ to
$a_0$ and $r_2$ to $b_0$.

Vertex $r_3$ arrives with $N(r_3)=\{a_0,b_0\}$. If the algorithm does not
augment, then it has matching size $2$, while the optimum has size $3$, and the
ratio is already below $5/7$. Hence we may assume that the algorithm augments.
By symmetry, assume that it uses $\apath{r_3}{a_0}{r_1}{a_1}$. The matching is
now $\{r_3a_0,r_1a_1,r_2b_0\}$, and $a_0$ has exhausted its offline budget.

Vertex $r_4$ arrives with $N(r_4)=\{c_0,c_1\}$.

\textbf{Case 1: the algorithm leaves $r_4$ unmatched.}
The adversary reveals $N(r_5)=\{a_1,c_0\}$.

\textbf{Case 1a: the algorithm leaves $r_5$ unmatched.}
The adversary reveals $N(r_6)=\{a_0\}$ and $N(r_7)=\{a_1\}$.
Vertex $r_6$ is blocked by the exhausted vertex $a_0$. Vertex $r_7$ can increase
the matching size by at most one, through $\apath{r_7}{a_1}{r_1}{a_2}$. Thus
the algorithm finishes with matching size at most $4$.

\textbf{Case 1b: the algorithm matches $r_5$ directly to $c_0$.}
The adversary reveals $N(r_6)=\{a_0\}$ and $N(r_7)=\{a_1\}$. Vertex $r_6$ is
blocked by $a_0$, and $r_7$ can increase the matching size by at most one.
Thus the algorithm finishes with matching size at most $5$.

\textbf{Case 1c: the algorithm matches $r_5$ with a 3-path.}
Suppose the algorithm uses $\apath{r_5}{a_1}{r_1}{a_2}$. Then $a_1$ exhausts its
offline budget. The adversary reveals $N(r_6)=\{a_0\}$ and $N(r_7)=\{a_1\}$.
Vertex $r_6$ is blocked by $a_0$, and vertex $r_7$ is blocked by $a_1$. Thus
the algorithm finishes with matching size at most $4$.

In all subcases of Case 1, the final graph has a matching of size $7$, namely
\[
    \{r_6a_0,r_7a_1,r_1a_2,r_3b_0,r_2b_1,r_5c_0,r_4c_1\}.
\]
Thus Case 1 gives final ratio at most $5/7$.

\textbf{Case 2: the algorithm matches $r_4$.}
By symmetry, assume that it matches $r_4$ to $c_0$. Vertex $r_5$ arrives with
$N(r_5)=\{a_1,c_0\}$.

\textbf{Case 2a: the algorithm augments through $a_1$.}
Suppose the algorithm uses $\apath{r_5}{a_1}{r_1}{a_2}$. Then $a_1$ exhausts its
offline budget. The adversary reveals $N(r_6)=\{a_0\}$ and $N(r_7)=\{a_1\}$.
The path $\apath{r_6}{a_0}{r_3}{b_0}{r_2}{b_1}$ has length $5$, but is
infeasible because it would reassign $a_0$ again. The path
$\apath{r_7}{a_1}{r_5}{c_0}{r_4}{c_1}$ has length $5$, but is infeasible because
it would reassign $a_1$ again. Thus the algorithm finishes with matching size
at most $5$.

\textbf{Case 2b: the algorithm augments through $c_0$.}
Suppose the algorithm uses $\apath{r_5}{c_0}{r_4}{c_1}$. Then $c_0$ exhausts its
offline budget. The adversary reveals $N(r_6)=\{a_0\}$ and $N(r_7)=\{c_0\}$.
The path $\apath{r_6}{a_0}{r_3}{b_0}{r_2}{b_1}$ has length $5$, but is
infeasible because it would reassign $a_0$ again. The path
$\apath{r_7}{c_0}{r_5}{a_1}{r_1}{a_2}$ has length $5$, but is infeasible because
it would reassign $c_0$ again. Thus the algorithm finishes with matching size
at most $5$.

\textbf{Case 2c: the algorithm does not augment when $r_5$ arrives.}
The adversary reveals $N(r_6)=\{a_0\}$ and $N(r_7)=\{a_1\}$. Vertex $r_6$ is
blocked by $a_0$. Vertex $r_7$ can increase the matching size by at most one,
through $\apath{r_7}{a_1}{r_1}{a_2}$. Thus the algorithm finishes with matching
size at most $5$.

In all subcases of Case 2, the final graph has a matching of size $7$. In Cases
2a and 2c, one such matching consists of the edges $r_6a_0$, $r_7a_1$,
$r_1a_2$, $r_3b_0$, $r_2b_1$, $r_5c_0$, and $r_4c_1$. In Case 2b, one such
matching consists of the edges $r_6a_0$, $r_7c_0$, $r_5a_1$, $r_1a_2$,
$r_3b_0$, $r_2b_1$, and $r_4c_1$. Thus Case 2 also gives final ratio at most
$5/7$.

This proves the claimed upper bound.
\end{proof}

\begin{figure}[ht]
\centering
\scalebox{0.68}{
\begin{tabular}{cccc}
\begin{minipage}{0.24\textwidth}
\centering
\begin{tikzpicture}
    \Vertex[x=0,y=0, size=.5, color=black, style=solid, label=$a_0$, position=left]{A1a0}
    \Vertex[x=0,y=-1.0, size=.5, color=black, style=solid, label=$a_1$, position=left]{A1a1}
    \Vertex[x=0,y=-2.0, size=.5, color=black, style=solid, label=$a_2$, position=left]{A1a2}
    \Vertex[x=0,y=-3.0, size=.5, color=black, style=solid, label=$b_0$, position=left]{A1b0}
    \Vertex[x=0,y=-4.0, size=.5, color=white, style=dashed, label=$b_1$, position=left]{A1b1}
    \Vertex[x=0,y=-5.0, size=.5, color=white, style=dashed, label=$b_2$, position=left]{A1b2}
    \Vertex[x=0,y=-6.0, size=.5, color=black, style=solid, label=$c_0$, position=left]{A1c0}
    \Vertex[x=0,y=-7.0, size=.5, color=white, style=dashed, label=$c_1$, position=left]{A1c1}

    \Vertex[x=3,y=0, size=.5, color=black, style=solid, label=$r_1$, position=right]{A1r1}
    \Vertex[x=3,y=-1.0, size=.5, color=black, style=solid, label=$r_2$, position=right]{A1r2}
    \Vertex[x=3,y=-2.0, size=.5, color=black, style=solid, label=$r_3$, position=right]{A1r3}
    \Vertex[x=3,y=-3.0, size=.5, color=white, style=dashed, label=$r_4$, position=right]{A1r4}
    \Vertex[x=3,y=-4.0, size=.5, color=black, style=solid, label=$r_5$, position=right]{A1r5}
    \Vertex[x=3,y=-5.0, size=.5, color=white, style=dashed, label=$r_6$, position=right]{A1r6}
    \Vertex[x=3,y=-6.0, size=.5, color=black, style=solid, label=$r_7$, position=right]{A1r7}

    \Edge[style=dashed](A1b1)(A1r2)
    \Edge[style=dashed](A1b2)(A1r2)
    \Edge[style=dashed](A1b0)(A1r3)
    \Edge[style=dashed](A1c0)(A1r4)
    \Edge[style=dashed](A1c1)(A1r4)
    \Edge[style=dashed](A1a1)(A1r5)
    \Edge[style=dashed](A1a0)(A1r6)

    \Edge[style=solid](A1a0)(A1r1)
    \Edge[style=solid](A1a1)(A1r1)

    \Edge[style={line width=2pt}](A1a2)(A1r1)
    \Edge[style={line width=2pt}](A1b0)(A1r2)
    \Edge[style={line width=2pt}](A1a0)(A1r3)
    \Edge[style={line width=2pt}](A1c0)(A1r5)
    \Edge[style={line width=2pt}](A1a1)(A1r7)
\end{tikzpicture}

{\small Case 1: algorithm}
\end{minipage}
&
\quad \begin{minipage}{0.24\textwidth}
\centering
\begin{tikzpicture}
    \Vertex[x=0,y=0, size=.5, color=black, style=solid, label=$a_0$, position=left]{O1a0}
    \Vertex[x=0,y=-1.0, size=.5, color=black, style=solid, label=$a_1$, position=left]{O1a1}
    \Vertex[x=0,y=-2.0, size=.5, color=black, style=solid, label=$a_2$, position=left]{O1a2}
    \Vertex[x=0,y=-3.0, size=.5, color=black, style=solid, label=$b_0$, position=left]{O1b0}
    \Vertex[x=0,y=-4.0, size=.5, color=black, style=solid, label=$b_1$, position=left]{O1b1}
    \Vertex[x=0,y=-5.0, size=.5, color=white, style=dashed, label=$b_2$, position=left]{O1b2}
    \Vertex[x=0,y=-6.0, size=.5, color=black, style=solid, label=$c_0$, position=left]{O1c0}
    \Vertex[x=0,y=-7.0, size=.5, color=black, style=solid, label=$c_1$, position=left]{O1c1}

    \Vertex[x=3,y=0, size=.5, color=black, style=solid, label=$r_1$, position=right]{O1r1}
    \Vertex[x=3,y=-1.0, size=.5, color=black, style=solid, label=$r_2$, position=right]{O1r2}
    \Vertex[x=3,y=-2.0, size=.5, color=black, style=solid, label=$r_3$, position=right]{O1r3}
    \Vertex[x=3,y=-3.0, size=.5, color=black, style=solid, label=$r_4$, position=right]{O1r4}
    \Vertex[x=3,y=-4.0, size=.5, color=black, style=solid, label=$r_5$, position=right]{O1r5}
    \Vertex[x=3,y=-5.0, size=.5, color=black, style=solid, label=$r_6$, position=right]{O1r6}
    \Vertex[x=3,y=-6.0, size=.5, color=black, style=solid, label=$r_7$, position=right]{O1r7}

    \Edge[style=dashed](O1a0)(O1r1)
    \Edge[style=dashed](O1a1)(O1r1)
    \Edge[style=dashed](O1b0)(O1r2)
    \Edge[style=dashed](O1b2)(O1r2)
    \Edge[style=dashed](O1a0)(O1r3)
    \Edge[style=dashed](O1c0)(O1r4)
    \Edge[style=dashed](O1a1)(O1r5)

    \Edge[style={line width=2pt}](O1a2)(O1r1)
    \Edge[style={line width=2pt}](O1b1)(O1r2)
    \Edge[style={line width=2pt}](O1b0)(O1r3)
    \Edge[style={line width=2pt}](O1c1)(O1r4)
    \Edge[style={line width=2pt}](O1c0)(O1r5)
    \Edge[style={line width=2pt}](O1a0)(O1r6)
    \Edge[style={line width=2pt}](O1a1)(O1r7)
\end{tikzpicture}

{\small Case 1: optimum}
\end{minipage}
&\qquad \quad 
\begin{minipage}{0.24\textwidth}
\centering
\begin{tikzpicture}
    \Vertex[x=0,y=0, size=.5, color=black, style=solid, label=$a_0$, position=left]{A2a0}
    \Vertex[x=0,y=-1.0, size=.5, color=black, style=solid, label=$a_1$, position=left]{A2a1}
    \Vertex[x=0,y=-2.0, size=.5, color=white, style=dashed, label=$a_2$, position=left]{A2a2}
    \Vertex[x=0,y=-3.0, size=.5, color=black, style=solid, label=$b_0$, position=left]{A2b0}
    \Vertex[x=0,y=-4.0, size=.5, color=white, style=dashed, label=$b_1$, position=left]{A2b1}
    \Vertex[x=0,y=-5.0, size=.5, color=white, style=dashed, label=$b_2$, position=left]{A2b2}
    \Vertex[x=0,y=-6.0, size=.5, color=black, style=solid, label=$c_0$, position=left]{A2c0}
    \Vertex[x=0,y=-7.0, size=.5, color=black, style=solid, label=$c_1$, position=left]{A2c1}

    \Vertex[x=3,y=0, size=.5, color=black, style=solid, label=$r_1$, position=right]{A2r1}
    \Vertex[x=3,y=-1.0, size=.5, color=black, style=solid, label=$r_2$, position=right]{A2r2}
    \Vertex[x=3,y=-2.0, size=.5, color=black, style=solid, label=$r_3$, position=right]{A2r3}
    \Vertex[x=3,y=-3.0, size=.5, color=black, style=solid, label=$r_4$, position=right]{A2r4}
    \Vertex[x=3,y=-4.0, size=.5, color=black, style=solid, label=$r_5$, position=right]{A2r5}
    \Vertex[x=3,y=-5.0, size=.5, color=white, style=dashed, label=$r_6$, position=right]{A2r6}
    \Vertex[x=3,y=-6.0, size=.5, color=white, style=dashed, label=$r_7$, position=right]{A2r7}

    \Edge[style=dashed](A2a2)(A2r1)
    \Edge[style=dashed](A2b1)(A2r2)
    \Edge[style=dashed](A2b2)(A2r2)
    \Edge[style=dashed](A2b0)(A2r3)
    \Edge[style=dashed](A2c0)(A2r4)
    \Edge[style=dashed](A2a1)(A2r5)
    \Edge[style=dashed](A2a0)(A2r6)
    \Edge[style=dashed](A2c0)(A2r7)

    \Edge[style=solid](A2a0)(A2r1)
    \Edge[style=solid](A2c0)(A2r4)

    \Edge[style={line width=2pt}](A2a1)(A2r1)
    \Edge[style={line width=2pt}](A2b0)(A2r2)
    \Edge[style={line width=2pt}](A2a0)(A2r3)
    \Edge[style={line width=2pt}](A2c1)(A2r4)
    \Edge[style={line width=2pt}](A2c0)(A2r5)
\end{tikzpicture}

{\small Case 2b: algorithm}
\end{minipage}
&\quad
\begin{minipage}{0.24\textwidth}
\centering
\begin{tikzpicture}
    \Vertex[x=0,y=0, size=.5, color=black, style=solid, label=$a_0$, position=left]{O2a0}
    \Vertex[x=0,y=-1.0, size=.5, color=black, style=solid, label=$a_1$, position=left]{O2a1}
    \Vertex[x=0,y=-2.0, size=.5, color=black, style=solid, label=$a_2$, position=left]{O2a2}
    \Vertex[x=0,y=-3.0, size=.5, color=black, style=solid, label=$b_0$, position=left]{O2b0}
    \Vertex[x=0,y=-4.0, size=.5, color=black, style=solid, label=$b_1$, position=left]{O2b1}
    \Vertex[x=0,y=-5.0, size=.5, color=white, style=dashed, label=$b_2$, position=left]{O2b2}
    \Vertex[x=0,y=-6.0, size=.5, color=black, style=solid, label=$c_0$, position=left]{O2c0}
    \Vertex[x=0,y=-7.0, size=.5, color=black, style=solid, label=$c_1$, position=left]{O2c1}

    \Vertex[x=3,y=0, size=.5, color=black, style=solid, label=$r_1$, position=right]{O2r1}
    \Vertex[x=3,y=-1.0, size=.5, color=black, style=solid, label=$r_2$, position=right]{O2r2}
    \Vertex[x=3,y=-2.0, size=.5, color=black, style=solid, label=$r_3$, position=right]{O2r3}
    \Vertex[x=3,y=-3.0, size=.5, color=black, style=solid, label=$r_4$, position=right]{O2r4}
    \Vertex[x=3,y=-4.0, size=.5, color=black, style=solid, label=$r_5$, position=right]{O2r5}
    \Vertex[x=3,y=-5.0, size=.5, color=black, style=solid, label=$r_6$, position=right]{O2r6}
    \Vertex[x=3,y=-6.0, size=.5, color=black, style=solid, label=$r_7$, position=right]{O2r7}

    \Edge[style=dashed](O2a0)(O2r1)
    \Edge[style=dashed](O2a1)(O2r1)
    \Edge[style=dashed](O2b0)(O2r2)
    \Edge[style=dashed](O2b2)(O2r2)
    \Edge[style=dashed](O2a0)(O2r3)
    \Edge[style=dashed](O2c0)(O2r4)
    \Edge[style=dashed](O2c0)(O2r5)

    \Edge[style={line width=2pt}](O2a2)(O2r1)
    \Edge[style={line width=2pt}](O2b1)(O2r2)
    \Edge[style={line width=2pt}](O2b0)(O2r3)
    \Edge[style={line width=2pt}](O2c1)(O2r4)
    \Edge[style={line width=2pt}](O2a1)(O2r5)
    \Edge[style={line width=2pt}](O2a0)(O2r6)
    \Edge[style={line width=2pt}](O2c0)(O2r7)
\end{tikzpicture}

{\small Case 2b: optimum}
\end{minipage}
\end{tabular}
}
\caption{Two representative branches in Proposition~\ref{prop:omp5-unit-offline-bad}.
Each pair shows the algorithm's final matching and an optimum matching of size
$7$.}
\label{fig:omp5-unit-offline-bad}
\end{figure} 
\subsection{Necessity of prioritizing direct matches}\label{subsec:app-prioritize-direct}

This example shows that prioritizing direct matches is necessary in our discussion of variants of \spgreedy.

\begin{proposition}\label{prop:direct-first-needed}
There is an $\spgreedy$ algorithm for $\omp_3(1,1)$ that does not prioritize
length-$1$ augmentations and has competitive ratio at most $1/2$. In particular,
the direct-first rule cannot be omitted from a guarantee of $3/5$.
\end{proposition}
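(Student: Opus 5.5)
The plan is to exhibit a concrete \spgreedy rule that is \emph{three-path-first} and then defeat it with a four-arrival instance. The rule is: upon the arrival of $j$, if some feasible augmenting $3$-path starting at $j$ exists, augment along one of them, chosen by an arbitrary fixed deterministic tie-breaking rule. Otherwise, if $j$ has a free neighbor, match $j$ directly to one, again by a fixed rule. Otherwise leave $j$ unmatched. This algorithm augments whenever some feasible path of length $1$ or $3$ exists, so it is \spgreedy in the sense of the preliminaries. It deliberately inverts the direct-first priority of \lcp.

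The instance uses offline vertices $a_0,a_1,b,c$, and I will write the augmenting paths with the paper's notation $\apath{\cdot}{\cdot}{\cdot}{\cdot}$. The arrivals are as follows.
\begin{enumerate}
\item The vertex $r_1$ arrives with $N(r_1)=\{a_0,a_1,c\}$. No $3$-path exists, so $r_1$ is matched directly. By symmetry among its neighbors, we rename its partner as $a_0$.
\item The vertex $r_2$ arrives with $N(r_2)=\{a_0,b\}$. The direct match to $b$ is available, but feasible $3$-paths $\apath{r_2}{a_0}{r_1}{a_1}$ and $\apath{r_2}{a_0}{r_1}{c}$ also exist. The algorithm therefore takes one of them, and by symmetry between $a_1$ and $c$ (renaming if necessary) we may assume it uses $\apath{r_2}{a_0}{r_1}{a_1}$. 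After this step $a_0$ has used its offline budget $s=1$ and $r_1$ has used its online budget $t=1$, so both are exhausted.
\item The vertex $r_3$ arrives with $N(r_3)=\{a_0\}$. It cannot be matched: $a_0$ is matched, and any $3$-path would have to reassign the exhausted vertex $a_0$.
\item The vertex $r_4$ arrives with $N(r_4)=\{a_1\}$. It cannot be matched either: $a_1$ is matched to $r_1$, and any $3$-path would have to reassign the exhausted vertex $r_1$. This is exactly the situation described by the middle-edge lemma, Lemma~\ref{lemma:middle}, since both matched middle edges are frozen.
\end{enumerate}

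Counting, the algorithm ends with the matching $\{r_2a_0,r_1a_1\}$, of size $2$. On the other hand, $\{r_3a_0,r_4a_1,r_1c,r_2b\}$ is a matching of size $4$ in the final graph, and every one of these edges has been revealed. Hence the ratio on this instance is $2/4=1/2$, which proves the proposition. Comparing with Theorem~\ref{thm:lcp-ratio} for $t=1$, this is strictly worse than the guarantee $3/5$ of \lcp.

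The only delicate point is the symmetry argument. The adversary must be able to relabel the vertices after each choice of the fixed deterministic rule. This works because, at each relabeling step, the vertices being interchanged have identical revealed neighborhoods and are in the same matched or free state at that moment, so the algorithm cannot distinguish them.
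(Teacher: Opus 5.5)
Your proposal is correct and is essentially the paper's own construction: the same four-arrival instance, with your $c$ and $b$ playing the roles of the paper's $a_2$ and $c$, where taking the $3$-path at $r_2$ exhausts $a_0$ and $r_1$, so $r_3$ and $r_4$ are blocked and the algorithm gets $2$ against an optimum of $4$. Your global three-path-first rule is a slightly cleaner way to specify the algorithm than the paper's instance-specific tie-breaking. For the relabeling you only need that the adaptive adversary names vertices after observing the deterministic choice, which is consistent because the relabeled vertices have identical revealed neighborhoods; the claim that the algorithm ``cannot distinguish'' them is not needed.
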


\begin{proof}
Consider a deterministic $\spgreedy$ algorithm whose fixed tie-breaking rule
chooses a length-$3$ augmentation whenever the instance below offers both a
length-$1$ and a length-$3$ augmentation. We show that this deterministic rule
can give ratio $1/2$.

Let the offline vertices be $a_0,a_1,a_2,c$. Vertex $r_1$ arrives with
$N(r_1)=\{a_0,a_1,a_2\}$. Without loss of generality, assume that the algorithm
matches $r_1$ to $a_0$.

Vertex $r_2$ arrives with $N(r_2)=\{a_0,c\}$. There is a direct match to the free
vertex $c$. There is also a length-$3$ augmentation $\apath{r_2}{a_0}{r_1}{a_1}$.
By the chosen tie-breaking rule, the algorithm uses this length-$3$ augmentation. The matching becomes
$\{r_2a_0, r_1a_1\}$. Now $a_0$ has exhausted its offline budget, and the online
vertex $r_1$ has exhausted its online budget $t=1$.

The adversary reveals vertices $r_3$ and $r_4$ with $N(r_3)=\{a_0\}$ and $N(r_4)=\{a_1\}$.
Vertex $r_3$ cannot be matched: the path $\apath{r_3}{a_0}{r_2}{c}$ would reassign $a_0$ for a
second time. Vertex $r_4$ cannot be matched either: the path $\apath{r_4}{a_1}{r_1}{a_2}$ would
reassign the online vertex $r_1$ for a second time. Thus the algorithm finishes
with matching size $2$.

The optimum matching has size $4$: $\{r_3a_0, r_4a_1, r_1a_2, r_2c\}$.
Hence the ratio is $1/2<3/5$.
\end{proof}

\begin{figure}[ht]
\centering
\scalebox{0.72}{
\begin{minipage}{0.49\textwidth}
\centering
\begin{tikzpicture}
    \Vertex[x=0,y=0, size=.5, color=black, style=solid, label=$a_0$, position=left]{A-p3a0}
    \Vertex[x=0,y=-1.2, size=.5, color=black, style=solid, label=$a_1$, position=left]{A-p3a1}
    \Vertex[x=0,y=-2.4, size=.5, color=white, style=dashed, label=$a_2$, position=left]{A-p3a2}
    \Vertex[x=0,y=-3.6, size=.5, color=white, style=dashed, label=$c$, position=left]{A-p3c}

    \Vertex[x=3,y=0, size=.5, color=black, style=solid, label=$r_1$, position=right]{A-p3u1}
    \Vertex[x=3,y=-1.2, size=.5, color=black, style=solid, label=$r_2$, position=right]{A-p3u2}
    \Vertex[x=3,y=-2.4, size=.5, color=white, style=dashed, label=$r_3$, position=right]{A-p3u3}
    \Vertex[x=3,y=-3.6, size=.5, color=white, style=dashed, label=$r_4$, position=right]{A-p3u4}

    \Edge[style=dashed](A-p3a2)(A-p3u1)
    \Edge[style=dashed](A-p3c)(A-p3u2)
    \Edge[style=dashed](A-p3a0)(A-p3u3)
    \Edge[style=dashed](A-p3a1)(A-p3u4)

    \Edge[style=solid](A-p3a0)(A-p3u1)

    \Edge[style={line width=2pt}](A-p3a1)(A-p3u1)
    \Edge[style={line width=2pt}](A-p3a0)(A-p3u2)
\end{tikzpicture}

{\small Algorithm's final matching}
\end{minipage}
\hfill
\begin{minipage}{0.49\textwidth}
\centering
\begin{tikzpicture}
    \Vertex[x=0,y=0, size=.5, color=black, style=solid, label=$a_0$, position=left]{B-p3a0}
    \Vertex[x=0,y=-1.2, size=.5, color=black, style=solid, label=$a_1$, position=left]{B-p3a1}
    \Vertex[x=0,y=-2.4, size=.5, color=black, style=solid, label=$a_2$, position=left]{B-p3a2}
    \Vertex[x=0,y=-3.6, size=.5, color=black, style=solid, label=$c$, position=left]{B-p3c}

    \Vertex[x=3,y=0, size=.5, color=black, style=solid, label=$r_1$, position=right]{B-p3u1}
    \Vertex[x=3,y=-1.2, size=.5, color=black, style=solid, label=$r_2$, position=right]{B-p3u2}
    \Vertex[x=3,y=-2.4, size=.5, color=black, style=solid, label=$r_3$, position=right]{B-p3u3}
    \Vertex[x=3,y=-3.6, size=.5, color=black, style=solid, label=$r_4$, position=right]{B-p3u4}

    \Edge[style=dashed](B-p3a0)(B-p3u1)
    \Edge[style=dashed](B-p3a1)(B-p3u1)
    \Edge[style=dashed](B-p3a0)(B-p3u2)

    \Edge[style={line width=2pt}](B-p3a2)(B-p3u1)
    \Edge[style={line width=2pt}](B-p3c)(B-p3u2)
    \Edge[style={line width=2pt}](B-p3a0)(B-p3u3)
    \Edge[style={line width=2pt}](B-p3a1)(B-p3u4)
\end{tikzpicture}

{\small An optimum matching}
\end{minipage}
}
\caption{The final graph in Proposition~\ref{prop:direct-first-needed}. In the
left panel, thick solid edges are the algorithm's final matching, the thin solid
edge entered the history graph but is not in the final matching, and dashed
edges were not used by the algorithm. In the right panel, thick solid edges form
an optimum matching, and all other displayed edges are dashed.}
\label{fig:direct-first-needed}
\end{figure}

\subsection{\texorpdfstring{The case $\omp_3(1,1)$}{The case OMP3(1,1)}}\label{subsec:lcp-one-one}

This subsection gives a self-contained proof of the case $t=1$ of Theorem~\ref{thm:lcp-ratio}.
The proof is not used in the main text; it is included as a concrete specialization
of the argument used to prove Theorem~\ref{thm:lcp-ratio}.

For $t=1$, the parameters used in the proof of Theorem~\ref{thm:lcp-ratio} are
\[
    \lambda=\frac23,\qquad
    a=(a_0,a_1)=\left(\frac23,1\right),
    \qquad
    b=(b_0,b_1)=\left(1,\frac13\right).
\]
The proof below uses the corresponding specialization: a direct match assigns
$(\alpha_i,\beta_j)=(a_0,b_0)$, a $3$-path update assigns
$(\alpha_i,\alpha_x,\beta_j,\beta_y)=(a_1,1,b_1,1)$, and the postprocessing
step changes $(\alpha_i,\beta_j)=(a_0,b_0)$ to $(1,a_0)$ on the affected
matched edge. We give the feasibility check directly.

When reassignment budgets are finite, the final matching produced by a \spgreedy
algorithm may still contain augmenting $3$-paths, but these paths can be
\emph{infeasible} because their middle edge is frozen (executing the
augmentation would violate a budget). Thus, unlike the case $t=\infty$, we
cannot argue dual feasibility by ruling out all augmenting $3$-paths.
Instead, we use a different dual assignment together with an additional
structural property of augmenting $3$-paths.

A key observation for the case $s=t=1$ is as follows. Suppose \lcp matches an
arriving vertex $j\in R$ by augmenting along a $3$-path $\apath{j}{x}{y}{i}$, where $xy$ is an
edge of the current matching and $i\in L$ is free. The augmentation removes $xy$
and adds $xj$ and $iy$, so both internal vertices $x$ and $y$ change partners
once. Since $s=t=1$, this uses up their entire reassignment budget: $x$ and $y$
become exhausted, and the new incident edges (namely $jx$ and $yi$) are frozen.

Moreover, since $x$ and $y$ were not exhausted before this augmentation, neither
has been reassigned earlier. We claim that in this case the edge $xy$ must have
entered the matching when $y$ first became matched. (Indeed, if $xy$ had entered
the matching via an earlier $3$-path augmentation, then either $x$ or $y$ would
already have been reassigned.)

\begin{proposition}\label{prop:lcp11}

    The \lcp algorithm is $\frac{3}{5}=0.60$-competitive for $\omp_3(1,1)$.
\end{proposition}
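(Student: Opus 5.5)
We apply the primal--dual template (Lemma~\ref{lemma:pd-template}) with $\rho=3/5$, so every increase of the matching must raise the dual objective by exactly $5/3$. The dual updates are the $t=1$ specialization described above.
\begin{itemize}
\item \emph{Direct match} of $j$ to $i$: set $(\alpha_i,\beta_j)=(2/3,1)$, an increase of $5/3$.
\item \emph{$3$-path} $\apath{j}{x}{y}{i}$: set $(\alpha_i,\alpha_x,\beta_j,\beta_y)=(1,1,1/3,1)$.
\item \emph{Postprocessing:} for each final matched edge $iy$ with neither endpoint exhausted and $y$ having no free neighbor, flip $(\alpha_i,\beta_y)$ from $(2/3,1)$ to $(1,2/3)$.
\end{itemize}

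For the $3$-path step we use the key observation stated just before the proposition. The edge $xy$ entered the matching when $y$ was matched directly, so before the update $(\alpha_x,\beta_y)=(2/3,1)$. The increase is then $(1-2/3)+1+1/3=5/3$. The flip preserves the sum on the edge. Hence the dual objective is at most $\tfrac53|M|$, and it remains to prove feasibility $\alpha_i+\beta_j\ge1$ for every edge $ij\in E$.

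Every final value lies in $\{0,2/3,1\}$ for $\alpha$ and $\{0,1/3,2/3,1\}$ for $\beta$, so we only need to exclude the pairs with sum below $1$.
\begin{itemize}
\item $(0,0)$: $i$ and $j$ are both free at the end. Since offline vertices never become free again, $i$ was free when $j$ arrived, and \lcp would have matched $j$ directly. Contradiction.
\item $(0,1/3)$: $\beta_j=1/3$ means $j$ arrived via a $3$-path, so by the direct-first rule $j$ had no free neighbor at arrival. But $i$ is free at the end, hence was free then. Contradiction.
\item $(0,2/3)$: this value is produced only by postprocessing, which requires $j$ to have no free neighbor at the end. This contradicts that $i$ is free.
\item $(2/3,0)$: $i$ is matched at the end to some $y$ via an unflipped direct edge, and neither $i$ nor $y$ is exhausted. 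By the middle-edge lemma (Lemma~\ref{lemma:middle}), the free vertex $j$ adjacent to $i$ forces $y$ to have no free neighbor, so the edge would have been flipped. Contradiction.
\item $(2/3,1/3)$: this is the delicate case, handled below.
\end{itemize}

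In the case $(2/3,1/3)$, $iy$ is a final direct edge, never reassigned and not flipped, so $y$ has a final free neighbor $f$. The vertex $j$ arrived via a $3$-path, hence had no free neighbor at arrival, so $i$ was already matched then. Because $i$ is never reassigned and is matched to $y$ at the end, $iy$ was already the matched edge when $j$ arrived. But then $\apath{j}{i}{y}{f}$ was feasible at that time: neither $i$ nor $y$ was exhausted, and $f$ was free. So \lcp would have had a feasible $3$-path with middle cost $0$, which is consistent with its choice and does not yet give a contradiction.

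The contradiction comes from the fact that $t=1$ forces every $3$-path chosen by \lcp to have cost $0$. So we compare directly: $j$ took some $3$-path, every candidate had cost $0$, and the dual value $1/3$ was assigned regardless of which one was chosen. The pair $(2/3,1/3)$ therefore really can occur, and feasibility must come from a different source. This is why Claim~\ref{clm:Qprime-bound} with $Q=0$ gives $Q'\le1$ and the general proof pairs $a_0$ with $b_1$ through the equality $a_0+b_1=1$: indeed $2/3+1/3=1$. So the pair $(2/3,1/3)$ is feasible with equality, and no contradiction is needed.

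Once the pair $(2/3,1/3)$ is recognized as tight rather than impossible, every remaining pair has sum at least $1$. The dual is therefore feasible, and Lemma~\ref{lemma:pd-template} gives the ratio $3/5$.

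The main obstacle I expect is the case $\alpha_i=2/3$ with $j$ free. That case depends on the middle-edge lemma together with the exact postprocessing condition, so it must be checked that any non-frozen unflipped edge indeed has a free neighbor at its root.
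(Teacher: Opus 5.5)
Your proof is correct and follows the paper's argument: the same $t=1$ dual updates, the same postprocessing flip, and the same four excluded pairs $(0,0)$, $(0,\tfrac13)$, $(0,\tfrac23)$, $(\tfrac23,0)$, with the middle-edge lemma handling the last one. The only blemish is your ``delicate'' case $(\tfrac23,\tfrac13)$, which sums to exactly $1$ and so was never among the pairs with sum below $1$; the paragraph that tries and then abandons a contradiction via \lcp's cost rule can simply be deleted.
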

\begin{proof}
Fix an instance $(G,\pi)$ where $G=(L,R,E)$, and let $\mathcal A$ denote the
\lcp algorithm for $\omp_3(1,1)$. Let $M$ be the final matching produced by
$\mathcal A$. We construct, alongside the
execution of $\mathcal A$, a dual solution $(\alpha,\beta)$ for $D_G$, starting
from the all-zero solution; the construction is given in
Algorithm~\ref{alg:dual-lcp-11}.

The dual updates are designed so that each successful matching step increases
the dual objective by at most $5/3$. The final postprocessing step only
redistributes dual weight within edges of $M$ (preserving the objective value)
to guarantee feasibility. The dual updates and the postprocessing step are also
shown in Figure~\ref{fig:lcp11-updates}
(\subref{fig:lcp11-direct}--\subref{fig:lcp11-post}).

\begin{figure}[ht]
\centering

\begin{minipage}[b]{0.48\textwidth}
    \centering
    \begin{subfigure}[b]{\textwidth}
        \centering
        \scalebox{0.75}{
        \begin{tikzpicture}
            \useasboundingbox (-2, -0.4) rectangle (4, 5.9);

            \Vertex[x=0,y=4.0, style={color=white}, label={$\alpha_i=0$}, position=left]{yi0_top}
            \Vertex[x=2,y=4.0, style={color=white}, label={$\beta_j=0$}, position=right]{yj0_top}
            \Vertex[x=0,y=5.5, style={color=white}, label={$\alpha_x=\frac23$}, position=left]{yx0_top}
            \Vertex[x=2,y=5.5, style={color=white}, label={$\beta_y=1$}, position=right]{yy0_top}

            \Vertex[x=0,y=4.0, size=.5, color=white, style=solid, label=$i$, position=above]{i0_top}
            \Vertex[x=2,y=4.0, size=.5, color=white, style=solid, label=$j$, position=above]{j0_top}
            \Vertex[x=0,y=5.5, size=.5, color=black, style=solid, label=$x$, position=above]{x0_top}
            \Vertex[x=2,y=5.5, size=.5, color=black, style=solid, label=$y$, position=above]{y0_top}

            \Edge[style=solid](x0_top)(y0_top)
            \Edge[style=dashed](x0_top)(j0_top)
            \Edge[style=dashed](y0_top)(i0_top)

            \draw[->, >=stealth, line width=1.5pt] (1, 3.5) -- (1, 2.0);

            \Vertex[x=0,y=0, style={color=white}, label={$\alpha_i=1$}, position=left]{yi1_bot}
            \Vertex[x=2,y=0, style={color=white}, label={$\beta_j=\frac13$}, position=right]{yj1_bot}
            \Vertex[x=0,y=1.5, style={color=white}, label={$\alpha_x=1$}, position=left]{yx1_bot}
            \Vertex[x=2,y=1.5, style={color=white}, label={$\beta_y=1$}, position=right]{yy1_bot}

            \Vertex[x=0,y=0, size=.5, color=black, style=solid, label=$i$, position=above]{i1_bot}
            \Vertex[x=2,y=0, size=.5, color=black, style=solid, label=$j$, position=above]{j1_bot}
            \Vertex[x=0,y=1.5, size=.5, color=black, style=solid, label=$x$, position=above]{x1_bot}
            \Vertex[x=2,y=1.5, size=.5, color=black, style=solid, label=$y$, position=above]{y1_bot}

            \Edge[style=solid](y1_bot)(i1_bot)
            \Edge[style=solid](x1_bot)(j1_bot)
            \Edge[style=dashed](x1_bot)(y1_bot)

        \end{tikzpicture}
        }
        \caption{$3$-path update: $j$ arrives and is matched by augmenting along $\apath{j}{x}{y}{i}$.}
        \label{fig:lcp11-aug3}
    \end{subfigure}
\end{minipage}%
\hfill
\begin{minipage}[b]{0.48\textwidth}
    \centering
    \begin{subfigure}[b]{\textwidth}
        \centering
        \scalebox{0.75}{
        \begin{tikzpicture}
            \useasboundingbox (-2, -0.4) rectangle (4, 2.4);

            \Vertex[x=0, y=2.0, style={color=white}, label={$\alpha_i=0$}, position=left]{yi0_top}
            \Vertex[x=2, y=2.0, style={color=white}, label={$\beta_j=0$}, position=right]{yj0_top}
            \Vertex[x=0, y=2.0, size=.5, color=white, style=solid, label=$i$, position=above]{i0_top}
            \Vertex[x=2, y=2.0, size=.5, color=white, style=solid, label=$j$, position=above]{j0_top}
            \Edge[style=dashed](i0_top)(j0_top)

            \draw[->, >=stealth, line width=1.5pt] (1, 1.6) -- (1, 0.4);

            \Vertex[x=0, y=0, style={color=white}, label={$\alpha_i=\frac23$}, position=left]{yi1_bot}
            \Vertex[x=2, y=0, style={color=white}, label={$\beta_j=1$}, position=right]{yj1_bot}
            \Vertex[x=0, y=0, size=.5, color=black, style=solid, label=$i$, position=above]{i1_bot}
            \Vertex[x=2, y=0, size=.5, color=black, style=solid, label=$j$, position=above]{j1_bot}
            \Edge[style=solid](i1_bot)(j1_bot)
        \end{tikzpicture}
        }
        \caption{Direct match update: $j$ arrives and is matched directly to $i$.}
        \label{fig:lcp11-direct}
    \end{subfigure}

    \vspace{0.5cm}

    \begin{subfigure}[b]{\textwidth}
        \centering
        \scalebox{0.75}{
        \begin{tikzpicture}
            \useasboundingbox (-2, -0.4) rectangle (4, 2.4);

            \Vertex[x=0,y=2.0, style={color=white}, label={$\alpha_i=\frac23$}, position=left]{yi0_top}
            \Vertex[x=2,y=2.0, style={color=white}, label={$\beta_j=1$}, position=right]{yj0_top}
            \Vertex[x=0,y=2.0, size=.5, color=black, style=solid, label=$i$, position=above]{i0_top}
            \Vertex[x=2,y=2.0, size=.5, color=black, style=solid, label=$j$, position=above]{j0_top}
            \Edge[style=solid](i0_top)(j0_top)

            \draw[->, >=stealth, line width=1.5pt] (1, 1.6) -- (1, 0.4);

            \Vertex[x=0,y=0, style={color=white}, label={$\alpha_i=1$}, position=left]{yi1_bot}
            \Vertex[x=2,y=0, style={color=white}, label={$\beta_j=\frac23$}, position=right]{yj1_bot}
            \Vertex[x=0,y=0, size=.5, color=black, style=solid, label=$i$, position=above]{i1_bot}
            \Vertex[x=2,y=0, size=.5, color=black, style=solid, label=$j$, position=above]{j1_bot}
            \Edge[style=solid](i1_bot)(j1_bot)
        \end{tikzpicture}
        }
        \caption{Postprocessing update on a matched edge $ij\in M$: $(\alpha_i,\beta_j)$ is changed from $(\frac23,1)$ to $(1,\frac23)$.}
        \label{fig:lcp11-post}
    \end{subfigure}
\end{minipage}

\caption{Dual updates used in Algorithm~\ref{alg:dual-lcp-11}.}
\label{fig:lcp11-updates}
\end{figure}

 \begin{algorithm}[ht!]
\SetAlgoLined
\DontPrintSemicolon
\KwIn{Instance $(G,\pi)$, algorithm $\mathcal A$ (\lcp for $\omp_3(1,1)$)}
\KwOut{Dual vector $(\alpha_i,\beta_j)_{i\in L,\,j\in R}$}
\lForAll{$i\in L, j\in R$}{Set $\alpha_i\gets 0$, $\beta_j\gets 0$}
\ForAll(\tcp*[f]{dual update steps}){$j\in R$ in arrival order, while running $\mathcal A$}{
  \lIf{$\mathcal A$ matches $j$ directly to some $i\in L$}{
    Set $\alpha_i\gets \frac23$ and $\beta_j\gets 1$
  }
  \lElseIf{$\mathcal A$ matches $j$ using a feasible $3$-path $\apath{j}{x}{y}{i}$}{
    Set $\alpha_i\gets 1$, $\alpha_x\gets 1$, $\beta_j\gets \frac13$ and $\beta_y\gets 1$
  }
}
Let $M$ be the final matching constructed by $\mathcal A$\\
\ForAll(\tcp*[f]{postprocessing step}){$ij\in M$ with $j\in R$}{
  \If{neither $j$ nor $i$ is exhausted, and $j$ has no free neighbor in $L$ at the end}{
    Set $\alpha_i\gets 1$ and $\beta_j\gets \frac23$\;
  }
}
\caption{$\dual$ for $\omp_3(1,1)$}
\label{alg:dual-lcp-11}
\end{algorithm}

We first verify the change in the dual objective at every step.
Each time $\mathcal A$ matches an arrival $j$ directly, the dual objective
increases by exactly $5/3$: $j$ is new so $\beta_j$ increases from $0$ to $1$,
and the chosen neighbor $i$ is free so $\alpha_i$ increases from $0$ to $2/3$.

Each time $\mathcal A$ matches an arrival $j$ using a $3$-path $\apath{j}{x}{y}{i}$, the dual
objective again increases by exactly $5/3$: $j$ is new so $\beta_j$ increases
from $0$ to $1/3$; the endpoint $i\in L$ is free so $\alpha_i$ increases from
$0$ to $1$; and $\alpha_x$ increases by $1/3$ (by the observation before the
proposition, $x$ was previously matched directly with $y$, so $\alpha_x=2/3$
before this update).
No dual variable ever decreases during the construction.

The postprocessing step only redistributes $1/3$ within a matched edge $ij\in M$,
changing $(\alpha_i,\beta_j)$ from $(2/3,1)$ to $(1,2/3)$, and hence it does not
change the total dual objective value.
To conclude via Lemma~\ref{lemma:pd-template} with $\rho=3/5$ (the reciprocal of
$5/3$), it remains to show that $(\alpha,\beta)$ is feasible for $D_G$, namely
that $\alpha_i+\beta_j\ge 1$ for every edge $ij\in E$.

Fix an arbitrary edge $ij\in E$. We rule out all possibilities with
$\alpha_i+\beta_j<1$.

\begin{itemize}
\item $(\alpha_i,\beta_j)=(0,0)$.
This would mean that both $i$ and $j$ are free in the final matching $M$.
But then the edge $ij$ would be an augmenting $1$-path for $M$, contradicting
that $\mathcal A$ is \spgreedy (a direct match never violates budgets).

\item $(\alpha_i,\beta_j)=(0,\frac23)$.
If $\beta_j = 2/3$, then $j$ was modified during postprocessing, implying it had no free neighbors in $L$ upon termination. However, if $i$ is a neighbor of $j$ with $\alpha_i = 0$, then $i$ must be free in $M$, providing $j$ with a free neighbor and creating a contradiction.

\item $(\alpha_i,\beta_j)=(0,\frac13)$.
If $\beta_j=\frac13$, then $j$ was matched upon arrival using a $3$-path, which
is only possible if $j$ had no free neighbor in $L$ at its arrival time.
However, $\alpha_i=0$ implies that $i$ is free in the final matching $M$.
By the basic properties of the model, offline vertices never become free after being matched; hence $i$ was free when $j$ arrived.
This contradicts that $j$ had no free neighbor at its arrival.

\item $(\alpha_i,\beta_j)=(\frac23,0)$. Since $\alpha_i\neq 0$, $i$ is matched; let
$l\in R$ be such that $il\in M$.
Since $\alpha_i=\frac23$, the edge $il$ entered the matching via a direct match
when $l$ arrived, and $i$ was never reassigned (otherwise a $3$-path update would have
set $\alpha_i$ to $1$). Hence $i$ is not exhausted.
Moreover, the edge $il$ stayed in the matching until the end, which implies
that $l$ was never reassigned; thus $l$ is not exhausted either.
Finally, because $\alpha_i$ remains $\frac23$ after the postprocessing step,
that step did not act on the matched edge $il$; therefore, $l$ has a free
neighbor $k\in L$ at the end.

Since $\beta_j=0$, the vertex $j$ is free in $M$. Thus $\apath{j}{i}{l}{k}$ is an
$M$-augmenting $3$-path with middle edge $il$. By Lemma~\ref{lemma:middle}
(middle-edge lemma), this implies that $il$ must be frozen, contradicting that
neither endpoint of $il$ is exhausted.
\end{itemize}

Thus $\alpha_i+\beta_j\ge 1$ for every edge $ij\in E$, so $(\alpha,\beta)$ is
feasible for $D_G$. Lemma~\ref{lemma:pd-template} yields $|M|\ge \frac35\,\nu(G)$,
proving that $\mathcal A$ is $\frac35$-competitive. \qedhere

\end{proof}

\end{document}